\documentclass[11.5pt,oneside,reqno,english]{amsart}

\usepackage[T1]{fontenc}
\usepackage[latin9]{inputenc}
\usepackage{geometry}
\usepackage{babel}
\usepackage{amsbsy}
\usepackage{amstext}
\usepackage{amsthm}
\usepackage{amssymb}
\usepackage{braket}

\usepackage{amsaddr}

\usepackage{xcolor}

\usepackage{graphicx}

\usepackage[unicode=true,pdfusetitle,
 bookmarks=true,bookmarksnumbered=false,bookmarksopen=false,
 breaklinks=false,pdfborder={0 0 1},backref=false,colorlinks=false]
 {hyperref}

\makeatletter

\numberwithin{figure}{section}

\theoremstyle{plain}
\newtheorem{thm}{\protect\theoremname}[section]

\newtheorem{cor}[thm]{\protect\corollaryname}
\theoremstyle{definition}

\theoremstyle{definition}
\newtheorem{defn}[thm]{\protect\definitionname}
\newtheorem{example}[thm]{\protect\examplename}
\theoremstyle{plain}
\newtheorem{lem}[thm]{\protect\lemmaname}
\theoremstyle{definition}

\theoremstyle{remark}
\newtheorem{rem}[thm]{\protect\remarkname}

\newcommand{\norm}[1]{\left|\left| #1\right|\right|}

\newcommand{\cA}{\mathcal{A}}

\makeatother

\providecommand{\definitionname}{Definition}
\providecommand{\examplename}{Example}
\providecommand{\lemmaname}{Lemma}
\providecommand{\theoremname}{Theorem}
\providecommand{\conjecturename}{Conjecture}
\providecommand{\corollaryname}{Corollary}
\providecommand{\problemname}{Problem}
\providecommand{\remarkname}{Remark}
\providecommand{\propositionname}{Proposition}

\begin{document}

\title{Additive uncertainty relations via pseudospectral methods}

\author{Alexander Cerjan}
\email{awcerja@sandia.gov}
\address{Center for Integrated Nanotechnologies, Sandia National Laboratories, Albuquerque, New Mexico 87185, USA}

\author{Lara Ismert}
\email{ismert@unm.edu}
\address{Department of Mathematics and Statistics, University of New Mexico, Albuquerque, New Mexico 87131, USA}

\author{Terry A.\ Loring}
\email{tloring@unm.edu}
\address{Department of Mathematics and Statistics, University of New Mexico, Albuquerque, New Mexico 87131, USA}

\date{\today}

\keywords{uncertainty, quantum mechanics, pseudospectrum}

\begin{abstract}
Given Hermitian observable $A$, the variance of $A$ is given by the difference of expectations $\langle A^2\rangle - \langle A\rangle^2$, where the expectation of an observable implicitly assumes a choice of state for the underlying system. As a sub-case of a general lower bound for the sum of variances for finitely many arbitrary elements in a $C^*$-algebra, we prove that the sum of the variances for $d$ observables $A_1, \dots , A_d$ is greater than some expression in the expectations $\langle i[A_j,A_k]\rangle$ for $j\neq k$ plus the square of the smallest singular value of the associated localizer. In special cases, this expression is just a root mean square average. Our technique also derives a general containment between the Clifford pseudospectra and quadratic pseudospectra for $d$ observables.
\end{abstract}

\maketitle


\section{Introduction}

Uncertainty results have, in the past, been about lower bounds on
the product of the variances of two Hermitian observables $A$ and $B$. The simplest
of these \cite{Robertson1929uncertainty} being the lower bound
\begin{equation}
\Delta_{\mathbf{v}}^{2}(A)\Delta_{\mathbf{v}}^{2}(B)
\geq
\frac{1}{4}\bigl|\left\langle i\left[A,B\right]\right\rangle _{\mathbf{v}}\bigr|^{2}
\end{equation}
for a vector state $\mathbf{v}$. Progress by Maccone and Pati
\cite{Maccone_2014} and others \cite{chen2015sum_uncertainty,Lian24}
has lead to new lower bounds on uncertainty, where the bound is below the sum, not the product, of variances.  
One of the inequalities by Maccone and Pati stated for a pair $A$ and $B$ of Hermitian observables on a finite-dimensional system is
\begin{equation}
\Delta_{\mathbf{v}}^{2}(A)+\Delta_{\mathbf{v}}^{2}(B)
\geq
\left|\left\langle \left(A\pm iB\right)\mathbf{v},\mathbf{w}\right\rangle \right|^{2}\mp\left\langle i\left[A,B\right]\right\rangle _{\mathbf{v}}\label{eq:Maccone_Pati_ineq},
\end{equation}
where $\mathbf{w}$ must be selected as a unit vector orthogonal to
the system state $\mathbf{v}$.

When working with vector states, we denote by $\left\langle X\right\rangle _{\mathbf{v}}$ the expectation $\left\langle X\mathbf{v},\mathbf{v}\right\rangle $ of an observable $X$ represented concretely on a Hilbert space $\mathcal{H}$ with respect to a unit vector ${\bf v}\in \mathcal{H}$. We will also write
\begin{equation*}
\Delta_{\mathbf{v}}^{2}(X)
:=
\bigl\langle X^{2}\bigr\rangle_{\mathbf{v}}-\bigl\langle X\bigr\rangle_{\mathbf{v}}^{2}
\end{equation*}
to denote the variance of $X$ with respect to the state ${\bf v}$. Last, we will frequently write
\begin{equation*}
\hat{X}
:=
X-\left\langle X\right\rangle _{\mathbf{v}}I
\end{equation*}
so that $\Delta_{\mathbf{v}}^{2}(X)
=\bigl\langle\hat{X}^{2}\bigr\rangle_{\mathbf{v}}.$

The sign in (\ref{eq:Maccone_Pati_ineq}) should be chosen so that
$\mp\left\langle i\left[A,B\right]\mathbf{v},\mathbf{v}\right\rangle $
is positive. The accuracy of this lower bound depends greatly on the
choice of orthogonal state. Indeed for any $\mathbf{v}$, Maccone
and Pati note that
\begin{equation}
\sup_{\mathbf{w}\perp\mathbf{v}}\left|\left\langle \left(A\pm iB\right)\mathbf{v},\mathbf{w}\right\rangle \right|^{2}=\left\Vert \left(A_{\mathbf{v}}\pm iB_{\mathbf{v}}\right)\mathbf{v}\right\Vert ^{2}.
\end{equation}
Since they also showed
\begin{equation}
\Delta_{\mathbf{v}}^{2}(A)+\Delta_{\mathbf{v}}^{2}(B)=\bigl\Vert\bigl(\hat{A}\pm i\hat{B}\bigr)\mathbf{v}\bigr\Vert^{2}\mp\left\langle i\left[A,B\right]\right\rangle _{\mathbf{v}}\label{eq:Maccone_Pati_equality}
\end{equation}
this means that if one can find an appropriate $\mathbf{w}$, the estimate (\ref{eq:Maccone_Pati_ineq}) can be made very tight.

Starting from (\ref{eq:Maccone_Pati_equality}) one can find lower
bounds that do not require choosing an auxiliary vector. Simply the
fact that a norm will be non-negatative allows us to conclude that
\begin{equation*}
\Delta_{\mathbf{v}}^{2}(A)+\Delta_{\mathbf{v}}^{2}(B)
\geq
\left|\left\langle i\left[A,B\right]\right\rangle _{\mathbf{v}}\right|.
\end{equation*}
We will generalize this form of uncertainty relation to any number of observables on a possibly infinite-dimensional system and to arbitrary states, with the complexity
of the the lower bound increasing along with the number of commutators.
For just three Hermitian observables, the result is still elegant:
\begin{equation*}
\Delta^{2}(A)+\Delta^{2}(B)+\Delta^{2}(C)\geq\sqrt{\left|\left\langle i\left[A,B\right]\right\rangle \right|+\left|\left\langle i\left[A,C\right]\right\rangle \right|+\left|\left\langle i\left[B,C\right]\right\rangle \right|}.\label{eq:basic_three_fold_for_pure_state}
\end{equation*}
This inequality has been found before. Indeed there is a version of
this that works for three unbounded operators \cite{Lian24}.  The bounded case, and many related inequalities, can be found in \cite{dodonov2025uncertainty}.

We can do better if we use a non-trivial lower bound on the norm term.
To get this lower bound we will use the pseudospectrum \cite{Trefethen2005Pseudospectra}.
In the case of three of more observables we instead use the Clifford
pseudospectrum. The inequality we will apply is a special case of
the min-max theorem (see \cite[p.~75]{Bhatia1997MatrixAnalysis},
for example). For a matrix $X$, let $s_{\min}(X)$ denote the smallest
singular value of $X$. The inequality we will use is just
\begin{equation*}
\min_{\|\mathbf{v}\|=1}\left\Vert X\mathbf{v}\right\Vert =s_{\min}(X).
\end{equation*}
Since $s_{\min}\left(X^{*}\right)=s_{\min}\left(X\right)$ for any
matrix $X$, the two equalities
\begin{equation*}
\Delta_{\mathbf{v}}^{2}(A)+\Delta_{\mathbf{v}}^{2}(B)
=
\bigl\Vert\bigl(\hat{A}\pm i\hat{B}\bigr)\mathbf{v}\bigr\Vert^{2}+\bigl|\left\langle i\left[A,B\right]\right\rangle _{\mathbf{v}}\bigr|.
\end{equation*}
lead to just one inequality,
\begin{equation}
\Delta_{\mathbf{v}}^{2}(A)+\Delta_{\mathbf{v}}^{2}(B)
\geq
\bigl(s_{\min}\bigl(\hat{A}+i\hat{B}\bigr)\bigr)^{2}+\left|\left\langle i\left[A,B\right]\right\rangle _{\mathbf{v}}\right|.\label{eq:new_inequal_two_hermitian_case}
\end{equation}

The term $s_{\min}\bigl(\bigl(\hat{A}+i\hat{B}\bigr)\mathbf{v}\bigr)$
is zero exactly when $\left\langle A\right\rangle _{\mathbf{v}}+i\left\langle B\right\rangle _{\mathbf{v}}$
is in the spectrum of $A+iB$. If we want to find a vector state that
is well-localized in both $A$ and $B$ then we should only look at
vectors where $\left\langle A\right\rangle _{\mathbf{v}}+i\left\langle B\right\rangle _{\mathbf{v}}$
is at least close to being an eigenvalue of $A+iB$. That is, we want
to look for states where $\left\langle A\right\rangle _{\mathbf{v}}+i\left\langle B\right\rangle _{\mathbf{v}}$
is in the set
\begin{equation}
\Lambda_{\epsilon}(X)
=
\left\{ z\in\mathbb{C} \,|\, s_{\min}(X-zI)<\epsilon
\right\} \label{eq:classical_pseudospectrum}
\end{equation}
for $X=\left\langle A\right\rangle _{\mathbf{v}}+i\left\langle B\right\rangle _{\mathbf{v}}$
and $\epsilon$ small. 

The set in (\ref{eq:classical_pseudospectrum}) is what is classically
known as the pseudospectrum \cite{Trefethen2005Pseudospectra} of
the matrix $X$. Since we think of $A$ and $B$ as two separate inputs
to uncertainty inequalities, we prefer to work with the function of
two real variables. This, we call the \emph{Clifford gap} 
\begin{equation}
\mu_{(x,y)}^{\mathrm{C}}(A,B)=s_{\min}\left(\left(A-xI\right)+i\left(B-yI\right)\right).
\end{equation}
For a normal matrix $N$ the function $s_{\min}\left(N-zI\right)$
grows mostly linearly, as it just measures the distance from $z$
to the spectrum of $N$. For an almost normal matrix, say $X=A+iB$
where $A$ and $B$ are almost commuting Hermitian matrices, $s_{\min}\left(X-zI\right)$
will have growth that is almost linear near a point in the spectrum. For a precise result, see \cite[Theorem 1.3]{filonov2012polynomials}. 

We can see that we have been working with both the Laplace-type operator
$\hat{A}^{2}+\hat{B}^{2}$
and fragments of the Dirac-type operator 
\begin{equation*}
\hat{A}\otimes\sigma_{x}+\hat{B}\otimes\sigma_{y}=\left[\begin{array}{cc}
0 & \hat{A}-i\hat{B}\\
\hat{A}+i\hat{B} & 0
\end{array}\right].
\end{equation*}
The equations
\begin{equation*}
\bigl(\hat{A}\mp i\hat{B}\bigr)\bigl(\hat{A}\pm i\hat{B}\bigr)=\hat{A}^{2}+\hat{B}^{2}\pm i\bigl[\hat{A},\hat{B}\bigr]
\end{equation*}
 used by Maccone and Pati are fragments of the one equation
\begin{equation*}
\left[\begin{array}{cc}
0 & \hat{A}-i\hat{B}\\
\hat{A}+i\hat{B} & 0
\end{array}\right]^{2}=\left[\begin{array}{cc}
\hat{A}^{2}+\hat{B}^{2} & 0\\
0 & \hat{A}^{2}+\hat{B}^{2}
\end{array}\right]+\left[\begin{array}{cc}
i\left[A,B\right] & 0\\
0 & -i\left[A,B\right]
\end{array}\right].
\end{equation*}
That is, the square of the Laplace-type operator is almost a square root of the Dirac-type operator when $A$ and $B$ almost commute.

This is a special case of a relation between the spectral localizer and the quadratic composite operator (\cite{Vides23}). If we add in a third Hermitian matrix, this becomes
\begin{equation*}
\left[\begin{array}{cc}
\hat{C} & \hspace{-0.2cm} \hat{A}-i\hat{B}\\
\hat{A}+i\hat{B} & -\hat{C}
\end{array}\right]^{2}
=
(\hat{A}^{2}+\hat{B}^{2}+\hat{C}^{2})\otimes I_{2}+\left[\begin{array}{cc}
i\left[A,B\right] & \hspace{-0.4cm} i[B,C]+[A,C]\\
i[B,C]-[A,C] & -i\left[A,B\right]
\end{array}\right].
\end{equation*}
In Section~\ref{sec:results_on_uncertainty}, we will see this leads to the lower bound 
\begin{equation}
\Delta^{2}(A)+\Delta^{2}(B)+\Delta^{2}(C)
\geq
\mu_{\boldsymbol{\lambda}}^{\mathrm{C}}(A,B,C)^{2}+\sqrt{\left|\left\langle i\left[A,B\right]\right\rangle \right|+\left|\left\langle i\left[A,C\right]\right\rangle \right|+\left|\left\langle i\left[B,C\right]\right\rangle \right|}\label{eq:sum_of_three_with_PS_and_RMS_expectations}
\end{equation}
where
$\boldsymbol{\lambda}=(\left\langle A\right\rangle ,\left\langle B\right\rangle ,\left\langle C\right\rangle)$. 
Indeed, this works just as well with mixed states and is really a
result about elements in a $C^{*}$-algebra.

There are variations on our results that work for non-Hermitian observables.  The study of uncertainty for non-Hermitian observables, typically Hamiltonians, is an active area of research.  See, for example, \cite{song2025Uncertainty_nonhermitian}.

We
will see that for the basic observables in a model of a Chern insulator
(Hamiltonian plus two position operators) the estimate in (\ref{eq:sum_of_three_with_PS_and_RMS_expectations})
is very tight, essentially an equality if we probe at the Fermi level
near an edge of the system. 

Once we get to four observables, the term in the lower bound involving
the six commutators is not so simple anymore. However, our main result
has the following corollary. This deals with the common situation
where most of the observables are for position and so commute with
each other, while the last observable is the Hamiltonian of a system
that only almost commutes with the various position operators \cite{Cerjan2024classifying,stoiber2025spec_loc_strong_TI}.
The $d$-variable Clifford pseudospectrum $\left(\mu_{\boldsymbol{\lambda}}^{\textrm{C}}(A_{1},\dots,A_{d})\right)^{2}$
will be reviewed in Section~\ref{sec:preliminaries_on_composite_operators}.

\begin{cor}
Let $A_{1},\dots,A_{d}$ be
Hermitian matrices. If $A_{j}A_{k}=A_{k}A_{j}$ for all $1\leq j,k\leq d-1$, then 
\begin{equation}
\sum_{j=1}^{d}\Delta_{\mathbf{v}}^{2}(A_{j})
\geq
\left(\mu_{\boldsymbol{\lambda}}^{\textrm{C}}(A_{1},\dots,A_{n})\right)^{2}+\sqrt{\sum_{j=1}^{d-1}\left\langle i\left[A_{j},A_{d}\right]\right\rangle _{\mathbf{v}}^{2}},
\end{equation}
where $\lambda_{j}=\left\langle A_{j}\right\rangle _{\mathbf{v}}$.
\end{cor}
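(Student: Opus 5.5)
Set $\hat{A}_j=A_j-\lambda_jI$ and form the spectral localizer $L=\sum_{j=1}^d\hat{A}_j\otimes\Gamma_j$, where $\Gamma_1,\dots,\Gamma_d$ is an irreducible representation of the Clifford relations $\Gamma_j^2=I$, $\Gamma_j\Gamma_k=-\Gamma_k\Gamma_j$ on $\mathbb{C}^m$. The Clifford gap is $\mu^{\mathrm{C}}_{\boldsymbol{\lambda}}(A_1,\dots,A_d)=s_{\min}(L)$. The plan mirrors the two- and three-observable computations in the introduction. Squaring $L$ and using the anticommutation gives
\begin{equation*}
L^2=\Bigl(\sum_{j=1}^d\hat{A}_j^2\Bigr)\otimes I+\sum_{j<k}[\hat{A}_j,\hat{A}_k]\otimes\Gamma_j\Gamma_k .
\end{equation*}
Since $[\hat A_j,\hat A_k]=[A_j,A_k]$ and the first $d-1$ observables commute, only the terms with $k=d$ survive. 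Writing $C_j=i[A_j,A_d]$ (Hermitian) and $G_j=-i\Gamma_j\Gamma_d$ (Hermitian), we get $L^2=(\sum_j\hat A_j^2)\otimes I+K$ with $K=\sum_{j=1}^{d-1}C_j\otimes G_j$.

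For any unit vector $\mathbf{w}\in\mathbb{C}^m$, test $L^2$ against $\mathbf{v}\otimes\mathbf{w}$. The first term gives exactly $\sum_j\Delta^2_{\mathbf{v}}(A_j)$. The left side satisfies $\langle L^2(\mathbf{v}\otimes\mathbf{w}),\mathbf{v}\otimes\mathbf{w}\rangle=\|L(\mathbf{v}\otimes\mathbf{w})\|^2\ge s_{\min}(L)^2$. For the cross term, note that $\langle K(\mathbf{v}\otimes\mathbf{w}),\mathbf{v}\otimes\mathbf{w}\rangle=\langle M\mathbf{w},\mathbf{w}\rangle$, where $M=\sum_{j=1}^{d-1}c_jG_j$ and $c_j=\langle C_j\rangle_{\mathbf{v}}\in\mathbb{R}$. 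This yields
\begin{equation*}
\sum_{j=1}^d\Delta^2_{\mathbf{v}}(A_j)\ \ge\ s_{\min}(L)^2-\langle M\mathbf{w},\mathbf{w}\rangle ,
\end{equation*}
and we then choose $\mathbf{w}$ to make $-\langle M\mathbf{w},\mathbf{w}\rangle$ as large as possible, namely equal to $-\lambda_{\min}(M)$.

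The key step is computing the spectrum of $M$. For $j\ne k<d$ one has $G_jG_k+G_kG_j=-(\Gamma_j\Gamma_d\Gamma_k\Gamma_d+\Gamma_k\Gamma_d\Gamma_j\Gamma_d)=\Gamma_j\Gamma_k+\Gamma_k\Gamma_j=0$, and $G_j^2=-\Gamma_j\Gamma_d\Gamma_j\Gamma_d=I$. So $G_1,\dots,G_{d-1}$ again satisfy the Clifford relations, and therefore
\begin{equation*}
M^2=\Bigl(\sum_{j=1}^{d-1}c_j^2\Bigr)I .
\end{equation*}
Each $G_j$ is traceless, since $G_j=G_kG_jG_k^{-1}\cdot(-1)$ for some other $k$, or because $\Gamma_j\Gamma_d$ anticommutes with $\Gamma_j$. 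Hence $M$ has eigenvalues $\pm\sqrt{\sum c_j^2}$, with $-\sqrt{\sum c_j^2}$ actually attained unless all $c_j=0$, in which case the claim is trivial. Taking $\mathbf{w}$ to be an eigenvector of $M$ for that negative eigenvalue finishes the proof.

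The main obstacle is bookkeeping. First, the sign and Hermiticity conventions for $\Gamma_j\Gamma_k$ must match the paper's definition of the localizer from Section~\ref{sec:preliminaries_on_composite_operators}. Second, the theorem is presumably stated for a general state, so with a mixed state (or the $C^*$-setting) one replaces the vector $\mathbf{v}$ by a state $\phi$ and applies $\phi\otimes\langle\cdot\,\mathbf{w},\mathbf{w}\rangle$ in the same way. In that setting one uses $\phi(L^2)\ge s_{\min}(L)^2$, which holds because $L^2\ge s_{\min}(L)^2 I$. Alternatively, the corollary follows by specializing the general main theorem, whose commutator term is $-\lambda_{\min}$ of exactly this matrix $M$, to the case where all commutators among the first $d-1$ observables vanish.
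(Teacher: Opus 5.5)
Your proof is correct and follows the paper's route. The paper gets the corollary from $L^*L = Q\otimes I + \sum_{j<k}[A_j,A_k]_*\otimes\Gamma_j\Gamma_k$ evaluated on a product state, with the matrix factor chosen at the bottom eigenvalue of $R_{\mathbf v}(\hat{\mathbb A})$ (Lemma~\ref{lem:Techincal_Inequality}, Theorem~\ref{thm:Main_result}), plus the observation that the $-i\Gamma_j\Gamma_d$ form a Clifford system (Lemma~\ref{lem:R_in_special_cases}); your $M$ is exactly that $R_{\mathbf v}(\hat{\mathbb A})$, and the only small difference is that you show $-\|M\|$ is an eigenvalue via tracelessness of $-i\Gamma_j\Gamma_d$ rather than through the paper's anti-self-dual argument in Lemma~\ref{lem:R_matrix_symmetric_spectrum}.
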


\section{States and C*-algebras}

Our results work with mixed states and are not limited to states on finite-dimensional Hilbert space. Indeed, the most natural setting for these results is to states on a unital $C^{*}$-algebra $\mathcal{A}$. 
A state on $\mathcal{A}$ is a map $\varphi:\mathcal{A}\rightarrow\mathbb{C}$ that is linear, sends $I$ to $1$ and is positive in the sense that $\varphi(A^{*}A)\geq0$ for all $A\in\mathcal{A}$. The set of all states on $\mathcal{A}$ will be denoted $\mathcal{S}(\mathcal{A})$. If $\mathcal{A}=\mathbf{M}_{n}(\mathbb{C})$ then a density matrix $\rho\in \mathbf{M}_n$ determines a state via $\varphi(A)=\mathrm{Tr}(\rho A)$ and a unit vector $\mathbf{v}\in \mathbb{C}^n$ determines a pure state via $\varphi(A)=\bigl\langle A\mathbf{v},\mathbf{v}\bigr\rangle$. In general, a state is a function that returns expectations and satisfies some basic axioms.

The next lemma about positive linear functionals (and thus states) on $C^{*}$-algebras is used throughout the paper. 
\begin{lem}\label{lem:Kadison_Scwharz_for_functionals} For any positive linear
functional $\psi$ on $\mathcal{A}$ and any element $A\in\mathcal{A}$, 
\begin{equation}
\psi(A^{*}A)\geq|\psi(A)|^{2}.
\end{equation}
\end{lem}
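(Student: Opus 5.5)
The plan is to reduce the inequality to the positivity of a single auxiliary element, and then expand. Set $\lambda=\psi(A)\in\mathbb{C}$ and consider the shifted element $B=A-\lambda I\in\mathcal{A}$. Positivity of $\psi$ gives $\psi(B^{*}B)\geq 0$. The whole proof is to compute this number and observe that it equals $\psi(A^{*}A)-|\psi(A)|^{2}$.

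For the expansion, write
\begin{equation*}
B^{*}B=A^{*}A-\lambda A^{*}-\overline{\lambda}A+|\lambda|^{2}I,
\end{equation*}
and apply $\psi$ term by term using linearity. The one ingredient that is not pure bookkeeping is the identity $\psi(A^{*})=\overline{\psi(A)}$. I will first establish that a positive linear functional is Hermitian. Every element can be written as $A=H+iK$ with $H=(A+A^{*})/2$ and $K=(A-A^{*})/(2i)$ self-adjoint. Each self-adjoint element is a difference of two positive elements (its positive and negative parts via continuous functional calculus), and positive elements have the form $X^{*}X$. Hence $\psi$ is real on self-adjoint elements, which gives $\psi(A^{*})=\psi(H)-i\psi(K)=\overline{\psi(A)}$.

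With this in hand, the middle terms become $\lambda\overline{\lambda}+\overline{\lambda}\lambda=2|\lambda|^{2}$. The last term contributes $|\lambda|^{2}\psi(I)$. Here a small care point arises: $\psi$ need not be a state, so $\psi(I)$ need not equal $1$.

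I therefore expect the main obstacle to be this normalization. For a state, $\psi(I)=1$ and the computation gives exactly $\psi(A^{*}A)-|\psi(A)|^{2}\geq 0$. For a general positive functional, the sharp statement is $\psi(I)\psi(A^{*}A)\geq|\psi(A)|^{2}$. I would handle it by proving the Cauchy--Schwarz inequality for the positive semidefinite sesquilinear form $(X,Y)\mapsto\psi(Y^{*}X)$, taking $X=A$ and $Y=I$.
\begin{itemize}
\item If $\psi(I)\leq 1$, this gives the claim.
\item If $\psi(I)=0$, Cauchy--Schwarz forces $\psi(A)=0$.
\end{itemize}
This matches the intended use for states. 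Since the lemma is applied in the paper to states (and to functionals with $\psi(I)=1$), I would state the normalization explicitly in the proof. The substantive argument is then the expansion of $\psi\bigl((A-\lambda I)^{*}(A-\lambda I)\bigr)\geq 0$.
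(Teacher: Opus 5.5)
Your argument is correct and follows the paper's route: expand $\psi\bigl((A-\psi(A)I)^{*}(A-\psi(A)I)\bigr)\geq 0$, with your decomposition $A=H+iK$ justifying $\psi(A^{*})=\overline{\psi(A)}$, which the paper only asserts. You are also right about the normalization: the paper's proof silently uses $\psi(I)=1$ in the last term, and the lemma genuinely fails when $\psi(I)>1$ (take $\psi=2\varphi$ for a state $\varphi$ and $A=I$, giving $2\geq 4$); for $\psi(I)\leq 1$ you do not even need Cauchy--Schwarz, since the same expansion already yields $\psi(A^{*}A)\geq(2-\psi(I))|\psi(A)|^{2}\geq|\psi(A)|^{2}$.
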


\begin{proof}
Since $\psi$ is a positive linear functional and $B^{*}B\geq0$ for
any $B\in\mathcal{A}$, we have $0\leq\psi(B^{*}B)$ for any $B\in\mathcal{A}$.
In particular, for $A\in\mathcal{A}$ and $\hat{A}:=A-\psi(A)$, we
find: 
\begin{align*}
0 
& \leq \psi(\hat{A}^*\hat{A})\\
& = \psi\left((A-\psi(A)\right)^{*}(A-\psi(A))\\
 & =\psi\left(A^{*}A-\psi(A)A^{*}-\overline{\psi(A)}A+\overline{\psi(A)}\psi(A)\right)\\
 & =\psi(A^{*}A)-\psi(A)\psi(A^{*})-\overline{\psi(A)}\psi(A)+|\psi(A)|^{2}\\
 & =\psi(A^{*}A)-|\psi(A)|^{2},
\end{align*}
 where the last equality follows from $\psi(A^{*})=\overline{\psi(A)}$.
Therefore, $\psi(A^{*}A)\geq|\psi(A)|^{2}$. 
\end{proof}

One difficulty when dealing with an infinite-dimensional Hilbert space $\mathbb{H}$ is that now singular values are not defined. We can define $s_{\min}$ of a bounded linear operator $T$ as the conorm 
\begin{equation}
s_{\min}\left(T\right)=\min_{\|\mathbf{v}\|=1}\left\Vert T\mathbf{v}\right\Vert,
\end{equation}
but we no longer have the property $s_{\min}\left(T^{*}\right)=s_{\min}\left(T\right)$.
We will occasionally use this concept, as in  \cite{cerjan2025multivariable_PS}, but mostly we will only apply this notion to Hermitian
element where it is simply the distance from zero to the spectrum. As we have in mind physics applications, we will call this simpler notion the spectral gap, but it could just as well be called the minimal
spectral radius. To avoid confusion with singular values, we will
denote this as $\mu(A)$.

Suppose $A$ is a Hermitian element of the unital $C^{*}$-algebra
$\mathcal{A}$. The \emph{spectral gap} $\mu(A)$ is defined as
\begin{equation*}
\mu(A)
=
\min_{z\in\sigma(A)}|z|.
\end{equation*}
Recall that the norm of a Hermitian $A$ will equal the largest absolute
value of elements of its spectrum. Also, $A$ is deemed positive if
it is Hermitian with non-negative spectrum, or equivalently if $A=B^{*}B$
for some $B$ in the $C^{*}$-algebra.  

\begin{lem}
\label{lem:bounds_on_state_values} Given a positive element $A$ of
a unital $C^{*}$-algebra $\mathcal{A}$ and a state $\varphi\in\mathcal{S}(\mathcal{A})$, we have
\begin{equation*}
\mu(A)\leq\varphi(A)\leq\|A\|.
\end{equation*}
Moreover there are states $\varphi_{0}$ and $\varphi_{1}$ so that
$\varphi_{0}(A)=\mu(A)$ and $\varphi_{1}(A)=\|A\|$.
\end{lem}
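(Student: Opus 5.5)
The argument reduces both inequalities to a statement about the spectrum of $A$ through the continuous functional calculus. It then produces the extremal states as Hahn--Banach extensions of characters on the commutative $C^{*}$-subalgebra $C^{*}(A,I)$.

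\emph{Inequalities.} Since $A$ is positive, $\sigma(A)\subseteq[\mu(A),\|A\|]$, where $\mu(A)=\min\sigma(A)$ and $\|A\|=\max\sigma(A)$. Functional calculus then shows that $A-\mu(A)I$ and $\|A\|I-A$ have non-negative spectrum, so both are positive elements. Each can be written as $B^{*}B$, for instance with $B=(A-\mu(A)I)^{1/2}$. Positivity of $\varphi$ gives $\varphi(A-\mu(A)I)\geq0$, and with $\varphi(I)=1$ and linearity this yields $\varphi(A)\geq\mu(A)$. The same computation for $\|A\|I-A$ gives $\varphi(A)\leq\|A\|$.

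\emph{Attaining the bounds.} Let $\lambda\in\sigma(A)$ be either endpoint, which exists because the spectrum is compact and nonempty. Let $\mathcal{B}=C^{*}(A,I)\subseteq\mathcal{A}$, the unital commutative $C^{*}$-subalgebra generated by $A$. By Gelfand theory $\mathcal{B}\cong C(\sigma(A))$, so evaluation at $\lambda$ defines a character $\chi$ on $\mathcal{B}$ with $\chi(A)=\lambda$. This $\chi$ is a state on $\mathcal{B}$: it is linear, unital and positive, and it has norm $1$.

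By Hahn--Banach, extend $\chi$ to a functional $\varphi$ on $\mathcal{A}$ with $\|\varphi\|=1$. The remaining step is the standard fact that a bounded functional on a unital $C^{*}$-algebra with $\|\varphi\|=\varphi(I)=1$ is positive, and hence is a state. I expect this step to be the main obstacle, and I would cite it (e.g. Murphy or Kadison--Ringrose) rather than reprove it.

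If a self-contained argument is preferred, I would first show that $\varphi$ is real on Hermitian elements, and then positive on positive ones. For Hermitian $H$, suppose $\varphi(H)=a+ib$ with $b\neq0$. Then consider $\|H+itI\|^{2}=\|H\|^{2}+t^{2}$ as $t\to\infty$. This gives $|a+i(b+t)|^{2}\leq\|H\|^{2}+t^{2}$, which fails for large $t$ of the sign of $b$. For positive $P$ with $\|P\|\leq1$, we have $\|I-P\|\leq1$, so $|1-\varphi(P)|\leq1$; since $\varphi(P)$ is real, this gives $\varphi(P)\geq0$.

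Taking $\lambda=\mu(A)$ produces $\varphi_{0}$, and taking $\lambda=\|A\|$ produces $\varphi_{1}$. This completes the plan.
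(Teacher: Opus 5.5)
Your proposal is correct and follows the paper's route: the inequalities come from the order relation $\mu(A)I\leq A\leq\|A\|I$, and the extremal states come from point evaluations in the commutative picture, extended to $\mathcal{A}$ by Hahn--Banach. You work directly with $C^{*}(A,I)\cong C(\sigma(A))$ rather than the paper's $C[a,b]$, and you make explicit a step the paper leaves implicit, namely that a norm-one extension with $\varphi(I)=1$ is automatically positive and hence a state.
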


\begin{proof}
This first fact follows from the order relation $\mu(A)I\leq A\leq\|A\|$.
To show the existence of states we can first prove this in the case
$\mathcal{A}=C[a,b]$ and $A(x)=x$, and then extend via the Hahn-Banach
theorem. By the Gelfand-Naimark Theorem, the result thus holds for an arbitrary Hermitian element in $\mathcal{A}$.
\end{proof}
We will need some basic facts about tensor products of $C^{*}$-algebras,
just in the context of 
\begin{equation*}
\mathcal{A}\otimes\mathbf{M}_{N}(\mathbb{C})\cong\mathbf{M}_{N}(\mathcal{A}).
\end{equation*}
The $C^{*}$-algebra $\mathbf{M}_{N}:=\mathbf{M}_{N}(\mathbb{C})$ is nice enough that we need not worry much about the norm on $\mathcal{A}\otimes\mathbf{M}_{N}$.
If $\mathcal{A}$ is represented on a Hilbert space $\mathbb{H}$, we can use the norm on
$\mathbf{M}_{N}(\mathcal{A})$ arising from its action on $\mathbb{H}\otimes\mathbb{C}^{N}$.
Given a state $\psi$ on $\mathcal{A}$ and a state $\varphi$ on
$\mathbf{M}_{N}$, the {\em product state} $\psi\otimes\varphi$
is defined by 
\begin{equation*}
\left(\psi\otimes\varphi\right)(A\otimes M)=\psi(A)\varphi(M)
\end{equation*}
and extends continuously and linearly to the full tensor product. There are more complex (entangled) states on $\mathcal{A}\otimes\mathbf{M}_{N}$,
but product states are all we need for the present paper.

Fix a state $\psi$ on $\mathcal{A}$. Given $A\in \mathcal{A}$, we define
\begin{equation}
\hat{A}=A_{\psi}=A-\psi(A)I
\end{equation}
so that
\begin{equation}
\Delta_{\psi}^{2}(A)=\psi(\hat{A}^{*}\hat{A}).
\end{equation}
See \cite{magajna2016variance_of_operators} for details. When $A$ is Hermitian, $\Delta_\psi^2(A) = \psi(\hat{A}^2).$ In the next section, we also need the following generalization of a commutator for Hermitian observables. Instead of the usual $\left[A,B\right]=AB-BA$ for Hermitian $A,B\in \cA$, we will have
\begin{equation}
\left[A,B\right]_{{\textstyle *}}=A^{*}B-B^{*}A.
\end{equation}
A reasonable name for this binary operator is $*$-commutator, but there does not seem to
be a standard name for this. As in the Hermitian case, the $*$-commutator of two arbitrary elements $A,B\in \cA$ will always be skew-Hermitian, meaning $i[A,B]_*$ is Hermitian.
The main distinction with the Hermitian case is that, for arbitrary $\lambda\in \mathbb{C}$,
\begin{equation*}
\left[\lambda I,B\right]_{{\textstyle *}}
\neq
0,
\end{equation*}
so in particular, we in general have
\begin{equation*}
[\hat{A},B]_{\textstyle *} \neq [A,B]_{\textstyle *},
\end{equation*} 
in contrast to the Hermitian setting where we enjoy the fact that $[\hat{A},B] = [A,B]$ for all Hermitian $A,B\in \cA.$
This comes into play in Definition\autoref{def:R_psi-matrix}.

\section{Multivariable pseudospectra}\label{sec:preliminaries_on_composite_operators}

The spectral localizer is a Dirac-type operator formed out of $d$ Hermitian elements $A_{1},\dots,A_{d}$ of $\mathcal{A}$.
Select $\Gamma_{1},\dots,\Gamma_{d}\in \mathbf{M}_N$ so that 
\begin{equation}
\Gamma_{j}^{*}=\Gamma_{j},\quad(1\leq j\leq d),
\end{equation}
\begin{equation}
\Gamma_{j}^{2}=I_{N},\quad(1\leq j\leq d),
\end{equation}
\begin{equation}
\Gamma_{j}\Gamma_{k}=-\Gamma_{k}\Gamma_{j},\quad(1\leq j,k\leq d, j\neq k).
\end{equation}
Usually we select this with  $N=2^{\left\lfloor (d-1)/2\right\rfloor}$, which is the smallest possible choice, corresponding to an irreducible representation of the associated complex Clifford algebra. In this paper, we extend the notion of the spectral localizer to encompass a $d$-tuple $(A_1,\dots, A_d)$ of arbitrary elements $A_1,\dots, A_d\in \mathcal{A}$.

\begin{defn}\label{def:spectral_localizer} 
Given a $d$-tuple $\mathbb{A}=(A_{1},\dots,A_{d})$ of elements $A_{j}$ of $\mathcal{A}$ and $\vec{\lambda}=(\lambda_{1},\dots,\lambda_{d})\in \mathbb{C}^d$, the {\em spectral localizer} $L_{\vec{\lambda}}(\mathbb{A})\in\mathcal{A}\otimes\mathbf{M}_{N}$
is defined as
\begin{equation}
L_{\vec{\lambda}}(\mathbb{A})
:=
\sum_{j=1}^{d}(A_{j}-\lambda_{j}I)\otimes\Gamma_{j}.
\end{equation}
\end{defn}

Notice we do not require the $\lambda_{j}$ to be real, nor the
$A_{j}$ to be Hermitian, or even normal. However, when calculating $K$-theory values \cite{Loring15} we need to make all these restrictions, or at least most of them \cite{cerjan2023Line_gapped_spec_loc}. 

A useful convention is to write $L(\mathbb{A})$ to mean $L_{\vec{0}}(\mathbb{A})$.
Since 
\begin{equation*}
L_{\vec{\lambda}}(A_{1},\dots,A_{d})=L(A_{1}-\lambda_{1}I,\dots,A_{d}-\lambda_{d}I)
\end{equation*}
we have, in particular, for any $\psi\in \mathcal{S}(\mathbb{A})$, 
\begin{equation*}
L_{\psi(\mathbb{A})}(A_{1},\dots,A_{d})=L(\hat{A}_{1},\dots,\hat{A}_{d});
\end{equation*}
where we set $\psi(\mathbb{A}):=\left(\psi(A_{1}),\dots\psi(A_{d})\right).$

\begin{defn}
We define the \emph{Clifford spectrum} $\Lambda(\mathbb{A})$ of $\mathbb{A}$,
a $d$-tuple of elements of $\mathcal{A}$, to be the set
\begin{equation*}
\Lambda(\mathbb{A})
:=
\left\{ \vec{\lambda}\in\mathbb{C}^{d}\mid L_{\vec{\lambda}}(\mathbb{A})\mbox{ is non-invertible}\right\} .
\end{equation*}
If we require all the $A_j$ to be Hermitian then we define the Clifford spectrum to be
\begin{equation*}
\Lambda(\mathbb{A})
:=
\left\{ \vec{\lambda}\in\mathbb{R}^{d}\mid L_{\vec{\lambda}}(\mathbb{A})\mbox{ is non-invertible}\right\} .
\end{equation*}
\end{defn}

\begin{rem}
There are lots of variations on how to form the spectral localizer
in the non-Hermitian case \cite{garcia2026clifford_Quadratic,cerjan2023Line_gapped_spec_loc},
and so there will be competing notions of Clifford spectrum. The same will be true of how we might form the quadratic composite operator and the various forms of pseudospectra. Our definitions are natural in the uncertainty-bound context but would be less useful for the study of lossy topological insulators than the approach used by Garcia \cite{garcia2026clifford_Quadratic}.
\end{rem}

\begin{rem}
To avoid confusion, we should give separate names to the two notions of Clifford spectrum that can apply when the $A_j$ are Hermitian. Mostly this should be clear from context.  When working in the context of just Hermitian $C^*$-algebraic variables the Clifford spectrum should be a subset of $\mathbb{R}^n$. In the context of random unitary observables, one might wish to limit the pool of possible elements for joint spectra to $d$-tuples of unit scalars. Our main results will be stated in the context of general observables and complex $\lambda_j$. All our special cases will be for Hermitian matrix observables and thus the $\lambda_j$ are restricted to being real.
\end{rem}

\begin{example}
A key example is when $\mathcal{A}=C(\Omega)$ for some compact
subset $\Omega$ of $\mathbb{R}^{d}$. Let $A_{j}$ be the $j$-th
coordinate function $A_j(\vec{x}):=x_j$ for all $\vec{x}=(x_1,\dots, x_d)\in \mathbb{R}^d$. These coordinate functions are real and therefore Hermitian. For each $\vec{\lambda}\in \mathbb{R}^d$, we can witness the localizer as a function $L_{\vec{\lambda}}(\mathbb{A}):\Omega\to M_N$ defined on a vector $\vec{x}\in\Omega$ by
\begin{equation}
\left[L_{\vec{\lambda}}(\mathbb{A})\right](\vec{x})
=
\sum_{j=1}^{d}(x_{j}-\lambda_{j})\otimes\Gamma_{j}.
\end{equation}
At ${\vec{x}} = {\vec{\lambda}}$, the localizer $L_{\vec{\lambda}}(\mathbb{A})$ is singular. 

As the $A_{j}$ are all Hermitian, so is $L_{\vec{\lambda}}(\mathbb{A})$. Note that a Hermitian operator is singular if and only if its square is singular. Observe
\begin{align*}
\bigl[L_{\vec{\lambda}}(\mathbb{A})\bigr]^{2}(\vec{x}) & =\left(\sum_{j=1}^{d}(x_{j}-\lambda_{j})\otimes\Gamma_{j}\right)^{2},
\end{align*}
simplifies to the sum of two types of terms. Cross-terms (with $j\neq k$) come in cancelling pairs:
\begin{equation*}
(x_{j}-\lambda_{j})(x_{k}-\lambda_{k})\otimes\Gamma_j\Gamma_{k}+(x_k-\lambda_k)(x_j-\lambda_j)\otimes \Gamma_k\Gamma_j
=
0,
\end{equation*}
since $\Gamma_{j}\Gamma_{k}=-\Gamma_{k}\Gamma_{j}$. Thus, the remaining terms are
\begin{equation*}
\left[L_{\vec{\lambda}}(\mathbb{A})\right]^{2}(\vec{x})=\sum_{j=1}^{d}(x_{j}-\lambda_{j})^{2}\otimes I,
\end{equation*}
so $L_{\vec{\lambda}}(\mathbb{A})$ is singular if and only if $\vec{\lambda}\in\Omega$. Thus the Clifford spectrum of the coordinate functions $A_1,\dots, A_d\in C(\Omega)$ is precisely $\Omega$. Altogether, this example illustrates the classical, commuting limit of a collection of observables, in which all coordinates can be simultaneously specified.
\end{example}

\begin{defn}
Given an $d$-tuple $\mathbb{A}=(A_{1},\dots,A_{d})$ of elements
$A_{j}$ of $\mathcal{A}$ and scalars $\vec{\lambda}=(\lambda_{1},\dots,\lambda_{d})$,
the {\em quadratic composite operator} $Q_{\vec{\lambda}}(\mathbb{A})\in\mathcal{A}$
is defined as
\begin{equation}
Q_{\vec{\lambda}}(\mathbb{A})
:=
\sum_{j=1}^{d}(A_{j}-\lambda_{j}I)^{*}(A_{j}-\lambda_{j}I).
\end{equation}
\end{defn}

We can try to define a joint spectrum of $A_1,\dots, A_d$ using the spectrum of $Q_{\vec{\lambda}}(\mathbb{A})$,
but it turns out that this is not very interesting. Indeed, the trouble is that even in the case of Hermitian elements, the set
\begin{equation}
\left\{ \vec{\lambda}\in\mathbb{R}^{d}\mid Q_{\vec{\lambda}}(\mathbb{A})\mbox{ is non-invertible}\right\}
\end{equation}
is too-often empty. On the other hand, we remain convinced that the Clifford spectrum is always non-empty \cite{cerjan2025multivariable_PS}. However, the the spectral gap of $Q_{\vec{\lambda}}(\mathbb{A})$
is closely related to the sum of uncertainty. This points to the fact that we need to look at pseudospectra when working in the noncommutative setting, not just spectra.
\begin{defn}
\label{def:quadratic_gap}
Given a $d$-tuple $\mathbb{A}:=(A_{1},\dots,A_{d})$ of elements in $\mathcal{A}$, the \emph{quadratic
gap of $\mathbb{A}$ at $\vec{\lambda}\in\mathbb{C}^{d}$} is 
\begin{equation*}
\mu_{\vec{\lambda}}^{\mathrm{Q}}(\mathbb{A})
:=
\sqrt{(Q_{\vec{\lambda}}(\mathbb{A}))},
\end{equation*}
and the \emph{Clifford gap  of $\mathbb{A}$ at $\vec{\lambda}\in\mathbb{C}^{d}$} is
\begin{equation*}
\mu_{\vec{\lambda}}^{\mathrm{C}}(\mathbb{A}):=\sqrt{\mu\left(\left(L_{\vec{\lambda}}(\mathbb{A})\right)^{*}L_{\vec{\lambda}}(\mathbb{A}))\right)}.
\end{equation*}
If all the $A_{j}$ are Hermitian, then the spectral localizer
is Hermitian and 
\begin{equation*}
\mu_{\vec{\lambda}}^{\mathrm{C}}(\mathbb{A})
=
\mu(L_{\vec{\lambda}}(\mathbb{A})).
\end{equation*}
\end{defn}

Associated to the gap functions $\vec{\lambda}\mapsto \mu^\mathrm{Q}_{\vec{\lambda}}(\mathbb{A})$ and $\vec{\lambda}\mapsto \mu^\mathrm{C}_{\vec{\lambda}}(\mathbb{A})$ are the $\epsilon$-pseudospectra of $\mathbb{A}$.
The {\em Clifford $\epsilon$-pseudospectrum} of $\mathbb{A}$ is 
\begin{equation*}
\Lambda_{\epsilon}^{\mathrm{C}}(\mathbb{A})=\left\{ \vec{\lambda}\in\mathbb{C}^{d}\mid\mu_{\vec{\lambda}}^{\mathrm{C}}(\mathbb{A})\leq\epsilon\right\} 
\end{equation*}
and the {\em quadratic $\epsilon$-pseudospectrum} of $\mathbb{A}$ is
\begin{equation*}
\Lambda_{\epsilon}^{\mathrm{Q}}(\mathbb{A})=\left\{ \vec{\lambda}\in\mathbb{C}^{d}\mid\mu_{\vec{\lambda}}^{\mathrm{Q}}(\mathbb{A})\leq\epsilon\right\} .
\end{equation*}

\begin{lem}
\label{lem:gaps_are_continuous}Let $\mathcal{A}$ be a unital $C^{*}$-algebra, let $\mathbb{A}:=(A_{1},\dots,A_{d})$ be a $d$-tuple of elements from $\mathcal{A}$. The functions $\vec{\lambda}\mapsto\mu_{\vec{\lambda}}^{\mathrm{C}}(\mathbb{A})$
and $\vec{\lambda}\mapsto\mu_{\vec{\lambda}}^{\mathrm{Q}}(\mathbb{A})$
are continuous.
\end{lem}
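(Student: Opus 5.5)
The plan is to show that both gap functions are built from three ingredients, each continuous: the map $\vec{\lambda}\mapsto$ (a positive element of a $C^*$-algebra), the spectral gap $\mu$ restricted to positive elements, and the square root. Note that the definition of $\mu^{\mathrm{Q}}_{\vec\lambda}$ should read $\sqrt{\mu(Q_{\vec\lambda}(\mathbb{A}))}$, the $\mu$ having apparently been dropped. Since $\sqrt{\cdot}$ is continuous on $[0,\infty)$, it suffices to prove that $\vec\lambda\mapsto \mu(Q_{\vec\lambda}(\mathbb{A}))$ and $\vec\lambda\mapsto\mu\bigl(L_{\vec\lambda}(\mathbb{A})^*L_{\vec\lambda}(\mathbb{A})\bigr)$ are continuous.

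First, both $\vec\lambda\mapsto Q_{\vec\lambda}(\mathbb{A})\in\mathcal{A}$ and $\vec\lambda\mapsto L_{\vec\lambda}(\mathbb{A})^*L_{\vec\lambda}(\mathbb{A})\in\mathcal{A}\otimes\mathbf{M}_N$ are continuous in norm. Expanding shows each is a polynomial in $\lambda_j,\overline{\lambda_j}$ with fixed algebra coefficients. For instance,
\[
Q_{\vec\lambda}(\mathbb{A})=\sum_j\bigl(A_j^*A_j-\lambda_jA_j^*-\overline{\lambda_j}A_j+|\lambda_j|^2 I\bigr).
\]
Similarly, $L_{\vec\lambda}(\mathbb{A})=L(\mathbb{A})-\sum_j\lambda_j I\otimes\Gamma_j$ is affine in $\vec\lambda$, and multiplication and adjoint are norm continuous. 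So this step is just the triangle inequality together with $\|XY\|\le\|X\|\|Y\|$.

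Second, I need $\mu$ to be $1$-Lipschitz on positive elements: $|\mu(P)-\mu(R)|\le\|P-R\|$ for positive $P,R$. For a positive element, $\mu(P)$ is the minimum of its spectrum, so $\mu(P)I\le P$. Using $R-P\ge-\|P-R\|I$ gives
\[
R\ge P-\|P-R\|I\ge(\mu(P)-\|P-R\|)I,
\]
hence $\mu(R)\ge\mu(P)-\|P-R\|$, and the symmetric argument finishes. To justify passing from $R\ge cI$ to $\min\sigma(R)\ge c$, I use the functional calculus or Lemma~\ref{lem:bounds_on_state_values}: there is a state $\varphi_0$ with $\varphi_0(R)=\mu(R)$, and then $\varphi_0(R)\ge c$. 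When $c<0$ the conclusion holds trivially since $\mu(R)\ge0$.

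Composing the norm-continuous maps from the first step with the Lipschitz map $\mu$ and then with $\sqrt{\cdot}$ gives continuity of both gap functions. The only mildly delicate point is the second step, namely the Lipschitz property of the bottom of the spectrum. It reduces cleanly to the order inequality $-\|X\|I\le X\le\|X\|I$ for Hermitian $X$, so I do not expect a genuine obstacle. In the Hermitian case, where $\mu^{\mathrm{C}}_{\vec\lambda}(\mathbb{A})=\mu(L_{\vec\lambda}(\mathbb{A}))$, the same Lipschitz argument can be applied directly to $|L_{\vec\lambda}(\mathbb{A})|$, but this is not needed.
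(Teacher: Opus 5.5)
Your proposal is correct and follows the same route as the paper: norm continuity of $\vec\lambda\mapsto L_{\vec\lambda}(\mathbb{A})^*L_{\vec\lambda}(\mathbb{A})$ and $\vec\lambda\mapsto Q_{\vec\lambda}(\mathbb{A})$, composed with the continuity of $\mu$ and of the square root. The difference is that you prove the $1$-Lipschitz estimate $|\mu(P)-\mu(R)|\le\|P-R\|$ for positive elements, whereas the paper only asserts continuity of $\mu$ on Hermitian elements. You also correctly point out that $\mu$ is missing from the paper's displayed definition of $\mu^{\mathrm{Q}}_{\vec\lambda}$.
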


\begin{proof}
The function 
\begin{equation*}
\vec{\lambda}\mapsto\left(L_{\vec{\lambda}}(\mathbb{A})\right)^{*}L_{\vec{\lambda}}(\mathbb{A}),
\end{equation*}
for a fixed $\mathbb{A}$, is certainly continuous for the operator
norm. Since $M\mapsto\mu(M)$ is continuous on the set of Hermitian
elements, the Clifford gap must be continuous. The proof in the quadratic
case is essentially the same.
\end{proof}

\begin{thm}
Let $\mathcal{A}$ be a unital $C^{*}$-algebra, let $\mathbb{A}:=(A_{1},\dots,A_{d})$
be a $d$-tuple of elements from $\mathcal{A}$. For any $\epsilon\geq0$
\begin{equation*}
\Lambda_{\epsilon}^{\mathrm{C}}(\mathbb{A})=\left\{ \vec{\lambda}\in\mathbb{C}^{d}\mid\mu_{\vec{\lambda}}^{\mathrm{C}}(\mathbb{A})\leq\epsilon\right\} 
\end{equation*}
and
\begin{equation*}
\Lambda_{\epsilon}^{\mathrm{Q}}(\mathbb{A})=\left\{ \vec{\lambda}\in\mathbb{C}^{d}\mid\mu_{\vec{\lambda}}^{\mathrm{Q}}(\mathbb{A})\leq\epsilon\right\} 
\end{equation*}
are compact sets.
\end{thm}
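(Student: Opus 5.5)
The plan is to prove that each set is closed and bounded, and then invoke Heine--Borel in $\mathbb{C}^{d}\cong\mathbb{R}^{2d}$.

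\emph{Closedness.} By Lemma~\ref{lem:gaps_are_continuous}, both maps $\vec{\lambda}\mapsto\mu_{\vec{\lambda}}^{\mathrm{C}}(\mathbb{A})$ and $\vec{\lambda}\mapsto\mu_{\vec{\lambda}}^{\mathrm{Q}}(\mathbb{A})$ are continuous. Each $\epsilon$-pseudospectrum is the preimage of $[0,\epsilon]$ under one of these maps, so both sets are closed. This step is routine.

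\emph{Boundedness of $\Lambda_{\epsilon}^{\mathrm{Q}}(\mathbb{A})$.} Write $X_j=A_j-\lambda_jI$.
\begin{enumerate}
\item By Lemma~\ref{lem:bounds_on_state_values}, there is a state $\varphi_0$ with $\mu(Q_{\vec{\lambda}}(\mathbb{A}))=\varphi_0(Q_{\vec{\lambda}}(\mathbb{A}))$.
\item Lemma~\ref{lem:Kadison_Scwharz_for_functionals} gives, for each $j$,
\begin{equation*}
\varphi_0(X_j^*X_j)\geq|\varphi_0(A_j)-\lambda_j|^2 .
\end{equation*}
\item Since $|\varphi_0(A_j)|\leq\|A_j\|$, this yields
\begin{equation*}
\mu(Q_{\vec{\lambda}}(\mathbb{A}))\;\geq\;\sum_{j:\,|\lambda_j|>\|A_j\|}\bigl(|\lambda_j|-\|A_j\|\bigr)^2 .
\end{equation*}
\end{enumerate}
The right-hand side tends to $\infty$ as $|\vec{\lambda}|\to\infty$. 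Hence $\mu^{\mathrm{Q}}_{\vec{\lambda}}(\mathbb{A})\leq\epsilon$ forces $|\lambda_j|\leq\|A_j\|+\epsilon$ for every $j$, so $\Lambda_{\epsilon}^{\mathrm{Q}}(\mathbb{A})$ is bounded.

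\emph{Boundedness of $\Lambda_{\epsilon}^{\mathrm{C}}(\mathbb{A})$.} I would expand the product using the Clifford relations:
\begin{equation*}
L_{\vec{\lambda}}^*L_{\vec{\lambda}}=Q_{\vec{\lambda}}(\mathbb{A})\otimes I_N+\sum_{j<k}[X_j,X_k]_{*}\otimes\Gamma_j\Gamma_k .
\end{equation*}
Then I would compare $\mu$ of the sum with $\mu(Q_{\vec{\lambda}})$ minus the norm of the cross terms. The remark after the definition of $[\cdot,\cdot]_*$ already warns that $[\lambda I,B]_*\neq0$, so the cross terms depend on $\vec{\lambda}$. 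Expanding, $[X_j,X_k]_*$ contains terms linear in $\vec{\lambda}$, which are dominated by the quadratic growth above. It also contains the scalar term $(\bar\lambda_j\lambda_k-\bar\lambda_k\lambda_j)I=2i\,\mathrm{Im}(\bar\lambda_j\lambda_k)I$. This term is quadratic in $\vec{\lambda}$, and $2i\,\mathrm{Im}(\bar\lambda_j\lambda_k)\,\Gamma_j\Gamma_k$ has eigenvalues $\pm2\,\mathrm{Im}(\bar\lambda_j\lambda_k)$. It can therefore cancel the growth of $Q_{\vec{\lambda}}$.

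This cancellation is the main obstacle, and I expect it to be fatal as stated. Take $d=2$, all $A_j=0$, and $\vec{\lambda}=(t,it)$. Then
\begin{equation*}
L_{\vec{\lambda}}^*L_{\vec{\lambda}}=|t|^2\bigl(2I+2i\Gamma_1\Gamma_2\bigr).
\end{equation*}
Since $i\Gamma_1\Gamma_2$ is Hermitian with square $I$ and trace $0$, it has eigenvalue $-1$. Hence the Clifford gap is $0$ for every $t\in\mathbb{C}$, and $\Lambda_{\epsilon}^{\mathrm{C}}(\mathbb{A})$ is unbounded.

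So I can complete the argument only when the scalar term vanishes. This holds in the Hermitian convention of the paper's Clifford-spectrum definition, where $\vec{\lambda}\in\mathbb{R}^d$. More generally it holds whenever all $\mathrm{Im}(\bar\lambda_j\lambda_k)=0$, that is, when $\vec{\lambda}$ is a complex multiple of a real vector. In those cases, $\mu^{\mathrm{C}}_{\vec{\lambda}}(\mathbb{A})^2\geq\mu(Q_{\vec{\lambda}}(\mathbb{A}))-C(1+|\vec{\lambda}|)$ with $C$ depending only on $\max_j\|A_j\|$ and $d$, and the quadratic bound above gives boundedness. For general tuples with $\vec{\lambda}\in\mathbb{C}^d$, the Clifford half of the statement needs to be corrected rather than proved.
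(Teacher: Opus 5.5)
Your proposal is correct, and the point where you stop is exactly where the paper's proof goes wrong.

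\textbf{The Clifford half.} The paper writes $\|L_{\vec\lambda}(\mathbb{A})-L_{\vec\lambda}(0,\dots,0)\|=\|L_{\vec 0}(\mathbb{A})\|$ and concludes $\mu(L_{\vec\lambda}(\mathbb{A}))\geq|\vec\lambda|-\|L_{\vec 0}(\mathbb{A})\|$. This tacitly uses that $L_{\vec\lambda}(0,\dots,0)=-\sum_j\lambda_j I\otimes\Gamma_j$ has conorm $|\vec\lambda|$. That holds for $\vec\lambda\in\mathbb{R}^d$, where its square is $|\vec\lambda|^2 I$. It fails for general $\vec\lambda\in\mathbb{C}^d$ because of precisely the term $2i\,\mathrm{Im}(\bar\lambda_j\lambda_k)\Gamma_j\Gamma_k$ you isolated.

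Your counterexample checks out: with $\Gamma_1=\sigma_x$ and $\Gamma_2=\sigma_y$ one gets $\Gamma_1+i\Gamma_2=\left[\begin{array}{cc}0&2\\0&0\end{array}\right]$.

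The failure is not an artifact of $\mathbb{A}=0$. The matrix $\Gamma_1+i\Gamma_2$ is nonzero with square zero. So for any tuple with $d\geq 2$ and $\vec\lambda=t(1,i,0,\dots,0)$ we have $L_{\vec\lambda}(\mathbb{A})=L_{\vec 0}(\mathbb{A})-t\,I\otimes(\Gamma_1+i\Gamma_2)$. Testing on unit vectors in $\mathbb{H}\otimes\ker(\Gamma_1+i\Gamma_2)$ gives $\mu^{\mathrm{C}}_{\vec\lambda}(\mathbb{A})\leq\|L_{\vec 0}(\mathbb{A})\|$ for every $t\in\mathbb{C}$. Hence $\Lambda^{\mathrm{C}}_\epsilon(\mathbb{A})\subseteq\mathbb{C}^d$ is unbounded for every tuple once $\epsilon\geq\|L_{\vec 0}(\mathbb{A})\|$. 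The Clifford half of the theorem holds only after restricting to $\vec\lambda\in\mathbb{R}^d$, or to your phase-aligned set.

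\textbf{The parts you do prove.} Here your routes differ from the paper's.

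For $\Lambda^{\mathrm{Q}}_\epsilon$, the paper realizes $\mu^{\mathrm{Q}}_{\vec\lambda}(\mathbb{A})$ as the conorm of the column $M_{\vec\lambda}(\mathbb{A})$. It perturbs off $M_{\vec\lambda}(0,\dots,0)$, whose conorm is $|\vec\lambda|$ even for complex $\vec\lambda$ since $M_{\vec\lambda}(0,\dots,0)^*M_{\vec\lambda}(0,\dots,0)=|\vec\lambda|^2 I$ has no cross terms. Citing an external conorm lemma, it gets $\mu^{\mathrm{Q}}_{\vec\lambda}(\mathbb{A})\geq|\vec\lambda|-\|M_{\vec 0}(\mathbb{A})\|$. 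Your argument, using a state attaining $\mu(Q_{\vec\lambda}(\mathbb{A}))$ plus Lemma~\ref{lem:Kadison_Scwharz_for_functionals}, is valid. It uses only results proved in the paper and yields the explicit box $|\lambda_j|\leq\|A_j\|+\epsilon$.

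For $\Lambda^{\mathrm{C}}_\epsilon$ with real $\vec\lambda$, your detour through $L^*L=Q\otimes I+\sum_{j<k}[X_j,X_k]_*\otimes\Gamma_j\Gamma_k$ works. The paper's linear bound is shorter, though, and is valid in exactly that setting for arbitrary non-Hermitian $A_j$. This requires reading $\mu(L_{\vec\lambda}(\mathbb{A}))$ there as the conorm $\mu^{\mathrm{C}}_{\vec\lambda}(\mathbb{A})$. It also covers $\vec\lambda=e^{i\theta}\vec r$ with $\vec r\in\mathbb{R}^d$, since then $\sum_j\lambda_j\Gamma_j=e^{i\theta}\sum_j r_j\Gamma_j$.
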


\begin{proof}
That these are closed sets follows from Lemma~\ref{lem:gaps_are_continuous}.
For large $\vec{\lambda}$, we know
\begin{equation*}
\left\Vert L_{\vec{\lambda}}(\mathbb{A})-L_{\vec{\lambda}}(0,\dots,0)\right\Vert  =
\left\Vert L_{\vec{0}}(\mathbb{A})\right\Vert
\end{equation*}
and so
\begin{equation*}
\mu \left( L_{\vec{\lambda}}(\mathbb{A})\right) \geq  |\vec{\lambda}| -\left\Vert L_{\vec{0}}(\mathbb{A})\right\Vert .
\end{equation*}
This proves that each $\Lambda_{\epsilon}^{\mathrm{C}}(\mathbb{A})$
is also a bounded set.

For the quadratic case, we need to use 
\begin{equation*}
M_{\vec{\lambda}}(\mathbb{A})=\left[\begin{array}{c}
A_{1}-\lambda_{1}I\\
\vdots\\
A_{d}-\lambda_{d}I
\end{array}\right].
\end{equation*}
Since $M_{\vec{\lambda}}(\mathbb{A})^{*}M_{\vec{\lambda}}(\mathbb{A})=Q_{\vec{\lambda}}(\mathbb{A})$
we know $\mu_{\vec{\lambda}}^{Q}(\mathbb{A})=s_{\min}\left(M_{\vec{\lambda}}(\mathbb{A})\right)$
where now we use the theory of $s_{\min}$ as in \cite{cerjan2025multivariable_PS}. Now we find  
\begin{equation*}
\left\Vert M_{\vec{\lambda}}(\mathbb{A})-M_{\vec{\lambda}}(0,\dots,0)\right\Vert \leq \left\Vert M_{\vec{0}}(\mathbb{A})\right\Vert.
\end{equation*}
These sets are bounded also in the quadratic case, using now \cite[Lemma 4.3] {cerjan2025multivariable_PS}.
\end{proof}

The next two results show that the quadratic pseudospectrum is very closely connected to the problem of finding states that make the sum of variance small.

\begin{thm}
\label{thm:quad_PS_as_lower_bound}
Let $\mathcal{A}$ be a unital
$C^{*}$-algebra, let $\mathbb{A}:=(A_{1},\dots,A_{d})$ be a $d$-tuple
of elements from $\mathcal{A}$, and let $\psi\in\mathcal{S}(\mathcal{A})$
be a state. Then 
\begin{equation*}
\sum_{j=1}^{d}\Delta_{\psi}^{2}(A_{j})\geq\left(\mu_{\psi(\mathbb{A})}^{\mathrm{Q}}(\mathbb{A})\right)^{2}.
\end{equation*}
\end{thm}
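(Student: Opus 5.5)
The plan is to evaluate the state on the quadratic composite operator at the specific point $\vec{\lambda}=\psi(\mathbb{A})$, and then bound that value below by the spectral gap using Lemma~\ref{lem:bounds_on_state_values}. (The definition of $\mu^{\mathrm{Q}}_{\vec{\lambda}}(\mathbb{A})$ evidently intends $\sqrt{\mu(Q_{\vec{\lambda}}(\mathbb{A}))}$, and I read it that way.)

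First I set $\lambda_j=\psi(A_j)$. Then $A_j-\lambda_j I=\hat{A}_j$, so
\begin{equation*}
Q_{\psi(\mathbb{A})}(\mathbb{A})=\sum_{j=1}^{d}\hat{A}_j^{*}\hat{A}_j .
\end{equation*}
Applying $\psi$ and using linearity together with the identity $\Delta_\psi^2(A)=\psi(\hat{A}^*\hat{A})$ recorded in Section~2, I get
\begin{equation*}
\psi\bigl(Q_{\psi(\mathbb{A})}(\mathbb{A})\bigr)=\sum_{j=1}^{d}\psi(\hat{A}_j^{*}\hat{A}_j)=\sum_{j=1}^{d}\Delta_{\psi}^{2}(A_j).
\end{equation*}
So the left side of the theorem is exactly the state applied to $Q_{\psi(\mathbb{A})}(\mathbb{A})$.

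Next I note that $Q_{\psi(\mathbb{A})}(\mathbb{A})$ is a sum of elements of the form $B^*B$, hence positive. Lemma~\ref{lem:bounds_on_state_values} then gives $\psi(Q_{\psi(\mathbb{A})}(\mathbb{A}))\geq \mu(Q_{\psi(\mathbb{A})}(\mathbb{A}))$, and the right side equals $\bigl(\mu^{\mathrm{Q}}_{\psi(\mathbb{A})}(\mathbb{A})\bigr)^2$ by definition. Combining this with the display above yields the inequality.

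There is no real obstacle. The only points needing care are that $\lambda_j=\psi(A_j)$ may be complex when the $A_j$ are not Hermitian, which is harmless since $Q$ is defined with adjoints, and that the variance for non-Hermitian elements is taken to be $\psi(\hat{A}^*\hat{A})$, as in the paper.
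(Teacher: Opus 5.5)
Your proposal is correct and follows the paper's proof essentially verbatim: set $\vec{\lambda}=\psi(\mathbb{A})$ so that $\psi(Q_{\psi(\mathbb{A})}(\mathbb{A}))$ is exactly the sum of variances, then bound this value below via Lemma~\ref{lem:bounds_on_state_values} applied to the positive element $Q_{\psi(\mathbb{A})}(\mathbb{A})$. Your reading of the quadratic gap as $\sqrt{\mu(Q_{\vec{\lambda}}(\mathbb{A}))}$ is the intended one, since the $\mu$ is missing from the displayed definition.
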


\begin{proof}
Fix $\psi\in\mathcal{S}(\mathcal{A})$ and set $Q_{\psi}(\mathbb{A}):=Q_{\psi(\mathbb{A})}(\mathbb{A})$ and $\mu_\psi(\mathbb{A}):=\mu_{\psi(\mathbb{A})}^{\mathrm{Q}}(\mathbb{A}).$ Then
\begin{equation}
\psi(Q_{\psi}(\mathbb{A}))
=
\sum\limits _{j=1}^{d}\psi\left((A_{j}-\psi(A_{j}))^{*}(A_{j}-\psi(A_{j}))\right)
=
\sum\limits _{j=1}^{d}\Delta_{\psi}^{2}(A_{j}).\label{eq:psi_on_Q_is_sum_uncertainty}
\end{equation}
By Definition~\ref{def:quadratic_gap} and Lemma~\ref{lem:bounds_on_state_values}, we have $(\mu_{\psi}^{\mathrm{Q}}(\mathbb{A}))^{2}
 \leq
 \psi(Q_{\psi}(\mathbb{A}))$. 
\end{proof}

\begin{thm}
\label{thm:min_of QPS_is_min_uncertaintly}Let $\mathcal{A}$ be a unital
$C^{*}$-algebra and let $\mathbb{A}:=(A_{1},\dots,A_{d})$ be a $d$-tuple
of elements from $\mathcal{A}$. 
\begin{equation*}
\min\left\{ \left.\mu_{\vec{\lambda}}^{\mathrm{Q}}(\mathbb{A})\right|\vec{\lambda}\in\mathbb{C}^{d}\right\} 
=
\min
\Bigl\{ 
\Bigl.
\Bigl(
\sum_{j=1}^{d}\Delta_{\psi}^{2}(A_{j})
\Bigr)^{1/2} 
\Bigr| 
\psi\in\mathcal{S}(\mathcal{A})
\Bigr\}.
\end{equation*}
\end{thm}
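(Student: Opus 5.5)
We prove the two inequalities between the minima separately, and check that both minima are attained. Write $\mu_{\vec\lambda}^{\mathrm{Q}}(\mathbb{A})=\sqrt{\mu(Q_{\vec\lambda}(\mathbb{A}))}$, which is how Definition~\ref{def:quadratic_gap} is meant.

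\emph{Existence of the left minimum.} By Lemma~\ref{lem:gaps_are_continuous}, $\vec\lambda\mapsto\mu^{\mathrm{Q}}_{\vec\lambda}(\mathbb{A})$ is continuous. By the compactness theorem above, the sublevel sets $\Lambda^{\mathrm{Q}}_\epsilon(\mathbb{A})$ are compact. Take any $\epsilon$ for which $\Lambda^{\mathrm{Q}}_\epsilon(\mathbb{A})$ is nonempty. The infimum over $\mathbb{C}^d$ equals the infimum over this compact set, so it is attained at some $\vec\lambda_0$.

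\emph{The inequality ``$\leq$''.} Fix any state $\psi$. Theorem~\ref{thm:quad_PS_as_lower_bound} gives
\[
\sum_j\Delta^2_\psi(A_j)\geq\bigl(\mu^{\mathrm{Q}}_{\psi(\mathbb{A})}(\mathbb{A})\bigr)^2\geq\min_{\vec\lambda}\bigl(\mu^{\mathrm{Q}}_{\vec\lambda}(\mathbb{A})\bigr)^2 .
\]
Hence the left-hand minimum is at most every value on the right.

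\emph{The inequality ``$\geq$'' and attainment on the right.} Take the minimizer $\vec\lambda_0$. Apply Lemma~\ref{lem:bounds_on_state_values} to the positive element $Q_{\vec\lambda_0}(\mathbb{A})$. This gives a state $\varphi_0$ with $\varphi_0(Q_{\vec\lambda_0}(\mathbb{A}))=\mu(Q_{\vec\lambda_0}(\mathbb{A}))=(\mu^{\mathrm{Q}}_{\vec\lambda_0})^2$.

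The key step is to compare $\varphi_0(Q_{\vec\lambda_0})$ with the variance sum, whose centering is at $\varphi_0(\mathbb{A})$ rather than at $\vec\lambda_0$. Expanding $(A_j-\lambda I)^*(A_j-\lambda I)$ gives, for any state $\varphi$ and any $\lambda\in\mathbb{C}$,
\[
\varphi\bigl((A_j-\lambda I)^*(A_j-\lambda I)\bigr)=\varphi(A_j^*A_j)-|\varphi(A_j)|^2+|\varphi(A_j)-\lambda|^2 .
\]
This is the same computation as in Lemma~\ref{lem:Kadison_Scwharz_for_functionals}. Summing over $j$,
\[
\varphi_0(Q_{\vec\lambda_0})=\sum_j\Delta^2_{\varphi_0}(A_j)+|\varphi_0(\mathbb{A})-\vec\lambda_0|^2\geq\sum_j\Delta^2_{\varphi_0}(A_j).
\]
Therefore
\[
\sum_j\Delta^2_{\varphi_0}(A_j)\leq(\mu^{\mathrm{Q}}_{\vec\lambda_0})^2 .
\]
Combined with ``$\leq$'', this forces equality. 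It also shows that $\varphi_0$ attains the right-hand minimum, and incidentally that $\varphi_0(\mathbb{A})=\vec\lambda_0$.

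The main point needing care is this recentering identity, together with the existence of $\varphi_0$. Both come directly from the earlier lemmas, so no real obstacle is expected.
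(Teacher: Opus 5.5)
Your proof is correct and follows the paper's argument: take a minimizer $\vec\lambda_0$ of the quadratic gap, use Lemma~\ref{lem:bounds_on_state_values} to get a state with $\varphi_0(Q_{\vec\lambda_0}(\mathbb{A}))=\mu(Q_{\vec\lambda_0}(\mathbb{A}))$, recenter to compare with the variance sum, and use Theorem~\ref{thm:quad_PS_as_lower_bound} for the reverse inequality. Your version is slightly more complete than the paper's: you justify that the minimizer $\vec\lambda_0$ exists, and you show the right-hand minimum is attained (by $\varphi_0$) rather than assuming a variance-minimizing state exists, as the paper does.
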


\begin{proof}
Let $\vec{\nu}$ satisfy
\begin{equation*}
\mu_{\vec{\nu}}^{\mathrm{Q}}(\mathbb{A})
=
\min\left\{ \left.\mu_{\vec{\lambda}}^{\mathrm{Q}}(\mathbb{A})\right|\vec{\lambda}\in\mathbb{C}^{d}\right\} .
\end{equation*}
By Lemma~\ref{lem:bounds_on_state_values}, there exists $\psi\in\mathcal{S}\left(\mathcal{A}\right)$ such that
\begin{equation*}
\psi(Q_{\vec{\nu}}(\mathbb{A}))=\left(\mu_{\vec{\nu}}^{\mathrm{Q}}(\mathbb{A})\right)^{2}.
\end{equation*}
Then
\begin{align*}
\psi(Q_{\vec{\nu}}(\mathbb{A})) 
& =
\sum\limits_{j=1}^d\psi(A_{j}^{*}A_{j})-\nu_{j}\overline{\psi(A_j)}-\overline{\nu_{j}}\psi(A_j)+\overline{\nu_{j}}\nu_{j}I\\
 & =\sum\limits_{j=1}^d\Delta_{\psi}^{2}(A_{j})+\left|\,\left|\psi(A_{j})\right|-\nu_{j}I\right|^{2}.
\end{align*}
Hence,
\begin{equation*}
\psi(Q_{\vec{\nu}}(\mathbb{A}))
\geq
\sum\limits_{j=1}^d\Delta_{\psi}^{2}(A_{j}),
\end{equation*}
and so 
\begin{equation*}
\min\left\{ \left.\mu_{\vec{\lambda}}^{\mathrm{Q}}(\mathbb{A})\right|\vec{\lambda}\in\mathbb{C}^{d}\right\} 
=
\Bigl(\psi(Q_{\vec{\nu}}(\mathbb{A}))\Bigr)^{\frac{1}{2}}
\geq
\min\Bigl\{ \Bigl(\sum_{j=1}^{d}\Delta_{\varphi}^{2}(A_{j})\Bigr)^{\frac{1}{2}}\Bigr|\varphi\in\mathcal{S}(\mathcal{A})\Bigr\} .
\end{equation*}

Now let's work on the other needed inequality. Let $\psi$ be a state on $\mathcal{A}$ such that 
\begin{equation*}
\sum_{j=1}^{d}\Delta_{\psi}^{2}(A_{j})
=
\min\Bigl\{ \sum_{j=1}^{d}\Delta_{\varphi}^{2}(A_{j})\Bigr|\varphi\in\mathcal{S}(\mathcal{A})\Bigr\} .
\end{equation*}
We find 
\begin{equation*}
\left(\mu_{\psi}^{\mathrm{Q}}(\mathbb{A})\right)^{2}
\leq
\psi(Q_\psi(\mathbb{A}))
=
\sum_{j=1}^{d}\Delta_{\psi}^{2}(A_{j})
=
\min\left\{ \sum_{j=1}^{d}\Delta_{\varphi}^{2}(A_{j})\Bigr|\varphi\in\mathcal{S}(\mathcal{A})\right\} ,
\end{equation*}
which means 
\begin{equation*}
\min\left\{ \left.\left(\mu_{\vec{\lambda}}^{\mathrm{Q}}(\mathbb{A})\right)^{2}\right|\vec{\lambda}\in\mathbb{C}^{d}\right\} 
\leq
\min\left\{ \sum_{j=1}^{d}\Delta_{\varphi}^{2}(A_{j})\Bigr|\varphi\in\mathcal{S}\left(\mathcal{A}\right)\right\} .
\end{equation*}
\end{proof}

A special case of Theorem~\ref{thm:min_of QPS_is_min_uncertaintly} was found previously in \cite{schwonnek2017UC_in_noisy_systems}. Computing the minimum of the quadratic local gap can be an effective way to find a state-independent lower bound on uncertainty.
Looking back at the proof, we can see that every state which minimizes the sum of variances must also give minimal expectation  for the quadratic composite operator at a location that minimizes the quadratic gap. The converse also follows from re-examining the proof. Thus, we have the following. 

\begin{thm}
\label{thm:Finding_max_localized_states}The set of states $\psi$
on $\mathcal{A}$ that minimize $\sum\limits_{j=1}^d\Delta_{\psi}^{2}(A_{j})$
equals 
\begin{equation*}
\left\{ \psi\in\mathcal{S}(\mathcal{A})\left|\psi\left(\mu_{\vec{\lambda}}^{\mathrm{Q}}(\mathbb{A})\right)=\mu\left(Q_{\vec{\lambda}}(\mathbb{A})\right)\text{ and }\vec{\lambda}\in\Omega(\mathbb{A})\right.\right\},
\end{equation*}
where 
\begin{equation*}
\Omega(\mathbb{A})=\left\{ \vec{\lambda}\in\mathbb{C}^{d}\left|\mu_{\vec{\lambda}}^{\mathrm{Q}}(\mathbb{A})\text{ is a global minimum}\right.\right\} .
\end{equation*}
\end{thm}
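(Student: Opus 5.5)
The statement should be read with the condition $\psi(Q_{\vec{\lambda}}(\mathbb{A}))=\mu(Q_{\vec{\lambda}}(\mathbb{A}))=\bigl(\mu^{\mathrm{Q}}_{\vec{\lambda}}(\mathbb{A})\bigr)^{2}$ for some $\vec{\lambda}\in\Omega(\mathbb{A})$. The displayed $\psi(\mu^{\mathrm{Q}}_{\vec{\lambda}}(\mathbb{A}))$ is evidently shorthand for $\psi(Q_{\vec{\lambda}}(\mathbb{A}))$. With that reading, the plan is to prove the two inclusions by re-running the sandwich argument from the proof of Theorem~\ref{thm:min_of QPS_is_min_uncertaintly}.

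\textbf{Key identity.} For any state $\psi$ and any $\vec{\lambda}\in\mathbb{C}^d$, expanding $(A_j-\lambda_jI)^*(A_j-\lambda_jI)$ and using linearity and $\psi(A^*)=\overline{\psi(A)}$ gives
\begin{equation*}
\psi(Q_{\vec{\lambda}}(\mathbb{A}))=\sum_{j=1}^{d}\Delta_{\psi}^{2}(A_j)+\sum_{j=1}^{d}|\psi(A_j)-\lambda_j|^{2}.
\end{equation*}
This is the correct version of the computation appearing in that proof. Write $m:=\min_{\vec{\lambda}}\mu^{\mathrm{Q}}_{\vec{\lambda}}(\mathbb{A})$. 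By Theorem~\ref{thm:min_of QPS_is_min_uncertaintly}, $m^2$ is also the minimum over states of $\sum_j\Delta^2_\psi(A_j)$. We also use Lemma~\ref{lem:bounds_on_state_values}, in the form $\mu(Q_{\vec{\lambda}}(\mathbb{A}))\le\psi(Q_{\vec{\lambda}}(\mathbb{A}))$.

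\textbf{First inclusion.} Suppose $\psi$ minimizes the sum of variances, so the sum equals $m^2$. Take $\vec{\lambda}=\psi(\mathbb{A})$. The identity then has vanishing second sum, and we obtain
\begin{equation*}
m^2\le\bigl(\mu^{\mathrm{Q}}_{\psi(\mathbb{A})}(\mathbb{A})\bigr)^2\le\psi(Q_{\psi(\mathbb{A})}(\mathbb{A}))=\sum_j\Delta^2_\psi(A_j)=m^2.
\end{equation*}
All inequalities are therefore equalities. Hence $\psi(\mathbb{A})\in\Omega(\mathbb{A})$ and $\psi(Q_{\psi(\mathbb{A})}(\mathbb{A}))=\mu(Q_{\psi(\mathbb{A})}(\mathbb{A}))$, so $\psi$ lies in the displayed set.

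\textbf{Second inclusion.} Suppose $\vec{\lambda}\in\Omega(\mathbb{A})$ and $\psi(Q_{\vec{\lambda}}(\mathbb{A}))=\mu(Q_{\vec{\lambda}}(\mathbb{A}))=m^2$. The identity gives
\begin{equation*}
\sum_j\Delta^2_\psi(A_j)\le\psi(Q_{\vec{\lambda}}(\mathbb{A}))=m^2.
\end{equation*}
Since $m^2$ is the minimal possible variance sum, $\psi$ is a minimizer. As a byproduct, equality forces $\sum_j|\psi(A_j)-\lambda_j|^2=0$, that is, $\vec{\lambda}=\psi(\mathbb{A})$.

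\textbf{Main obstacle.} There is no analytic difficulty. The only care needed is in interpreting the statement and in getting the identity exactly right: the earlier displayed computation in the proof of Theorem~\ref{thm:min_of QPS_is_min_uncertaintly} contains typos, namely $\overline{\psi(A_j)}$ where $\psi(A_j^*)$ is meant, and $||\psi(A_j)|-\nu_j|$ where $|\psi(A_j)-\nu_j|$ is meant. Existence of minimizers on both sides is already supplied by Theorem~\ref{thm:min_of QPS_is_min_uncertaintly} together with compactness.
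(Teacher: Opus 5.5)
Your proof is correct and follows the paper's route: the paper justifies this theorem only by saying one should re-examine the sandwich argument from the proof of the preceding theorem (the minimal quadratic gap equals the minimal variance sum). You carry that argument out explicitly, using the corrected identity $\psi(Q_{\vec{\lambda}}(\mathbb{A}))=\sum_j\Delta^2_\psi(A_j)+\sum_j|\psi(A_j)-\lambda_j|^2$ together with $\mu(Q_{\vec{\lambda}}(\mathbb{A}))\le\psi(Q_{\vec{\lambda}}(\mathbb{A}))$, and your reading of $\psi(\mu^{\mathrm{Q}}_{\vec{\lambda}}(\mathbb{A}))$ as $\psi(Q_{\vec{\lambda}}(\mathbb{A}))$ for some $\vec{\lambda}\in\Omega(\mathbb{A})$ is the intended one.
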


\begin{rem}
In many cases, Theorem~\ref{thm:Finding_max_localized_states} is
sufficient for using numerical methods to identify all the states
that are maximally localized.  The issue is that
when $A_{j}$ are very large matrices, computing $\Omega(\mathbb{A})$
will be difficult. Part of the difficulty is that we don't know \emph{a
priori} the global minimum of the quadratic gap. Since we find, in
practice, that the global minimum of the Clifford gap is always zero,
it can be faster to first compute the compact set where the Clifford
gap is small, and then compute the quadratic gaps only on that compact
set.  
\end{rem}

\section{The main theorem} \label{sec:results_on_uncertainty}

\begin{lem}
\label{lem:L-Q-R}
If $\mathbb{A}:=(A_{1},\dots,A_{d})$ is a $d$-tuple of elements
from $\mathcal{A}$, a unital $C^{*}$-algebra then
\begin{equation*}
L(\mathbb{A})^{*}L(\mathbb{A})=Q(\mathbb{A})\otimes I_{N}+\sum_{j<k}\left[A_{j},A_{k}\right]_{{\textstyle *}}\otimes\Gamma_{j}\Gamma_{k}\label{eq:L_Q_R}
\end{equation*}
 where $N$ is as in Definition~\ref{def:spectral_localizer}.
\end{lem}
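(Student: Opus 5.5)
The plan is a direct expansion of the product, using only the Clifford relations on the $\Gamma_j$ and the multiplication rule $(X\otimes M)(Y\otimes M')=XY\otimes MM'$ in $\mathcal{A}\otimes\mathbf{M}_N$. Since each $\Gamma_j$ is Hermitian, the adjoint of the localizer is
\begin{equation*}
L(\mathbb{A})^{*}=\sum_{j=1}^{d}A_{j}^{*}\otimes\Gamma_{j}.
\end{equation*}
Hence
\begin{equation*}
L(\mathbb{A})^{*}L(\mathbb{A})=\sum_{j=1}^{d}\sum_{k=1}^{d}A_{j}^{*}A_{k}\otimes\Gamma_{j}\Gamma_{k}.
\end{equation*}

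First I will split this double sum into the diagonal terms $j=k$ and the off-diagonal terms $j\neq k$.

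For the diagonal terms, $\Gamma_j^2=I_N$ gives $\sum_j A_j^*A_j\otimes I_N$. This is exactly $Q(\mathbb{A})\otimes I_N$, where $Q(\mathbb{A})=Q_{\vec 0}(\mathbb{A})$.

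For the off-diagonal terms, I will pair the term with index $(j,k)$, $j<k$, with the term with index $(k,j)$. The anticommutation relation $\Gamma_k\Gamma_j=-\Gamma_j\Gamma_k$ gives
\begin{equation*}
A_{j}^{*}A_{k}\otimes\Gamma_{j}\Gamma_{k}+A_{k}^{*}A_{j}\otimes\Gamma_{k}\Gamma_{j}=\left(A_{j}^{*}A_{k}-A_{k}^{*}A_{j}\right)\otimes\Gamma_{j}\Gamma_{k}=\left[A_{j},A_{k}\right]_{*}\otimes\Gamma_{j}\Gamma_{k}.
\end{equation*}
Summing over $j<k$ then yields the stated identity.

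There is no real obstacle here. The only point needing care is bookkeeping of the adjoints. The starred factor must sit on the left, and the $*$-commutator is not symmetric in its two arguments. The pairing must therefore put $A_j^*A_k$, with $j<k$, as the positive term, so that the result matches the convention $[A,B]_*=A^*B-B^*A$. We also use that the adjoint on $\mathcal{A}\otimes\mathbf{M}_N$ satisfies $(X\otimes M)^*=X^*\otimes M^*$, which is immediate in $\mathbf{M}_N(\mathcal{A})$.
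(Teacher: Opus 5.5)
Your proposal is correct and follows the paper's proof essentially step for step: you expand $L(\mathbb{A})^{*}L(\mathbb{A})$ as a double sum, use $\Gamma_j^2=I_N$ on the diagonal to get $Q(\mathbb{A})\otimes I_N$, and pair the $(j,k)$ and $(k,j)$ off-diagonal terms via anticommutation to get $[A_j,A_k]_{*}\otimes\Gamma_j\Gamma_k$. Your explicit justification of $L(\mathbb{A})^{*}=\sum_j A_j^{*}\otimes\Gamma_j$ from the Hermiticity of the $\Gamma_j$ is a small detail the paper leaves implicit.
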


\begin{proof}
The product 
\begin{equation*}
L(\mathbb{A})^{*}L(\mathbb{A})
=
\left(\sum_{j}A_{j}^{*}\otimes\Gamma_{j}\right)\left(\sum_{k}A_{k}\otimes\Gamma_{k}\right)
\end{equation*}
is the sum of 
\begin{equation*}
\sum_{j}A_{j}^{*}A_{j}\otimes\Gamma_{j}^{2}
\end{equation*}
and
\begin{equation*}
\sum_{j<k}\left(A_{j}^{*}A_{k}\otimes\Gamma_{j}\Gamma_{k}+A_{k}^{*}A_{j}\otimes\Gamma_{k}\Gamma_{j}\right).
\end{equation*}
Since $\Gamma_{j}^{2}=I_{N}$ for all $1\leq j\leq d$, the first summand equals
$Q(\mathbb{A})\otimes I_{N}$.  Since
 $\Gamma_{k}\Gamma_{j}=-\Gamma_{j}\Gamma_{k}$ for all $1\leq i,j\leq d$, the second summand simplifies to
\begin{equation*}
\sum_{j<k}\left(A_{j}^{*}A_{k}\otimes\Gamma_{j}\Gamma_{k}+A_{k}^{*}A_{j}\otimes\Gamma_{k}\Gamma_{j}\right)
=
\sum_{j<k}\left(A_{j}^{*}A_{k}-A_{k}^{*}A_{j}\right)\otimes\Gamma_{j}\Gamma_{k}.
\end{equation*}
\end{proof}

Now, given a product state $\psi\otimes\varphi$, with arbitrary $\psi\in\mathcal{S}(\mathcal{A})$ and $\varphi\in\mathcal{S}(\mathbf{M}_{n})$, let's take the expected value of each of the terms in (\ref{eq:L_Q_R}).
We find that
\begin{equation}
\left(\psi\otimes\varphi\right)\left(Q(\mathbb{A})\otimes I_{N}\right)
=
\psi\left(Q(\mathbb{A})\right)\label{eq:product_state_on_Q},
\end{equation}
which we know  relates to the sum of variances
with respect to $\psi$, by Theorem ~\ref{thm:quad_PS_as_lower_bound}. Also, 
\begin{equation*}
\left(\psi\otimes\varphi\right)\left(\left(L(\mathbb{A})\right)^{*}\left(L(\mathbb{A})\right)\right)
\geq
\mu_{\vec{0}}^{\mathrm{C}}(\mathbb{A})\label{eq:product_state_on_L},
\end{equation*}
so that is promising since the Clifford pseudospectrum is now understood in many settings in physics. We now examine the third term in (\ref{eq:L_Q_R}) and discover
\begin{equation} \label{eq:product_state_on_R}
\begin{split}
\left(\psi\otimes\varphi\right)\left(\sum_{j<k}\left[A_{j},A_{k}\right]_{{\textstyle *}}\otimes\Gamma_{j}\Gamma_{k}\right) & =\sum_{j<k}\psi\left(i\left[A_{j},A_{k}\right]_{{\textstyle *}}\right)\varphi\left(-i\Gamma_{j}\Gamma_{k}\right)
\\
 & =\varphi\left(\sum_{j<k}\psi\left(i\left[A_{j},A_{k}\right]_{{\textstyle *}}\right)\left(-i\Gamma_{j}\Gamma_{k}\right)\right).
\end{split}
\end{equation}

\begin{defn}
\label{def:R_psi-matrix} If $\mathbb{A}:=(A_{1},\dots,A_{d})$ is
a $d$-tuple of elements from $\mathcal{A}$, a unital $C^{*}$-algebra,
and $\psi\in\mathcal{S}(\mathcal{A})$ then we define $R_{\psi}(\mathbb{A})$
as the matrix
\begin{equation*}
R_{\psi}(\mathbb{A})
:=
\sum_{j<k}\psi\left(i\left[A_{j},A_{k}\right]_{{\textstyle *}}\right)\left(-i\Gamma_{j}\Gamma_{k}\right).
\end{equation*}
\end{defn}

This will always be Hermitian, regardless of hermiticity of the $A_j$. In the appendix we show that the spectrum $R_\psi(\mathbb{A})$ is always symmetric about the origin. For $d=3$, the spectrum contains just the two points
\begin{equation*}
\pm\sqrt{\alpha_{12}^{2}+\alpha_{13}^{2}+\alpha_{23}^{2}},
\end{equation*}
where $\alpha_{jk}=\psi\left(i\left[A_{j},A_{k}\right]_{{\textstyle *}}\right)$.
For $d=4$, the inelegant spectrum is now the four points 
\begin{equation*}
\pm\sqrt{\alpha_{12}^{2}+\alpha_{13}^{2}+\alpha_{14}^{2}+\alpha_{23}^{2}+\alpha_{24}^{2}+\alpha_{34}^{2}+2\alpha_{12}\alpha_{34}-2\alpha_{13}\alpha_{24}+2\alpha_{14}\alpha_{23}}
\end{equation*}
and
\begin{equation*}
\pm\sqrt{\alpha_{12}^{2}+\alpha_{13}^{2}+\alpha_{14}^{2}+\alpha_{23}^{2}+\alpha_{24}^{2}+\alpha_{34}^{2}-2\alpha_{12}\alpha_{34}+2\alpha_{13}\alpha_{24}-2\alpha_{14}\alpha_{23}}.
\end{equation*}
The formulas here can easily be derived with a computer algebra system, but the mess compounds as $d$ increases (see the supplemental file \texttt{R\_trace.nb}) \cite{Loring_Uncertainty_code_2026}.
This cleans up in some special cases, but in general we just use the following lemmas whose proofs appear in the appendix. Both lemmas are in regard to a general matrix of the form
\begin{equation}
R=\sum_{j<k}\alpha_{jk}\left(-i\Gamma_{j}\Gamma_{k}\right)\label{eq:Def_of_R}
\end{equation}
for arbitrary real scalars $\{\alpha_{jk}\}_{j<k}$.

\begin{lem}
\label{lem:R-matrix_when_zero} Let $R$ be as in (\ref{eq:Def_of_R}).
Then $R=0$ if, and only if, all the scalars $a_{jk}$ are zero.
\end{lem}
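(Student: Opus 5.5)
The plan is to show that the matrices $-i\Gamma_j\Gamma_k$ for $j<k$ are linearly independent over $\mathbb{R}$, in fact orthogonal for the trace inner product on $\mathbf{M}_N$. The ``if'' direction is immediate: if all $\alpha_{jk}=0$ then $R$ is the empty sum and equals $0$. For the converse we use the Hilbert--Schmidt pairing $\langle X,Y\rangle=\mathrm{Tr}(Y^{*}X)$ and aim to establish
\begin{equation*}
\mathrm{Tr}\bigl((-i\Gamma_{j}\Gamma_{k})^{*}(-i\Gamma_{l}\Gamma_{m})\bigr)=N\,\delta_{jl}\delta_{km}
\qquad (j<k,\ l<m).
\end{equation*}
Granting this, pairing $R=0$ with $-i\Gamma_l\Gamma_m$ gives $N\alpha_{lm}=0$ for every pair $l<m$, so all $\alpha_{lm}=0$.

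First, the diagonal terms. Since the $\Gamma$'s are Hermitian and anticommute, $(-i\Gamma_j\Gamma_k)^{*}=i\Gamma_k\Gamma_j=-i\Gamma_j\Gamma_k$. Squaring gives $(-i\Gamma_j\Gamma_k)^2=-\Gamma_j\Gamma_k\Gamma_j\Gamma_k=\Gamma_j^2\Gamma_k^2=I_N$, so each diagonal trace equals $\mathrm{Tr}(I_N)=N$.

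Next, the off-diagonal terms, with $\{j,k\}\neq\{l,m\}$. Up to a scalar, the product is $\Gamma_j\Gamma_k\Gamma_l\Gamma_m$, and we cancel repeated indices using $\Gamma^2=I$ and anticommutation. There are two cases.
\begin{itemize}
\item If the pairs share exactly one index, the product reduces, up to sign, to $\Gamma_p\Gamma_q$ with $p\neq q$. Then
\begin{equation*}
\mathrm{Tr}(\Gamma_p\Gamma_q)=\mathrm{Tr}(\Gamma_q\Gamma_p)=-\mathrm{Tr}(\Gamma_p\Gamma_q)
\end{equation*}
by cyclicity and anticommutation, so the trace is zero.
\item If the pairs are disjoint, the product is $\Gamma_j\Gamma_k\Gamma_l\Gamma_m$ with four distinct indices. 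Choose $\Gamma_j$, which anticommutes with each of $\Gamma_k,\Gamma_l,\Gamma_m$. Moving it around cyclically, $\mathrm{Tr}(\Gamma_j X)=\mathrm{Tr}(X\Gamma_j)$ with $X=\Gamma_k\Gamma_l\Gamma_m$. Passing $\Gamma_j$ through the three factors of $X$ picks up the sign $(-1)^3$, so $\mathrm{Tr}(X\Gamma_j)=-\mathrm{Tr}(\Gamma_j X)$ and the trace vanishes.
\end{itemize}

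The main subtlety is that last case. Conjugating by a single $\Gamma$ does not always kill a trace: for an even-length product, an index inside the product commutes with it. The point is that we conjugate by the first factor itself and count sign flips against the remaining odd number of factors. This argument works for any $N$ and does not require the irreducible choice $N=2^{\lfloor (d-1)/2\rfloor}$. The same orthogonality argument also works for complex coefficients.
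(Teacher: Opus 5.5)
Your proof is correct, but it takes a different route from the paper.

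\textbf{The paper's route.} It argues structurally. It uses Lemma~\ref{lem:R-matrix_Lemma1} (the spectrum of $R$, as a set, does not depend on the Clifford representation) to pass to a representation in which the $\gamma_j$ generate a copy of the full Clifford algebra. It then quotes the standard fact that the ordered monomials $\gamma_{j_1}\cdots\gamma_{j_m}$ form a basis, so the $-i\gamma_j\gamma_k$ are linearly independent.

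\textbf{Your route.} You prove exactly the independence that is needed, via Hilbert--Schmidt orthogonality. You use only $\Gamma_j^2=I$, anticommutation and cyclicity of the trace. Your case analysis (equal pairs, one shared index, disjoint pairs) is complete and the signs are computed correctly.

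\textbf{What your approach buys.}
\begin{itemize}
\item It is self-contained and works in whatever representation is given. By contrast, for odd $d$ the irreducible representation is not faithful on the $2^d$-dimensional Clifford algebra, since $\gamma_1\cdots\gamma_d$ is a scalar there. So the paper's ``we may assume'' must be read as passing to a faithful, reducible representation such as $\gamma_j\oplus(-\gamma_j)$, which Lemma~\ref{lem:R-matrix_Lemma1} permits.
\item You never route through the spectrum, which implicitly uses that $R$ is Hermitian. That is why complex coefficients cost you nothing.
\item It yields a quantitative byproduct. For real coefficients, $\mathrm{Tr}(R^{2})=N\sum_{j<k}\alpha_{jk}^{2}$, hence $\|R\|\geq\bigl(\sum_{j<k}\alpha_{jk}^{2}\bigr)^{1/2}$ for every $d$, with equality for $d\leq 3$. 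The basis argument does not provide this general root-sum-of-squares bound.
\end{itemize}
The paper's route is shorter once the basis theorem is accepted, and it places the lemma inside the full monomial structure of the Clifford algebra.

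\textbf{A small slip in your closing commentary.} For an even-length product of distinct generators, a generator appearing in the product anticommutes with it; this is exactly what your disjoint case uses. It is a generator not appearing in the product that commutes with it. This does not affect the proof.
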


\begin{lem}
\label{lem:R_matrix_symmetric_spectrum} If $R$ is as in (\ref{eq:Def_of_R}), then the spectrum of $R$ is symmetric about $0$. In particular, $-\|R\|$ is always an eigenvalue of $R$.
\end{lem}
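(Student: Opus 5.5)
The plan is to find a unitary (or at least a matrix) $U\in\mathbf{M}_N$ that anticommutes with $R$. Then $URU^{-1}=-R$, so $R$ and $-R$ are similar and have the same spectrum with multiplicities, which gives the symmetry of the spectrum.

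The natural candidate is one of the generators themselves, and the key observation is how a single $\Gamma_m$ interacts with a product $\Gamma_j\Gamma_k$ for $j<k$. If $m\notin\{j,k\}$, then $\Gamma_m$ anticommutes with both factors and so commutes with $\Gamma_j\Gamma_k$. If $m\in\{j,k\}$, say $m=j$, then
\[
\Gamma_j(\Gamma_j\Gamma_k)\Gamma_j=\Gamma_k\Gamma_j=-\Gamma_j\Gamma_k ,
\]
so $\Gamma_m$ anticommutes with $\Gamma_j\Gamma_k$. The case $m=k$ is the same. Hence a single $\Gamma_m$ does not work in general. We need a $U$ that anticommutes with every $\Gamma_j\Gamma_k$.

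The approach therefore splits by the parity of $d$, and I expect this to be the main obstacle.

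\emph{Odd $d$.} When the irreducible choice $N=2^{\lfloor (d-1)/2\rfloor}$ is in force, the product $\Gamma_1\cdots\Gamma_d$ is a scalar multiple of the identity, so it is useless. Instead, I would realize $\Gamma_1,\dots,\Gamma_{d-1}$, together with an extra element, inside a larger Clifford algebra.

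\emph{General construction.} A cleaner route that handles all $d$ at once is to use complex conjugation or transposition rather than a unitary. Since $R$ is Hermitian, $R^{T}=\overline{R}$ has the same spectrum as $R$ (its eigenvalues are real). So it suffices to find a unitary $U$ with
\[
U\,\overline{\Gamma_j\Gamma_k}\,U^{*}=-\Gamma_j\Gamma_k .
\]
With the $\alpha_{jk}$ real and the factor $-i$ conjugating to $+i$, this gives $U\overline{R}U^{*}=-R$. In the standard tensor-product-of-Pauli-matrices realization, each $\Gamma_j$ is real symmetric or imaginary antisymmetric. I would look for $U=\prod_{j\in S}\Gamma_j$, where $S$ is the set of indices of real $\Gamma_j$ (or its complement), and check the sign case by case. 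This is the fiddly step: for each pair $j<k$ one must confirm that the conjugation sign times the commutation sign of $U$ with $\Gamma_j\Gamma_k$ gives $+1$, so that the relation $U\overline{\Gamma_j\Gamma_k}U^{*}=-\Gamma_j\Gamma_k$ holds. If the count fails for some $d\bmod 8$, I would fall back to this alternative: use a $U$ satisfying $U\Gamma_jU^{*}=\overline{\Gamma_j}$ for all $j$ (the charge conjugation matrix), and check that $-i\Gamma_j\Gamma_k$ is mapped to its negative conjugate.

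\emph{Fallback via traces.} Should both attempts stall, I would show that $\operatorname{Tr}(R^{p})=0$ for every odd $p$, which forces the characteristic polynomial to have the symmetry $\chi(-x)=(-1)^N\chi(x)$. Expanding $R^{p}$ gives sums of products of an odd number $p$ of pairs $\Gamma_j\Gamma_k$, each contributing a product of $2p$ generators. A product of generators has nonzero trace only if each index appears an even number of times, or, for odd $d$ and irreducible $N$, if the product is proportional to $\Gamma_1\cdots\Gamma_d$, which has odd length. After reducing to the identity, a real coefficient is multiplied by $(-i)^p$, which is purely imaginary when $p$ is odd. But $\operatorname{Tr}(R^{p})$ is real because $R$ is Hermitian, so it must vanish. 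The length-$d$ term needs a separate check in the odd case.

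Finally, symmetry of the spectrum gives $-\|R\|\in\sigma(R)$. Since $R$ is Hermitian, $\|R\|$ or $-\|R\|$ is an eigenvalue, and by symmetry both are.
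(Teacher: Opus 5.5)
Your main conjugation route has a sign error that breaks it as written. The map $X\mapsto U\overline{X}U^{*}$ is conjugate-linear, so it sends $-i\Gamma_j\Gamma_k$ to $i\,U\overline{\Gamma_j\Gamma_k}U^{*}$. With your displayed relation $U\overline{\Gamma_j\Gamma_k}U^{*}=-\Gamma_j\Gamma_k$ this gives $U\overline{R}U^{*}=+R$, which says nothing. Worse, for $d\geq 3$ that relation cannot hold for all pairs. The map is multiplicative and $(\Gamma_1\Gamma_2)(\Gamma_2\Gamma_3)=\Gamma_1\Gamma_3$, so the signs would multiply to $+1$ rather than $-1$. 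The same observation shows that your first idea, a single unitary anticommuting with every $\Gamma_j\Gamma_k$, cannot exist.

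The correct target is $U\overline{\Gamma_j\Gamma_k}U^{*}=+\Gamma_j\Gamma_k$. It follows from $U\overline{\Gamma_j}U^{*}=\epsilon\Gamma_j$ with one sign $\epsilon$ for all $j$. In the Pauli-tensor realization, your candidate $U=\prod_{j\in S}\Gamma_j$ (with $S$ the indices of the real $\Gamma_j$) satisfies this with $\epsilon=(-1)^{|S|+1}$, so no case analysis modulo $8$ is needed. That is essentially the paper's proof. The paper uses $A^{\sharp}=-ZA^{\top}Z$ and makes every generator anti-self-dual, so that $(\gamma_j\gamma_k)^{\sharp}=-\gamma_j\gamma_k$ and $R^{\sharp}=-R$. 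It then relies on Lemma~\ref{lem:R-matrix_Lemma1} for other representations.

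Your charge-conjugation variant is the same idea, but as stated ($U\Gamma_jU^{*}=\overline{\Gamma_j}$) it fails for $d\equiv 3\pmod 4$ in irreducible representations. No such $U$ exists for the Pauli matrices, whereas $U=\sigma_y$ gives $U\sigma_jU^{*}=-\overline{\sigma_j}$. You must allow either uniform sign; both give $U\Gamma_j\Gamma_kU^{*}=\overline{\Gamma_j\Gamma_k}$, hence $URU^{*}=-\overline{R}$. Existence of one of them needs the usual argument: uniqueness of the irreducible representation for even $d$, and matching the central element $\Gamma_1\cdots\Gamma_d$ for odd $d$. The paper's insistence on \emph{anti}-self-duality has the mirror-image defect for $d\equiv 1\pmod 4$. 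For $d=5$, $N=4$ it would force the scalar $\Gamma_1\cdots\Gamma_5$ to equal its own negative.

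Your trace fallback, by contrast, is a complete and correct proof, genuinely different from the paper's, and you should lead with it. Each monomial of $R^{p}$ is a product of $2p$ generators, which reduces to $\pm\Gamma_{s_1}\cdots\Gamma_{s_r}$ with $s_1<\cdots<s_r$ and $r$ even. If $r>0$, conjugation by $\Gamma_{s_1}$ negates this monomial, so its trace is $0$. Hence $\mathrm{Tr}(R^{p})\in i\mathbb{R}$ for odd $p$, and Hermiticity forces $\mathrm{Tr}(R^{p})=0$. The length-$d$ term you flag never arises, because $r$ is even while $d$ is odd. Newton's identities then give $\det(xI-R)=\det(xI+R)$.

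Compared with the paper, this works verbatim in every Clifford representation, reducible or not. It gives symmetry with multiplicities rather than just as a set, and needs no special representation or conjugation matrix. What the paper's route offers in exchange, once the sign issue is handled, is an explicit symmetry. From $ZR^{\top}Z^{-1}=-R$, the vector $Z\overline{\mathbf{v}}$ is a $(-\lambda)$-eigenvector whenever $R\mathbf{v}=\lambda\mathbf{v}$.
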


The main theorem will follow from the following lemma.

\begin{lem}
\label{lem:Techincal_Inequality} 
Let $\mathcal{A}$ be a unital $C^{*}$-algebra, let $\mathbb{A}:=(A_{1},\dots,A_{d})$ be a $d$-tuple of elements
from $\mathcal{A}$, and let $\vec{\lambda}\in\mathbb{C}^{d}$, and
let $\psi\in\mathcal{S}(\mathcal{A})$ be a state. Let
\begin{equation*}
\mathbb{A}-\vec{\lambda}I=(A_{1}-\lambda_{1}I,\dots,A_{d}-\lambda_{d}I).
\end{equation*}
If $R_{\psi}(\mathbb{A}-\vec{\lambda}I)$ is as in Definition~\ref{def:R_psi-matrix} then
\begin{equation}
\psi\left(Q_{\vec{\lambda}}\left(\mathbb{A}\right)\right)\geq\left(\mu_{\vec{\lambda}}^{\mathrm{C}}(\mathbb{A})\right)^{2}+\bigl\Vert R_{\psi}(\mathbb{A}-\vec{\lambda}I)\bigr\Vert.
\label{eq:general_equation}
\end{equation}
\end{lem}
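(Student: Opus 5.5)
Set $B_j:=A_j-\lambda_j I$ and $\mathbb{B}:=\mathbb{A}-\vec{\lambda}I$. Then $L_{\vec{\lambda}}(\mathbb{A})=L(\mathbb{B})$ and $Q_{\vec{\lambda}}(\mathbb{A})=Q(\mathbb{B})$. Applying Lemma~\ref{lem:L-Q-R} to $\mathbb{B}$ gives
\begin{equation*}
L(\mathbb{B})^{*}L(\mathbb{B})=Q(\mathbb{B})\otimes I_{N}+\sum_{j<k}[B_j,B_k]_{*}\otimes\Gamma_j\Gamma_k .
\end{equation*}
We evaluate both sides on a product state $\psi\otimes\varphi$, keeping $\psi$ fixed and choosing $\varphi\in\mathcal{S}(\mathbf{M}_N)$ well.

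By (\ref{eq:product_state_on_Q}), the first term on the right contributes $\psi(Q(\mathbb{B}))$. By the computation in (\ref{eq:product_state_on_R}), the second term contributes $\varphi\bigl(R_\psi(\mathbb{B})\bigr)$. To check the bookkeeping, note that $[B_j,B_k]_*=-i\cdot i[B_j,B_k]_*$, so
\begin{equation*}
[B_j,B_k]_*\otimes\Gamma_j\Gamma_k = i[B_j,B_k]_*\otimes(-i\Gamma_j\Gamma_k).
\end{equation*}
The left side is positive with spectral gap $\bigl(\mu^{\mathrm{C}}_{\vec\lambda}(\mathbb{A})\bigr)^2$, by Definition~\ref{def:quadratic_gap}. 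Lemma~\ref{lem:bounds_on_state_values} therefore gives
\begin{equation*}
(\psi\otimes\varphi)\bigl(L(\mathbb{B})^*L(\mathbb{B})\bigr)\ge \bigl(\mu^{\mathrm{C}}_{\vec\lambda}(\mathbb{A})\bigr)^2 .
\end{equation*}
Here we need $\psi\otimes\varphi$ to be a genuine state on $\mathcal{A}\otimes\mathbf{M}_N$. This holds because product states of states on a $C^*$-algebra and on $\mathbf{M}_N$ are positive, for example by writing $\varphi$ as a convex combination of vector states and viewing $\mathbf{M}_N(\mathcal{A})$ concretely. Combining the three evaluations yields
\begin{equation*}
\psi\bigl(Q_{\vec\lambda}(\mathbb{A})\bigr)\ge \bigl(\mu^{\mathrm{C}}_{\vec\lambda}(\mathbb{A})\bigr)^2-\varphi\bigl(R_\psi(\mathbb{B})\bigr)
\end{equation*}
for every state $\varphi$ on $\mathbf{M}_N$.

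It remains to choose $\varphi$ so that $-\varphi(R_\psi(\mathbb{B}))=\|R_\psi(\mathbb{B})\|$. The matrix $R_\psi(\mathbb{B})$ is Hermitian. Hermiticity follows because $i[B_j,B_k]_*$ is Hermitian, so $\psi$ of it is real, and $-i\Gamma_j\Gamma_k$ is Hermitian for $j\ne k$ by anticommutation. Its coefficients $\alpha_{jk}=\psi(i[B_j,B_k]_*)$ are real, so $R_\psi(\mathbb{B})$ has the form (\ref{eq:Def_of_R}). Lemma~\ref{lem:R_matrix_symmetric_spectrum} then says $-\|R_\psi(\mathbb{B})\|$ is an eigenvalue. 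Let $\mathbf{u}$ be a unit eigenvector for it and take $\varphi(M)=\langle M\mathbf{u},\mathbf{u}\rangle$. This gives $\varphi(R_\psi(\mathbb{B}))=-\|R_\psi(\mathbb{B})\|$, and substituting yields (\ref{eq:general_equation}).

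The main obstacle is the spectral symmetry supplied by Lemma~\ref{lem:R_matrix_symmetric_spectrum}. Without it we could only reach $-\min\sigma(R)$, which might be smaller than $\|R\|$. Since that lemma may be assumed, the remaining care is in tracking the $\pm i$ factors and confirming that the product state is positive.
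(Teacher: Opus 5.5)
Your proposal is correct and follows the paper's proof essentially step for step. You expand $L_{\vec\lambda}(\mathbb{A})^*L_{\vec\lambda}(\mathbb{A})$ via Lemma~\ref{lem:L-Q-R}, evaluate on the product state $\psi\otimes\varphi$, bound the localizer term below by $\bigl(\mu^{\mathrm{C}}_{\vec\lambda}(\mathbb{A})\bigr)^2$, and take $\varphi$ to be the vector state of an eigenvector for the eigenvalue $-\|R_\psi(\mathbb{A}-\vec\lambda I)\|$ supplied by Lemma~\ref{lem:R_matrix_symmetric_spectrum}; your extra checks (positivity of the product state, Hermiticity of $R_\psi$ with real coefficients, and the $\pm i$ bookkeeping) fill in details the paper leaves implicit.
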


\begin{proof}
For any state state $\varphi$ on $\mathbf{M}_{N}$ we have
\begin{equation*}
\psi\left(Q_{\vec{\lambda}}\left(\mathbb{A}\right)\right)=\left(\psi\otimes\varphi\right)\left(Q_{\vec{\lambda}}\left(\mathbb{A}\right)\otimes I\right).
\end{equation*}
By Lemma~\ref{lem:L-Q-R} we have
\begin{equation*}
\psi\left(Q_{\vec{\lambda}}\left(\mathbb{A}\right)\right)=\left(\psi\otimes\varphi\right)\left(L_{\vec{\lambda}}\left(\mathbb{A}\right)^{*}L_{\vec{\lambda}}\left(\mathbb{A}\right)^{*}\right)-\left(\psi\otimes\varphi\right)\left(\sum_{j<k}\bigl[A_{j}-\lambda_{j}I,A_{k}-\lambda_{k}I\bigr]_{{\textstyle *}}\otimes\Gamma_{j}\Gamma_{k}\right).
\end{equation*}
For any $\varphi$ we have
\begin{equation*}
\left(\psi\otimes\varphi\right)\left(L_{\vec{\lambda}}\left(\mathbb{A}\right)^{*}L_{\vec{\lambda}}\left(\mathbb{A}\right)^{*}\right)\geq\left(\mu_{\vec{\mathbf{\lambda}}}^{\mathrm{C}}(\mathbb{A})\right)^{2}.
\end{equation*}
Also, by (\ref{eq:product_state_on_R}) we have
\begin{equation*}
\left(\psi\otimes\varphi\right)\left(\sum_{j<k}\bigl[A_{j}-\lambda_{j}I,A_{k}-\lambda_{k}I\bigr]_{{\textstyle *}}\otimes\Gamma_{j}\Gamma_{k}\right)=\varphi\left(R_{\psi}(\mathbb{A}-\vec{\lambda}I)\right).
\end{equation*}
There is a choice of $\varphi$ such that 
\begin{equation*}
\left(\psi\otimes\varphi\right)\left(\sum_{j<k}\bigl[A_{j}-\lambda_{j}I,A_{k}-\lambda_{k}I\bigr]_{{\textstyle *}}\otimes\Gamma_{j}\Gamma_{k}\right)=-\left\Vert R_{\psi}(\mathbb{A}-\vec{\lambda}I)\right\Vert 
\end{equation*}
where 
\begin{equation*}
\mathbb{A}-\vec{\lambda}I=(A_{1}-\lambda_{1}I,\dots,A_{d}-\lambda_{d}I).
\end{equation*}
This proves (\ref{eq:general_equation}).
\end{proof}

\begin{thm}
\label{thm:Main_result} 
Let $\mathcal{A}$ be a unital $C^{*}$-algebra,
let $\mathbb{A}:=(A_{1},\dots,A_{d})$ be a $d$-tuple of elements
from $\mathcal{A}$, and let $\vec{\lambda}\in\mathbb{C}^{d}$, and
let $\psi\in\mathcal{S}(\mathcal{A})$ be a state. Let
\begin{equation*}
\mathbb{A}-\vec{\lambda}I=(A_{1}-\lambda_{1}I,\dots,A_{d}-\lambda_{d}I).
\end{equation*}
With these assumptions, 
\begin{equation*}
\sum_{j=1}^{d}\Delta_{\psi}^{2}(A_{j})\geq\left(\mu_{\psi(\mathbb{A})}^{\mathrm{C}}(\mathbb{A})\right)^{2}+\bigl\Vert R_{\psi}(\hat{\mathbb{A}})\bigr\Vert.
\end{equation*}
and
\begin{equation*}
\left(\mu_{\vec{\lambda}}^{\mathrm{Q}}(\mathbb{A})\right)^{2}\geq\left(\mu_{\vec{\lambda}}^{\mathrm{C}}(\mathbb{A})\right)^{2}+\bigl\Vert R_{\psi}(\mathbb{A}-\vec{\lambda}I)\bigr\Vert .
\end{equation*}
\end{thm}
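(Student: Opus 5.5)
The first inequality should follow directly from Lemma~\ref{lem:Techincal_Inequality} by choosing $\vec{\lambda}=\psi(\mathbb{A})$. With that choice, $Q_{\psi(\mathbb{A})}(\mathbb{A})=\sum_j\hat{A}_j^{*}\hat{A}_j$, and (\ref{eq:psi_on_Q_is_sum_uncertainty}) gives $\psi\bigl(Q_{\psi(\mathbb{A})}(\mathbb{A})\bigr)=\sum_j\Delta_\psi^2(A_j)$. Also $\mathbb{A}-\psi(\mathbb{A})I=\hat{\mathbb{A}}$, so the right-hand side of (\ref{eq:general_equation}) is exactly $\bigl(\mu^{\mathrm{C}}_{\psi(\mathbb{A})}(\mathbb{A})\bigr)^2+\|R_\psi(\hat{\mathbb{A}})\|$. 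Nothing beyond this substitution is needed.

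For the second inequality the plan is to apply Lemma~\ref{lem:Techincal_Inequality} again, but with a state attaining the quadratic gap. Lemma~\ref{lem:bounds_on_state_values}, applied to the positive element $Q_{\vec{\lambda}}(\mathbb{A})$, gives $\psi_0\in\mathcal{S}(\mathcal{A})$ with $\psi_0(Q_{\vec{\lambda}}(\mathbb{A}))=\mu(Q_{\vec{\lambda}}(\mathbb{A}))=\bigl(\mu^{\mathrm{Q}}_{\vec{\lambda}}(\mathbb{A})\bigr)^2$. Inserting $\psi_0$ into (\ref{eq:general_equation}) then yields
\begin{equation*}
\bigl(\mu^{\mathrm{Q}}_{\vec{\lambda}}(\mathbb{A})\bigr)^2\geq\bigl(\mu^{\mathrm{C}}_{\vec{\lambda}}(\mathbb{A})\bigr)^2+\bigl\Vert R_{\psi_0}(\mathbb{A}-\vec{\lambda}I)\bigr\Vert .
\end{equation*}
Taking the infimum of the left side over $\vec{\lambda}$ and comparing with Theorem~\ref{thm:min_of QPS_is_min_uncertaintly} would also give the containment $\Lambda^{\mathrm{Q}}_\epsilon(\mathbb{A})\subseteq\Lambda^{\mathrm{C}}_\epsilon(\mathbb{A})$ promised in the abstract.

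The main obstacle is the phrase ``$\psi$ arbitrary'' in the second inequality. The left side does not depend on $\psi$, so the inequality for every $\psi$ would amount to $\mu(Q_{\vec{\lambda}})\geq\mu(L_{\vec{\lambda}}^{*}L_{\vec{\lambda}})+\sup_\psi\|R_\psi\|$. I do not expect this to hold, and I would first test it on a direct sum. Take $A_1=0\oplus\sigma_x$ and $A_2=0\oplus\sigma_y$ in $\mathbf{M}_3$, with $\vec{\lambda}=0$. Then $\mu(Q_{\vec 0})=0$ because of the zero block, and $\mu^{\mathrm{C}}_{\vec 0}=0$. However, a state $\psi$ supported on the $\sigma$-block with $\psi(0\oplus\sigma_z)=1$ gives $\alpha_{12}=\psi(i[A_1,A_2])=-2$, so $\|R_\psi\|=2$ and the inequality would read $0\geq 2$.

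I would therefore check this example carefully. If it holds up, I would prove the second inequality in the only form the argument supports: for a state $\psi$ achieving $\psi(Q_{\vec{\lambda}}(\mathbb{A}))=\mu(Q_{\vec{\lambda}}(\mathbb{A}))$, equivalently with $\psi(Q_{\vec{\lambda}}(\mathbb{A}))$ in place of $\bigl(\mu^{\mathrm{Q}}_{\vec{\lambda}}(\mathbb{A})\bigr)^2$. That is exactly Lemma~\ref{lem:Techincal_Inequality}. I would record the counterexample to show that this restriction is necessary.
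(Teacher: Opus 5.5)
Your proof is the paper's proof. The first inequality is Lemma~\ref{lem:Techincal_Inequality} at $\vec{\lambda}=\psi(\mathbb{A})$. For the second, the paper also sets the given $\psi$ aside and selects a state with $\psi(Q_{\vec{\lambda}}(\mathbb{A}))=\mu(Q_{\vec{\lambda}}(\mathbb{A}))$, so it too only establishes the inequality for such attaining states.

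Your counterexample is correct: $Q_{\vec{0}}=0\oplus 2I_2$, $L_{\vec{0}}$ vanishes on $\mathbb{C}e_1\otimes\mathbb{C}^N$ (with $e_1$ spanning the zero block), and $\|R_\psi\|=|\psi(i[A_1,A_2])|=2$ for the vector state at the $+1$ eigenvector of $\sigma_z$. So the second inequality is false for arbitrary $\psi$, and your restricted version is the right statement. One small correction: the containment $\Lambda^{\mathrm{Q}}_\epsilon(\mathbb{A})\subseteq\Lambda^{\mathrm{C}}_\epsilon(\mathbb{A})$ follows pointwise from $\mu^{\mathrm{Q}}_{\vec{\lambda}}(\mathbb{A})\geq\mu^{\mathrm{C}}_{\vec{\lambda}}(\mathbb{A})$, with no infimum over $\vec{\lambda}$ needed.
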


\begin{proof}
These both follow from Lemma~\ref{lem:Techincal_Inequality}. To
prove the first inequality, we let $\vec{\lambda}=\psi(\mathbb{A})$
for the given state $\psi$. Then by Lemma~\ref{thm:quad_PS_as_lower_bound} we have
\begin{equation*}
\psi\left(Q_{\psi(\mathbb{A})}\left(\mathbb{A}\right)\right)=\sum_{j=1}^{d}\Delta_{\psi}^{2}(A_{j})
\end{equation*}
 so the result follows. For the second, we are given $\vec{\lambda}$
and can select $\psi$ so that 
\begin{equation*}
\psi\left(Q_{\vec{\lambda}}\left(\mathbb{A}\right)\right)=\mu\left(Q_{\vec{\lambda}}\left(\mathbb{A}\right)\right)=\left(\mu_{\vec{\lambda}}^{\mathrm{Q}}(\mathbb{A})\right)^{2}.
\end{equation*}
\end{proof}

\section{Special cases}

Both summands in the lower-bounds in our main theorem simplify when
restricted to the Hermitian case. For the summand involving the Clifford pseudospectrum, we don't yet know how hard it will be to calculate this in practice, except when computing numerically. The Clifford gap is not inherently easier or harder to compute than the quadratic gap for a fixed $\vec{\lambda}$.
However, it is easier to compute $\Lambda_{\epsilon}^{\mathrm{C}}\left(\mathbb{A}\right)$
for small $\epsilon$ than to compute $\Lambda_{\delta+\epsilon}^{\mathrm{Q}}\left(\mathbb{A}\right)$
where $\delta^2$ is the global minimum of the quadratic spectrum, 
\begin{equation*}
\delta^{2}=\min_{\|\mathbf{v}\|=1}\sum_{j}\Delta_{\mathbf{v}}^{2}(A_{j}).
\end{equation*}
Computing a crude approximation of the low-lying Clifford gap first
can potentially speed up computing a good version of the low-lying quadratic gap. 

The Hermitian case will be our focus, for the most part. There is
growing literature  \cite{Cerjan2024classifying,saito2025gapless_quasicrystals,steinacker2021QuantumGeometry}
where the Clifford spectrum or pseudospectrum is calculated, mainly in the Hermitian case. Here also we can visualize the various pseudospectral sets involving three observables, as the relevant domain of the Clifford, or quadratic, gap is now $\mathbb{R}^{3}$. 

The second summand also simplifies in the Hermitian case. With real
scalars and Hermitian observables, we have
\begin{equation*}
\left[A_{j}-\lambda_{j}I,A_{k}-\lambda_{k}I\right]_{{\textstyle *}}=\left[A_{j},A_{k}\right].
\end{equation*}
The second summand  here the constant $\left\Vert R_{\psi}\left(\mathbb{A}\right)\right\Vert $.
In numerical settings, this is a fast calculation, but in theoretical
settings there is just no reasonable formulas beyond the three-observable
setting. We can do better in many special cases.  The proof of the following is in the appendix.

\begin{lem} \label{lem:R_in_special_cases}
In the Hermitian setting, the following hold: 
\begin{itemize}
    \item For $d=2$, 
\begin{equation*}
\norm{R_{\psi}(\mathbb{A})} =
|\psi(i[A_{1},A_{2}])|.
\end{equation*}
\item For $d=3$, 
\begin{equation*}
\norm{R_{\psi}(\mathbb{A})} =\sqrt{\left|\psi\left(i\left[A_{1},A_{2}\right]\right)\right|^{2}+\left|\psi\left(i\left[A_{1},A_{3}\right]\right)\right|^{2}+\left|\psi\left(i\left[A_{2},A_{3}\right]\right)\right|^{2}}.
\end{equation*}
\item For $d\geq 4$ and $A_{j}A_{d}=A_{d}A_{j}$ for all $j<d$,
\begin{equation*}
\norm{R_{\psi}(\mathbb{A})}
=
\sqrt{\sum_{j=1}^{d-1}\left|\psi\left(i\left[A_{j},A_{d}\right]\right)\right|^{2}}.
\end{equation*}
\end{itemize}
\end{lem}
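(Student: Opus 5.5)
The plan is to compute $R^2$ for $R=R_\psi(\mathbb{A})=\sum_{j<k}\alpha_{jk}(-i\Gamma_j\Gamma_k)$, where $\alpha_{jk}=\psi(i[A_j,A_k])$ is real in the Hermitian setting. In each of the three cases we aim to show that $R^2$ is a nonnegative scalar multiple of $I_N$. Since $R$ is Hermitian, $\|R\|^2=\|R^2\|$, so the norm is then the square root of that scalar. Throughout we will use the Clifford relations $\Gamma_j^2=I$ and $\Gamma_j\Gamma_k=-\Gamma_k\Gamma_j$ for $j\neq k$.

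We first record the basic products. Write $B_{jk}=-i\Gamma_j\Gamma_k$. Each $B_{jk}$ is Hermitian and satisfies $B_{jk}^2=-\Gamma_j\Gamma_k\Gamma_j\Gamma_k=\Gamma_j^2\Gamma_k^2=I$. For two distinct pairs $\{j,k\}\neq\{l,m\}$ there are two possibilities. If the pairs share exactly one index, then $B_{jk}$ and $B_{lm}$ anticommute: for example, $\Gamma_j\Gamma_k\Gamma_j\Gamma_m=-\Gamma_k\Gamma_m$, while $\Gamma_j\Gamma_m\Gamma_j\Gamma_k=-\Gamma_m\Gamma_k=\Gamma_k\Gamma_m$. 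If the pairs are disjoint, then $B_{jk}$ and $B_{lm}$ commute, because moving $\Gamma_j\Gamma_k$ past $\Gamma_l\Gamma_m$ costs four sign changes. It follows that
\begin{equation*}
R^2=\sum_{j<k}\alpha_{jk}^2 I+\sum_{\{j,k\}\cap\{l,m\}=\emptyset}\alpha_{jk}\alpha_{lm}B_{jk}B_{lm},
\end{equation*}
because the anticommuting cross terms cancel in pairs.

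Now we treat the cases. When $d=2$ there is only one pair, so $R=\alpha_{12}B_{12}$ and $R^2=\alpha_{12}^2I$. When $d=3$, any two distinct pairs drawn from $\{1,2,3\}$ must intersect, so the disjoint sum is empty and $R^2=(\alpha_{12}^2+\alpha_{13}^2+\alpha_{23}^2)I$. For $d\geq4$ we read the hypothesis as saying that $A_1,\dots,A_{d-1}$ mutually commute, which matches the introductory corollary. Then $\alpha_{jk}=0$ for all $j<k\leq d-1$, so the only nonzero coefficients are the $\alpha_{jd}$. Any two pairs $\{j,d\}$ and $\{l,d\}$ share the index $d$, so the disjoint sum again vanishes and $R^2=\sum_{j<d}\alpha_{jd}^2 I$. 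In every case, taking square roots gives the stated formulas.

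The only real obstacle is the hypothesis in the third bullet. As literally written, "$A_jA_d=A_dA_j$" would make every $\alpha_{jd}$ vanish and would leave the $\alpha_{jk}$ with $j,k<d$ unconstrained. Under that reading the disjoint pairs survive and the formula fails in general. So the step needing care is to adopt the reading consistent with the Corollary: the first $d-1$ observables commute with one another, and $A_d$ need not commute with them. A cross-check is the $d=4$ formula given earlier, which with $\alpha_{12}=\alpha_{13}=\alpha_{23}=0$ reduces to $\pm\sqrt{\alpha_{14}^2+\alpha_{24}^2+\alpha_{34}^2}$, in agreement with the third bullet.
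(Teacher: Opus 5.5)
Your proof is correct and is essentially the paper's argument. The paper observes that the surviving matrices $-i\Gamma_j\Gamma_k$ (the Pauli matrices up to sign when $d=3$, and $g_j=-i\Gamma_j\Gamma_d$ for $j<d$ when $d\geq 4$) are Hermitian, square to $I$, and pairwise anticommute, so $R$ is a localizer of scalars with spectrum $\pm\sqrt{\sum\alpha_{jk}^2}$; this is your observation that bivectors sharing an index anticommute, so $R^2$ is scalar.

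Your single formula for $R^2$ works in any Clifford representation and shows why disjoint index pairs break the formula in general. Your reading of the third hypothesis, that $A_1,\dots,A_{d-1}$ mutually commute as in the Corollary, is also the one the paper's proof tacitly uses, since it keeps only the terms $-i\Gamma_j\Gamma_d$.
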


\section{Examples}

\begin{example}\label{ex:angular_momenta}
Angular momentum provides a fundamental example of incompatible quantum observables, with applications ranging from atomic spectroscopy to quantum spin systems. Given $(J_x, J_y, J_z)$ that satisfy $[J_i,J_j] = i \hbar \varepsilon_{ijk} J_k$. Note 
\begin{equation*}
(\Delta J_i)^2 \equiv \textrm{Var}(J_i) 
= 
\langle J_i^2\rangle - \langle J_i \rangle^2.
\end{equation*}

We need to be able to show $(\Delta J_x)^2 + (\Delta J_y)^2 + (\Delta J_z)^2 \ge j\hbar^2$. Ordinarily, one would show this inequality using
\begin{align}
    (\Delta J_x)^2 + (\Delta J_y)^2 + (\Delta J_z)^2 & = \langle J_x^2 + J_y^2 + J_z^2 \rangle - \left( \langle J_x \rangle^2 + \langle J_y \rangle^2 + \langle J_z \rangle^2 \right) \notag \\
    &= \langle \mathbf{J}^2 \rangle - |\langle \mathbf{J} \rangle |^2
\end{align}
and then noting
\begin{equation}
    \langle \mathbf{J}^2 \rangle = j(j+1)\hbar^2
\end{equation}
where a state in the total angular momentum basis has $|j,m\rangle$, where $j \in \mathbb{N}$ is the total angular momentum quantum number and $m \in \{-j,-j+1,\dots,j-1,j\}$, $J_z |j,m\rangle = m\hbar $. The lower bound for the sum of the variances is found using the maximum possible value of $|\langle \mathbf{J} \rangle |^2$, denote as $|\langle \mathbf{J} \rangle |^2_{\textrm{max}}$, which occurs when $m = \pm j$. Then,
\begin{equation}
    |\langle \mathbf{J} \rangle |^2_{\textrm{max}} = j^2\hbar^2.
\end{equation}
Thus,
\begin{equation}
    (\Delta J_x)^2 + (\Delta J_y)^2 + (\Delta J_z)^2 \ge j \hbar^2.
\end{equation}

Using Lemma~\ref{lem:R_in_special_cases}, we obtain
\begin{align}
(\Delta J_x)^2 + (\Delta J_y)^2 + (\Delta J_z)^2
& \ge \sqrt{\left\langle i\left[J_x,J_y\right]\right\rangle^{2} + \left\langle i\left[J_x,J_z\right]\right\rangle^{2} + \left\langle i\left[J_y,J_z\right] \right\rangle ^{2}} \notag \\
& = \sqrt{\left\langle i (i \hbar J_z)\right\rangle^{2} + \left\langle i(-i\hbar J_y)\right\rangle^{2} + \left\langle i(i \hbar J_x) \right\rangle ^{2}} \notag \\
& = \hbar \sqrt{\left\langle J_z\right\rangle^{2} + \left\langle J_y\right\rangle^{2} + \left\langle J_x \right\rangle ^{2}} \notag \\
& = \hbar |\langle \mathbf{J} \rangle |.
\end{align}
Thus, the bound this approach reproduces is sharp for spin-coherent states, for which $|\langle \mathbf{J} \rangle| = j\hbar$. 

Alternatively, the quadratic pseudospectral gap provides a stronger bound. Setting $\vec{\lambda} = \langle \mathbf{J} \rangle$, the quadratic composite operator becomes
\[
Q_{\langle \mathbf{J}\rangle}(\mathbf{J})
=j(j+1)\hbar^2I
-2 \langle \mathbf{J}\rangle \cdot\mathbf J
+|\langle \mathbf{J}\rangle|^2I.
\]
Since the largest eigenvalue of \(\langle \mathbf{J}\rangle \cdot\mathbf J\) is \(j\hbar|\langle \mathbf{J}\rangle|\), we obtain
\[
\bigl(\mu^Q_{\langle \mathbf{J} \rangle}(\mathbf J)\bigr)^2
=j\hbar^2+(j\hbar-|\langle \mathbf{J} \rangle|)^2.
\]
Thus, Theorem~\ref{thm:quad_PS_as_lower_bound} yields
\[
\sum_i(\Delta J_i)^2
\geq j\hbar^2+
(j\hbar-|\langle\mathbf J\rangle|)^2
\geq j\hbar^2,
\]
recovering the conventional angular-momentum uncertainty relation, with equality for spin-coherent states.
\end{example}


\begin{example}
This example goes into a subcase of Example~\autoref{ex:angular_momenta} in order to compare our quadratic gap lower bound on uncertainty, rather than the lower bound provided by the $R$-matrix. Let $\sigma_x,\sigma_y,$ and $\sigma_z$ be the Pauli matrices in $\mathbf{M}_2$, and let $\vec{\sigma}:=(\sigma_x,\sigma_y,\sigma_z)$. For a given state $\psi\in \mathcal{S}(\mathbf{M}_2),$  define the vector of expectation values \(\vec{\lambda} = (\psi(\sigma_x), \psi(\sigma_y), \psi(\sigma_z))\), which map to a point inside or on the unit Bloch sphere. As the Pauli matrices describe the three spin components of a spin-$\tfrac{1}{2}$ particle (up to a factor of $\tfrac{\hbar}{2}$), bounds on their combined variance quantify the unavoidable uncertainty in simultaneously characterizing all three components of a two-level quantum state, including mixed states. Then the quadratic composite operator is given by
 \begin{equation*}
 Q_{\vec{\lambda}}(\vec{\sigma})
 =
 Q_{\psi }(\vec{\sigma})
 =
 \sum _{j\in \{x,y,z\}}(\sigma _{j}-\psi (\sigma _{j})I_2)^{2}
 =(3+\|\vec{\lambda }\|{}^{2})I_2-2(\vec{\lambda }\cdot \vec{\sigma }),
 \end{equation*}
 and the eigenvalues of $Q_{\vec{\lambda}}(\vec{\sigma})$ are readily found to be 
 \begin{equation*}
 \lambda_{\pm} = 3 + \|\vec{\lambda}\|^2 \pm 2\|\vec{\lambda}\|.
 \end{equation*} 
 Factoring the lower eigenvalue $\lambda_{-}$ yields the quadratic gap:
 \begin{equation*}
\left( \mu^{\mathrm{Q}}_{\psi}(\mathbb{A})\right)^2
 =
 (\|\vec{\lambda }\|-1)^{2}+2.
 \end{equation*}
 Therefore, for any state $\psi\in \mathcal{S}(\mathbf{M}_2)$, Theorem~\ref{thm:Main_result} gives a state-dependent lower bound on the sum of variances:
 \begin{equation}
     \displaystyle \sum\limits_{j\in \{x,y,z\}} \Delta^2_\psi(\sigma_j) 
     \geq
     \left(\mu^{\mathrm{Q}}_\psi(\mathbb{A})\right)^2 \geq 2.
 \end{equation}
 In particular, when $\psi$ is pure $\|\vec{\lambda}\| = 1$, our lower bound corresponds to the known uncertainty of $\sqrt{2}$, and when $\psi$ is maximally mixed ($\|\vec{\lambda}\|= 0$), the quadratic gap $\left(\mu_{\psi}^{\mathrm{Q}}(\vec{\sigma})\right)^2$ is $3$, which saturates the known uncertainty of $\sqrt{3}.$
\end{example}

\begin{example}
 Let $ \ell^2(\mathbb{N})$ denote the collection of square-summable functions $f:\mathbb{N}\to \mathbb{C}$ where $\mathbb{N}$ denotes the set of nonnegative integers $\{0,1,2,\dots\}$, and let $\{\delta_n:n\in \mathbb{N}\}$ denote the orthonormal basis for $\ell^2(\mathbb{N})$ defined for each $n\in \mathbb{N}$ by 
    \begin{equation*}
    \delta_n(m)
    =
    \begin{cases}
        1 &\mbox n=m \\
        0 &\mbox {otherwise}
    \end{cases}.
    \end{equation*}
    Define the unilateral shift $S:\ell^2(\mathbb{N})\to \ell^2(\mathbb{N})$ by $S(\delta_n) = \delta_{n+1}$ for all $n\in \mathbb{N}$. The unilateral shift also has a natural interpretation as a unidirectional hopping operator on a semi-infinite one-dimensional lattice, where the basis states $\delta_n$ represent localized lattice sites. Its adjoint describes hopping in the opposite direction, and the special role of the boundary at $n=0$ makes this a simple setting for examining how spatial boundaries affect operator uncertainty. For $n\geq 1$, one can compute that the adjoint $S^*$ of $S$ satisfies
    \begin{equation*}
    S^*(\delta_n )
    =
    \begin{cases}
    \delta_{n-1} &\mbox n\geq 1\\ 
    0 &\mbox n=0
    \end{cases}.
    \end{equation*}
    Then $C^*(S)$ is a representation of the Toeplitz algebra $\mathcal{T}_z$, the universal $C^*$-algebra generated by a non-unitary isometry, and Coburn's Theorem implies that $C^*(S)\cong \mathcal{T}_z$.


    We show here how well Theorem~\autoref{thm:Main_result} bounds uncertainty for the non-normal elements $S$ and $S^*$ in $C^*(S)$ at a pure state given by a vector $\xi=\sum_{n=0}^\infty \lambda_n\delta_n\in \ell^2(\mathbb{N})$. First, we calculate a general formula for the sum of variances of $S$ and $S^*$ with respect to $\xi$:
    \begin{align*}
    \Delta^2_{\xi}(S) +\Delta^2_\xi(S^*)
    &=
    \|S\xi\|^2 - |\langle S\xi,\xi\rangle|^2 
    + \|S^*\xi\|^2 - |\langle S^*\xi,\xi\rangle|^2\\
    &=
    1 - \left|\sum_{m=0}^\infty \lambda_{m+1}\overline{\lambda_m}\right|^2
    + (1-|\lambda_0|^2) - \left|\sum_{m=1}^\infty \lambda_{m-1}\overline{\lambda_m} \right|^2\\
     &=
    2 - |\lambda_0|^2 - 2\left|\sum_{m=0}^\infty \lambda_{m+1}\overline{\lambda_m}\right|^2.
    \end{align*}

\begin{enumerate}
    \item For $\xi=\delta_0$, we get precisely 
\begin{equation*}
\Delta^2_{\delta_0}(S) + \Delta^2_{\delta_0}(S^*)
=
2-|\lambda_0|^2
=
1.
\end{equation*}
Observe that the pure state arising from $\delta_0$ applied to the quadratic composite operator $Q_{\delta_0}(S,S^*) = S^*S + SS^* = I+SS^*$ yields
\begin{equation*}
\langle Q_{\delta_0}(S,S^*)\delta_0,\delta_0\rangle
=
\langle \delta_0,\delta_0\rangle
+
\langle SS^*\delta_0,\delta_0\rangle
=
\norm{\delta_0}^2
=
1,
\end{equation*}
which says that the quadratic gap is a sharp lower bound sharp for $S$ and $S^*$. On the other hand, because $\langle S\delta_0,\delta_0\rangle = \langle S^*\delta_0,\delta_0\rangle = 0$, the lower bound in Theorem~\ref{thm:Main_result} simplifies to 
\begin{equation*}
\Delta^2_{\delta_0}(S)+\Delta^2_{\delta_0}(S^*)\geq |\langle S^*S^*\delta_0,\delta_0\rangle - \langle SS\delta_0,\delta_0\rangle|.
\end{equation*}
In this case, the lower bound arising from the $R_{\delta_0}$-matrix is identically 0.
\item Now consider the pure state arising from the unit vector $\psi_{0,2}:=\frac{1}{\sqrt{2}}(\delta_0+i\delta_2)$. We again get that the sum of variances is
\begin{equation*}
\Delta^2_{\psi_{0,2}}(S) + \Delta^2_{\psi_{0,2}}(S^*)
=
2-|\lambda_0|^2
=
2-\left|\frac{1}{\sqrt{2}}\right|^2
=
\frac{3}{2}.
\end{equation*}
Note $\langle SS^*\delta_0,\delta_2\rangle = \langle SS^*\delta_2,\delta_0\rangle = 0,$ so
\begin{equation*}
\langle SS^*\psi_{0,2},\psi_{0,2}\rangle
=
\frac{1}{2}
\left(
\langle S^*\delta_0,S^*\delta_0\rangle 
+ \langle S^*\delta_2,S^*\delta_2 \rangle
\right)
=
\frac{1}{2}(0+\norm{\delta_1}^2) =\frac{1}{2},
\end{equation*}
while 
\begin{equation*}
\langle S^*S \psi_{0,2},\psi_{0,2}\rangle 
=
\langle \psi_{0,2},\psi_{0,2} \rangle
=
\norm{\psi_{0,2}}^2
=1.
\end{equation*}
Hence, the lower bound on $\Delta^2_{\psi_{0,2}}(S)+\Delta^2_{\psi_{0,2}}(S^*)$ obtained from the quadratic composite operator in Theorem~\ref{thm:Main_result} is
\begin{align*}
\Delta^2_{\psi_{0,2}}(S)+\Delta^2_{\psi_{0,2}}(S^*)
&\geq 
\langle Q_{\psi_{0,2}}(S,S^*)\psi_{0,2},\psi_{0,2}\rangle\\
&=
\langle S^*S \psi_{0,2},\psi_{0,2}\rangle + \langle SS^* \psi_{0,2},\psi_{0,2}\rangle\\
&=
\frac{3}{2}.
\end{align*}
Now we check to see what lower bound is provided by the $R$-matrix in Lemma~\ref{lem:R_in_special_cases}. Since $S^*S^*(\psi_{0,2}) = \frac{i}{\sqrt{2}}\delta_0,$ we have
\begin{equation*}
\langle S^*S^* \psi_{0,2},\psi_{0,2}\rangle
=
\left\langle \frac{i}{\sqrt{2}} \delta_0,\frac{1}{\sqrt{2}}(\delta_0+i\delta_2 )\right\rangle
=
\displaystyle \frac{i}{2},
\end{equation*}
and since $SS(\psi_{0,2}) = \frac{1}{\sqrt{2}}(\delta_2+i\delta_4)$, we have
\begin{equation*}
    \langle SS\psi_{0,2},\psi_{0,2}\rangle
   =
\left\langle \frac{1}{\sqrt{2}}(\delta_2+i\delta_4),\frac{1}{\sqrt{2}}(\delta_0+i\delta_2)\right\rangle 
=
\frac{1}{2} \langle \delta_2,i\delta_2\rangle
=
-\frac{i}{2}.
\end{equation*}
Hence, the lower bound in Corollary~\autoref{thm:Main_result} obtained from the $R_{\psi_{0,2}}$-matrix is
    \begin{equation*}
    \Delta^2_{\psi_{0,2}}(S)+\Delta^2_{\psi_{0,2}}(S^*)
\geq 
    \left|    \langle S^*S^*(\psi_{0,2}),\psi_{0,2}\rangle - \langle SS(\psi_{0,2}),\psi_{0,2}\rangle\right|
    =
    \left| \frac{i}{2} - \left(-\frac{i}{2}\right)\right|
    =
    1.
    \end{equation*}

\end{enumerate}
\end{example}

\begin{example}
Here we look at the gamma matrices themselves as observables, specifically
for the case $d=5$. Collections of mutually anticommuting gamma matrices also arise naturally in Dirac-type Hamiltonians used to describe relativistic particles and topological band structures. In particular, five such matrices provide a representation suitable for four-band Hamiltonians with five independent coefficients, making this example relevant to the algebraic structure underlying higher-dimensional topological insulators. We find, for real $\vec{\lambda}$, that
\begin{align*}
Q_{\vec{\lambda}}(\Gamma_{1},\dots,\Gamma_{5}) & =\sum\left(\Gamma_{j}-\lambda_{j}I\right)^{2}\\
 & =\sum\left(1+\lambda_{j}^{2}\right)I-2\lambda_{j}\Gamma_{j}.
\end{align*}
We know \cite{cerjan2025multivariable_PS} that this has rotation symmetry, and we now see that the quadratic gap has minimum $4$ that occurs on the unit
sphere. Lets focus on $\vec{\mu}=(0,0,0,0,1)$. If we select our matrices
concretely as
\begin{equation}
\begin{aligned}\Gamma_{1} & =\sigma_{x}\otimes\sigma_{x}, & \Gamma_{2} & =\sigma_{x}\otimes\sigma_{y}, & \Gamma_{3} & =\sigma_{x}\otimes\sigma_{z},\\
\Gamma_{4} & =\sigma_{y}\otimes I_{2}, & \Gamma_{5} & =\sigma_{z}\otimes I_{2}.
\end{aligned}
\end{equation}
then we find 
\begin{equation*}
Q_{\vec{\mu}}(\Gamma_{1},\dots,\Gamma_{5})=\left[\begin{array}{cc}
4 & 0\\
0 & 8
\end{array}\right]\otimes I.
\end{equation*}
As one example, the unit vector 
\begin{equation*}
\mathbf{v}=\left[\begin{array}{c}
1\\
0
\end{array}\right]\otimes\left[\begin{array}{c}
1\\
0
\end{array}\right]
\end{equation*}
will achieve the minimum
\begin{equation*}
\sum\Delta_{\mathbf{v}}^{2}\Gamma_{i}=4.
\end{equation*}
As to the lower-bound in our main theorem, we know from \cite{Cerjan24}  that the Clifford gap is zero on the unit sphere. All the expectations of all the
commutators are zero except 
\begin{equation*}
\left\langle i\left[\Gamma_{1},\Gamma_{2}\right]\right\rangle _{\mathbf{v}}=\left\langle i\left[\Gamma_{3},\Gamma_{4}\right]\right\rangle _{\mathbf{v}}=-2
\end{equation*}
so 
\begin{equation*}
\left\Vert R_{\mathbf{v}}(\Gamma_{1},\dots,\Gamma_{5})\right\Vert =\sqrt{8}.
\end{equation*}
Thus the main theorem leads to a lower-bound of $\sqrt{8}$ that so
this bound is not saturated in this example.
\end{example}

\section{Computational examples}

Here we look at some examples derived with the assistance of either
a computer algebra system or numerical methods. When finding maximally
coherent states or Clifford spectrum, it is usually a bad idea to
work with exact arithmetics. Only when working with $n$-by-$n$ matrices
for $n\ll10$ will hand calculations suffice, and if a computer algebra
system (CAS) is employed, $n\approx50$ can be feasible \cite{schreivogl2024fuzzy}
if the matrices are sparse enough. See especially the matrices in
\cite{schreivogl2024fuzzy} regarding the quantum alphabet to see
how such sized matrices can be dealt with, carefully, using the computer
to track the formulas.

In our first example, we will one observable non-Hermitian and the
one other observable will be Hermitian. To get down to three real
dimensions, from $\mathbb{C}^{2}$ to $\mathbb{C}\times\mathbb{R}$,
we will use the following, simple lemma whose proof we omit.

\begin{lem}
If $A_{j}^{*}=A_{j}$ for $1\leq j\leq d$ and $\mathbb{A}=(A_{1},\dots,A_{d})$
in $\mathcal{A}$ and if $\vec{\lambda}\in\mathbb{C}^{d}$ minimizes
$\mu_{\vec{\lambda}}^{\mathrm{C}}(\mathbb{A})$ then $\lambda_{j}\in\mathbb{R}$.
\end{lem}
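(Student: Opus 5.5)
The plan is to fix a real part and compare the Clifford gap at $\vec{\lambda}$ with the gap at its real part, aiming for a strict increase whenever $\vec{\lambda}\notin\mathbb{R}^{d}$. Write $\vec{\lambda}=\vec{x}+i\vec{y}$ with $\vec{x},\vec{y}\in\mathbb{R}^{d}$, and assume $\vec{y}\neq\vec{0}$. Set
\begin{equation*}
H:=L_{\vec{x}}(\mathbb{A}),\qquad Y:=I\otimes\sum_{j}y_{j}\Gamma_{j}.
\end{equation*}
Since the $A_{j}$ are Hermitian and $\vec{x}$ is real, $H$ is Hermitian. The Clifford relations give $Y^{*}=Y$ and $Y^{2}=|\vec{y}|^{2}I$, exactly as in the commuting example after the definition of the Clifford spectrum. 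Moreover $L_{\vec{\lambda}}(\mathbb{A})=H-iY$.

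A short computation, parallel to Lemma~\ref{lem:L-Q-R}, then gives
\begin{equation*}
L_{\vec{\lambda}}(\mathbb{A})^{*}L_{\vec{\lambda}}(\mathbb{A})=H^{2}+|\vec{y}|^{2}I+i[Y,H].
\end{equation*}
Two observations follow.
\begin{enumerate}
\item The adjoint is again a localizer: $L_{\vec{\lambda}}(\mathbb{A})^{*}=L_{\overline{\vec{\lambda}}}(\mathbb{A})=H+iY$.
\item For every state $\omega$ on $\mathcal{A}\otimes\mathbf{M}_{N}$, the expectation of $L_{\vec{x}+it\vec{y}}(\mathbb{A})^{*}L_{\vec{x}+it\vec{y}}(\mathbb{A})$ is a convex quadratic in $t$:
\begin{equation*}
q_{\omega}(t)=\omega(H^{2})+t\,\omega(i[Y,H])+t^{2}|\vec{y}|^{2}.
\end{equation*}
\end{enumerate}
By Lemma~\ref{lem:bounds_on_state_values}, $g(t):=\bigl(\mu^{\mathrm{C}}_{\vec{x}+it\vec{y}}(\mathbb{A})\bigr)^{2}=\min_{\omega}q_{\omega}(t)$. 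The goal is to show $g(1)>g(0)$, which contradicts minimality of $\vec{\lambda}$ and forces $\vec{y}=\vec{0}$.

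The key algebraic step is to neutralize the linear term $\omega(i[Y,H])$. I would pass to the Hermitian dilation
\begin{equation*}
D_{t}=H\otimes\sigma_{x}+tY\otimes\sigma_{y}.
\end{equation*}
Its square is
\begin{equation*}
D_{t}^{2}=(H^{2}+t^{2}|\vec{y}|^{2})\otimes I_{2}+t\,i[H,Y]\otimes\sigma_{z},
\end{equation*}
and $\mu(D_{t})^{2}$ controls both $s_{\min}(L_{t})$ and $s_{\min}(L_{t}^{*})$. Conjugating by $I\otimes\sigma_{x}$ sends $\sigma_{z}\mapsto-\sigma_{z}$ and fixes the first summand. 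Hence $\mu(D_{t})=\mu(D_{-t})$, and the spectrum of $D_{t}^{2}$ is the union of the spectra of $L_{t}^{*}L_{t}$ and $L_{t}L_{t}^{*}$.

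In finite dimensions these last two have the same minimum. This is the only place finite dimension enters, through $s_{\min}(X^{*})=s_{\min}(X)$; otherwise I would state the lemma for the conorm of the dilation.

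The hoped-for inequality is
\begin{equation*}
\mu(D_{1})^{2}\ \ge\ \mu(D_{0})^{2}+|\vec{y}|^{2},
\end{equation*}
obtained by taking a state $\omega$ attaining $\mu(D_{1})^{2}$, reflecting it by $I\otimes\sigma_{x}$ to a state $\omega'$, and averaging $\omega$ and $\omega'$ so the $\sigma_{z}$ term cancels. That would give $g(1)\ge g(0)+|\vec{y}|^{2}>g(0)$.

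The main obstacle is exactly this averaging step. The averaged state kills the cross term, but it evaluates $D_{1}^{2}$ at $(q_{\omega}(1)+q_{\omega}(-1))/2$. Bounding that below by $\mu(D_{0})^{2}+|\vec{y}|^{2}$ is fine. The trouble is that comparing it back with $\mu(D_{1})^{2}=q_{\omega}(1)$ requires $q_{\omega}(-1)\le q_{\omega}(1)$, which holds only for a well-chosen minimizer.

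To handle this I would try first to exploit the symmetry of the minimizing eigenspace: since $\mu(D_{1})=\mu(D_{-1})$ and $\sigma_{x}$ intertwines them, I would try to pick $\omega$ as a $\sigma_{x}$-invariant spectral state, for instance by choosing it on a joint eigenvector of $D_{1}^{2}$ and the reflection. Failing that, I would argue directly that $\vec{\lambda}\mapsto\mu^{\mathrm{C}}_{\vec{\lambda}}$ restricted to the line $t\mapsto\vec{x}+it\vec{y}$ is even, and that its square is a minimum of convex quadratics with common leading coefficient $|\vec{y}|^{2}$. I would then show by a min–max comparison of $\mu(D_{t})$ at $t=0$ and $t=\pm1$ that $t=0$ is a global minimum on the line, which already yields the statement.
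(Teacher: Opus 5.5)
\textbf{The step you flag cannot be repaired.} Your algebra is correct: the identity $L_{\vec\lambda}^*L_{\vec\lambda}=H^2+|\vec y|^2I+i[Y,H]$, the dilation $D_t$, and the evenness of $g$. But no choice of state rescues the inequality $\mu(D_1)^2\ge\mu(D_0)^2+|\vec y|^2$, and your fallback that $t=0$ minimizes $g$ on the line fails too. The reason is that the statement is false as written for every $d\ge 2$; the paper states it without proof.

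Take $\mathcal{A}=\mathbb{C}$, $d=2$, $\Gamma_1=\sigma_x$, $\Gamma_2=\sigma_y$, $A_1=A_2=0$ and $\vec\lambda=(1,i)$. Then $L_{\vec\lambda}(\mathbb{A})=-(\Gamma_1+i\Gamma_2)$, and $(\Gamma_1+i\Gamma_2)^2=\Gamma_1^2-\Gamma_2^2+i(\Gamma_1\Gamma_2+\Gamma_2\Gamma_1)=0$. So $L_{\vec\lambda}(\mathbb{A})$ is singular and $\mu^{\mathrm{C}}_{\vec\lambda}(\mathbb{A})=0$ is the global minimum, yet $\lambda_2=i$. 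On your line, with $\vec x=(1,0)$ and $\vec y=(0,1)$,
\begin{equation*}
L_{\vec x+it\vec y}(\mathbb{A})^{*}L_{\vec x+it\vec y}(\mathbb{A})=(1+t^{2})I+2t\,(i\Gamma_1\Gamma_2).
\end{equation*}
Since $i\Gamma_1\Gamma_2$ is a Hermitian unitary with eigenvalues $\pm1$, this gives $g(t)=(1-|t|)^2$. This $g$ is even and is a minimum of convex quadratics with leading coefficient $|\vec y|^2$, exactly the structure you derived. Yet $t=0$ is a strict local maximum, and $g(1)=0<g(0)=1$.

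Your averaging fails for a structural reason. Here $D_1^2+D_{-1}^2=2(H^2+|\vec y|^2)\otimes I_2=4I$, so any state with $\omega(D_1^2)=0$ has $\omega(D_{-1}^2)=4$. The reflection by $I\otimes\sigma_x$ carries the kernel of $D_1^2$ onto its top eigenspace, and the averaged bound reads $2\ge 2$, which says nothing about $g(1)$.

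This is not a degenerate artifact. For $d=2$ and arbitrary Hermitian matrices, $L_{\vec\lambda}(\mathbb{A})$ has off-diagonal blocks $X-(\lambda_1+i\lambda_2)I$ and $X^*-(\lambda_1-i\lambda_2)I$, where $X=A_1+iA_2$. For complex $\vec\lambda$ these two shifts are independent. Hence $\mu^{\mathrm{C}}$ vanishes on each complex line $\lambda_1+i\lambda_2=z$ with $z\in\sigma(X)$, and each such line contains only one real point.

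\textbf{What is true is the quadratic version.} This is also what the worked example after the lemma actually uses: there only $A$ is Hermitian, and it is $Q_{(x,y,z)}(A,B)$ that is minimized. For this version your strategy works with no obstacle. For Hermitian $A_j$ and $\lambda_j=x_j+iy_j$ one has $(A_j-\lambda_jI)^*(A_j-\lambda_jI)=(A_j-x_jI)^2+y_j^2I$. Hence
\begin{equation*}
Q_{\vec x+i\vec y}(\mathbb{A})=Q_{\vec x}(\mathbb{A})+|\vec y|^{2}I,\qquad\bigl(\mu^{\mathrm{Q}}_{\vec x+i\vec y}(\mathbb{A})\bigr)^{2}=\bigl(\mu^{\mathrm{Q}}_{\vec x}(\mathbb{A})\bigr)^{2}+|\vec y|^{2},
\end{equation*}
which is your target inequality, holding with equality. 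It even works one coordinate at a time, requiring only the relevant $A_j$ to be Hermitian.

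In the Clifford setting, the cross term $i[Y,H]$ in your identity is what spoils this. It is nonzero even when every $A_j$ is a scalar $a_jI$. In that case $L_{\vec\lambda}(\mathbb{A})^2=\sum_j(a_j-\lambda_j)^2I$, so the complex Clifford spectrum is the whole complex cone $\sum_j(\lambda_j-a_j)^2=0$, not the single point $\vec a$.
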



\begin{example}
Consider 
\begin{equation*}
A=\left[\begin{array}{cc}
0 & 1\\
1 & 0
\end{array}\right],\ B=\left[\begin{array}{cc}
1 & 0\\
1 & 1
\end{array}\right].
\end{equation*}
Two-dimensional matrices provide the simplest setting for describing quantum two-level systems, while non-Hermitian operators naturally arise in effective descriptions of open systems and systems with asymmetric couplings. This example illustrates how the quadratic pseudospectrum can be used to determine optimal uncertainty bounds even when the operators are not restricted to conventional Hermitian observables. We can calculate the minimum of
$V := \Delta_{\mathbf{v}}^{2}A+\Delta_{\mathbf{v}}^{2}B$ 
as $\mathbf{v}$ varies over the Bloch sphere.  This is routine, with the vectors
\begin{equation*}
\mathbf{v}=\frac{1}{\sqrt{5}}\left[\begin{array}{c}
\sqrt{2}\\
\pm\sqrt{3}
\end{array}\right].
\end{equation*}
being the two states achieve the minimum of $V=\tfrac{1}{5}$.  The calculation via the quadratic pseudospectrum is a bit lengthy so we used a computer algebra system here.  See the auxiliary file 
\texttt{localizer\_state\_one\_non\_normal.nb} \cite{Loring_Uncertainty_code_2026} for details.
Since $A$ is Hermitian, we need only
use $\vec{\lambda}=(x,y+iz)$ in $\mathbb{R}\times\mathbb{C}$. We
find
\begin{equation*}
Q_{(x,y,z)}(A,B)=\left[\begin{array}{cc}
x^{2}+(y-2)y+z^{2}+3 & -2x-y-iz+1\\
-2x-y+iz+1 & x^{2}+(y-2)y+z^{2}+2
\end{array}\right].
\end{equation*}
The smaller of the two eigenvalues of this is
\begin{equation*}
\frac{1}{2}\left(2x^{2}+2y^{2}-4y+2 z^{2}+5-\sqrt{(4x+2y-2)^{2}+4z^{2}+1}\right).
\end{equation*}
The partial derivative of this eigenvalue in $z$ is
\begin{equation*}
\frac{2z\left(\sqrt{(4x+2y-2)^{2}+4z^{2}+1}-1\right)}{\sqrt{(4x+2y-2)^{2}+4z^{2}+1}}
\end{equation*}
so we must have $z=0$ to achieve a minimum. The function
\begin{equation*}
\mu\left(Q_{(x,y,0)}(A,B)\right)=\frac{1}{2}\left(2x^{2}+2y^{2}-4y+5-\sqrt{(4x+2y-2)^{2}+1}\right)
\end{equation*}
is minimized at two points, 
\begin{equation*}
(x,y)=\left(\pm\frac{2\sqrt{6}}{5},1\pm\frac{\sqrt{6}}{5},\right).
\end{equation*}
The eigenvectors that correspond to the smaller eigenvalue at this
$\vec{\lambda}$ are 
above.
\end{example}

Given a topological insulator, a free-fermion model on a lattice with open boundaries, we know that at the fermi level the Clifford gap will be rather large except near the the boundary, where it must close.
Of course here we must work numerically. As the emphasis in this work is the mathematics, we will use the exact models used before in the study of pseudospectrum and not give details here on the models used.

\begin{example}
We look first at a Chern insulator, specifically the model used in
\cite[Sect.~9]{Loring15}. Chern insulators are a paradigmatic example of topological matter, in which a nontrivial bulk band topology gives rise to robust boundary states. By considering the Hamiltonian $H$ together with the two position operators, the corresponding uncertainty relations probe the simultaneous localization of a quantum state in energy and real space, particularly near the topological boundary. Here, we use position matrices $X$ and $Y$, with $-L/2\leq X,Y\leq L/2$
such that $L$ is the length and width of the system in terms of lattice
sites, and set $\kappa=0.02$ (as in \cite[Figs. 9.1-5]{Loring15})
so we are looking at the triple $(H,\kappa X,\kappa Y)$. We now
look at the $\vec{\lambda}=\vec{\lambda}(t)$ along a line from $\vec{\lambda}(0)=(0,0,0)$
to $\vec{\lambda}(1)=(0,L/2,0)$. We compute both gaps, $\mu_{\vec{\lambda}}^{Q}(H,\kappa X,\kappa Y)$
and $\mu_{\vec{\lambda}}^{C}(H,\kappa X,\kappa Y)$ as well as the
the lower bound
\begin{equation}
\sqrt{\mu_{\vec{\lambda}}^{C}(H,\eta X,\eta Y)+\sqrt{\left(\left\langle \left[H,\kappa X\right]\right\rangle _{\mathbf{v}}\right)^{2}+\left(\left\langle \left[H,\kappa Y\right]\right\rangle _{\mathbf{v}}\right)^{2}}}\label{eq:good_lower_bound}
\end{equation}
where $\mathbf{v}$ is selected as an eigenvector for the smallest
eigenvalue of $Q_{\vec{\lambda}}(H,\kappa X,\kappa Y)$. This data
is plotted in Figure~\ref{fig:Chern_30} Here we are pleasantly surprised
to find, with $L=40$, that this lower bound is nearly equal to $\mu_{\vec{\lambda}}^{Q}(H,\kappa X,\kappa Y)$
at near edge of the topological insulator. We also compute, for the
same $\mathbf{v}$,
\begin{equation*}
\sqrt{\Delta_{\mathbf{v}}^{2}H+\Delta_{\mathbf{v}}^{2}\kappa X+\Delta_{\mathbf{v}}^{2}\kappa Y}
\end{equation*}
and compare that to the value of the quadratic pseudospectrum. Again,
it is just near the edge where these are nearly equal. For large systems
the approximations at the edge get more exact, as shown in Figure~\ref{fig:Chern_40}. The near-saturation of this bound at the boundary demonstrates that the pseudospectral formulation can provide particularly sharp constraints on the energy and spatial localization of topological edge states.
\end{example}

\begin{figure}
\includegraphics[scale=0.45]{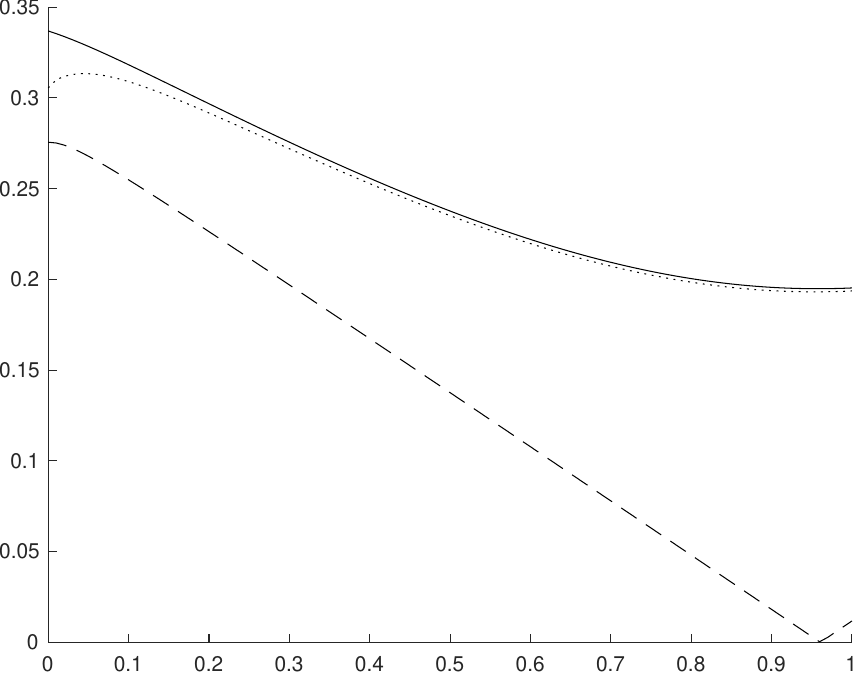}\hspace{0.5cm}\includegraphics[scale=0.45]{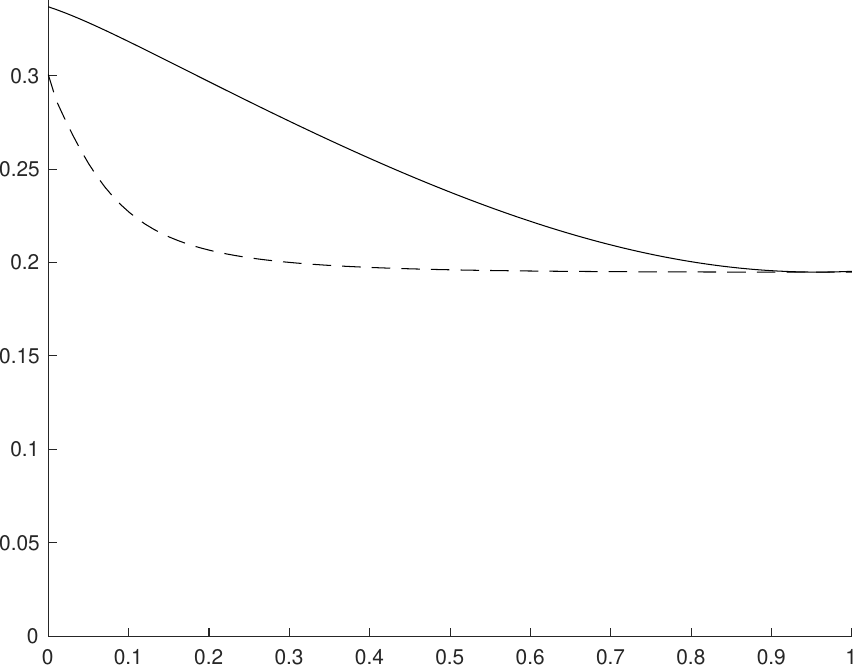}\caption{This figure shows data from a Chern insulator of size $L=30$. The left panel shows the quadratic gap as a solid line, the Clifford gap as a dashed line, and the lower-bound on the quadratic gap as in (\ref{eq:good_lower_bound})
using alternating dashes and dots. The right panel also shows the quadratic gap and compares this to the square-root of the sum of variances
(dashed line). \label{fig:Chern_30}}
\end{figure}

\begin{figure}
\includegraphics[scale=0.45]{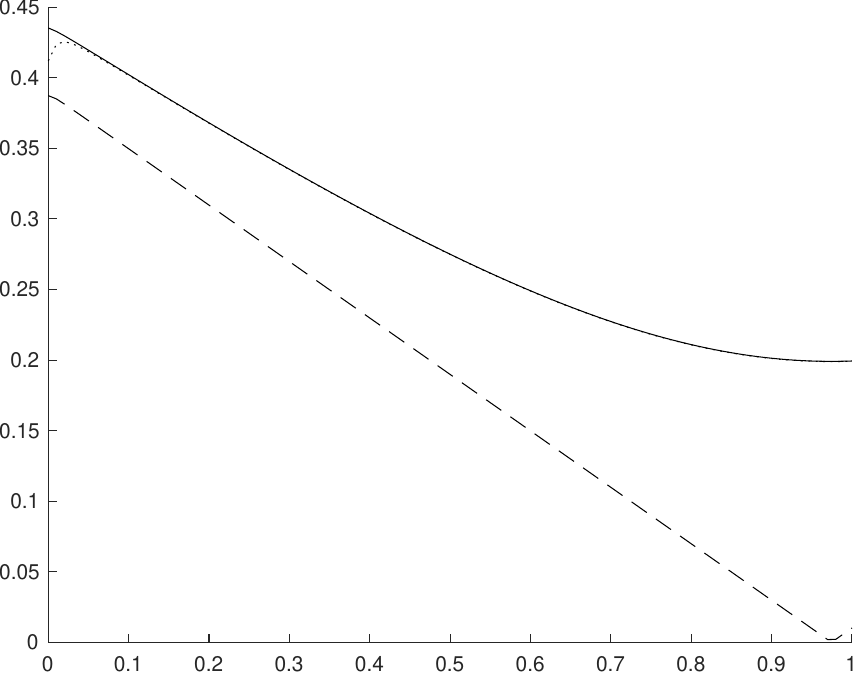}\hspace{0.5cm}\includegraphics[scale=0.45]{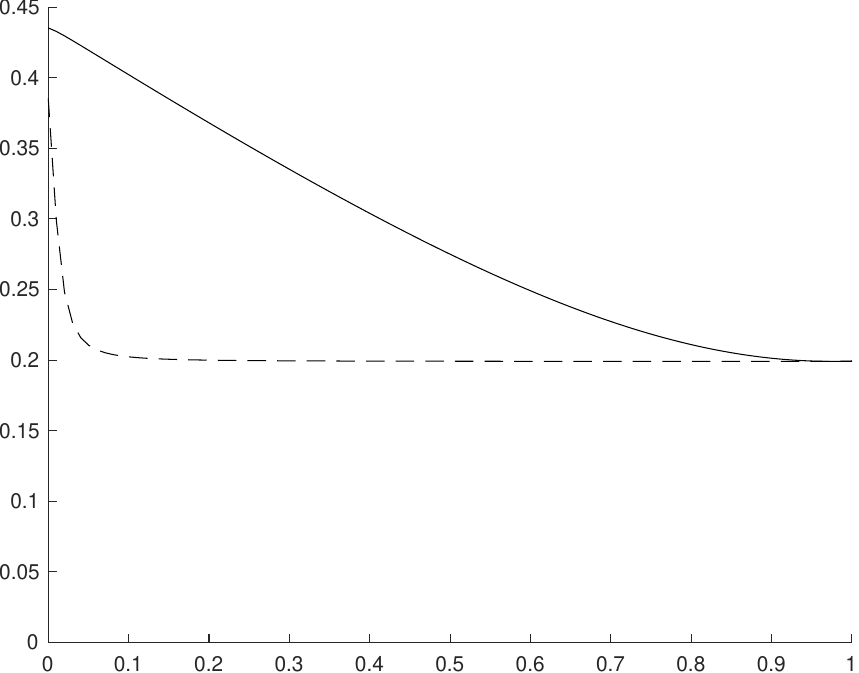}\caption{Same Chern insulator and plots as in Figure~\ref{fig:Chern_30}
but now with $L=40$. The left panel shows the quadratic gap as a
solid line, the Clifford gap as a dashed line, and the lower-bound
on the quadratic gap as in (\ref{eq:good_lower_bound}) using alternating
dashes and dots. The right panel also shows the quadratic gap and
compares this to the square-root of the sum of variances (dashed line).\label{fig:Chern_40}}
\end{figure}

\begin{example}
We repeat the above for a 4D Chern insulator. Recall that 4D Chern insulators extend the topology of 2D Chern insulators to systems characterized by a nontrivial second Chern number. Although four spatial dimensions are not directly accessible in conventional electronic materials, such higher-dimensional topological phenomena can be explored using synthetic dimensions and engineered electronic or photonic systems, making these models relevant to experimentally realizable topological physics. We use the same model
as in \cite[Sec.~7]{Cerjan24}. Thus we have $H$ along with position
operators $X_{1}$ through $X_{4}$. We use $\kappa=0.10$ as was
used in \cite[Sec.~7]{Cerjan24}. We again plot the two psdeudospectral
gaps, as well as
\begin{equation}
\sqrt{\mu_{\vec{\lambda}}^{C}(H,\eta X_{1},\dots,\eta X_{4})+\sqrt{\sum_{j=1}^{4}\left(\left\langle \left[H,\kappa X_{i}\right]\right\rangle _{\mathbf{v}}\right)^{2}}}\label{eq:4D_lower_bound}
\end{equation}
and 
\begin{equation*}
\sqrt{\Delta_{\mathbf{v}}^{2}H+\sum_{j=1}^{4}\Delta_{\mathbf{v}}^{2}\kappa X_{j}}.
\end{equation*}
We do this only for $L=5$ and $L=8$, so what we are seeing might
be an artifact of using such a small system size. Nevertheless, this example further illustrates the applicability of our uncertainty bounds to multiple spatial coordinates, as required for characterizing quantum localization in higher-dimensional topological systems.
\end{example}

\begin{figure}
\includegraphics[scale=0.45]{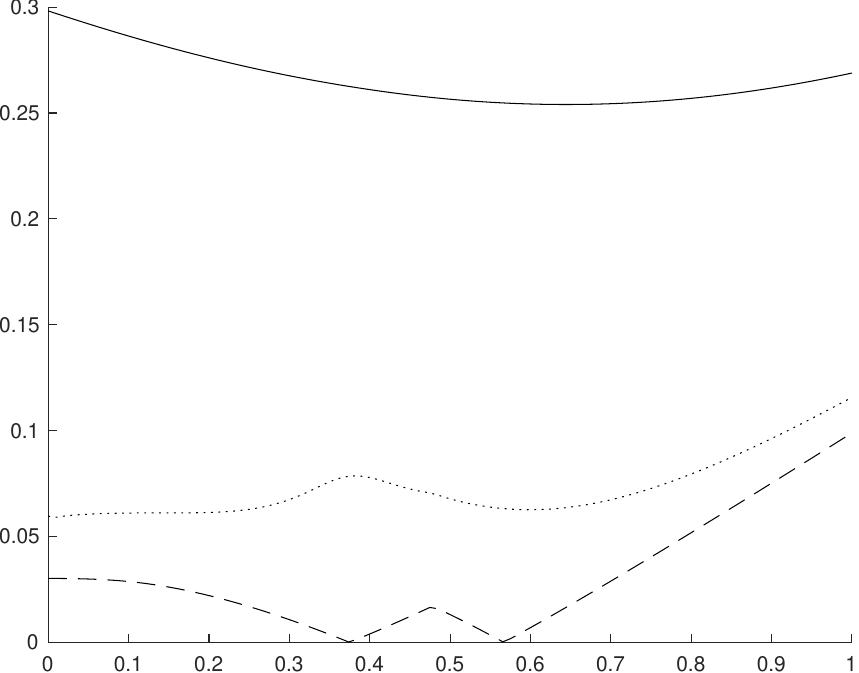}\hspace{0.5cm}\includegraphics[scale=0.45]{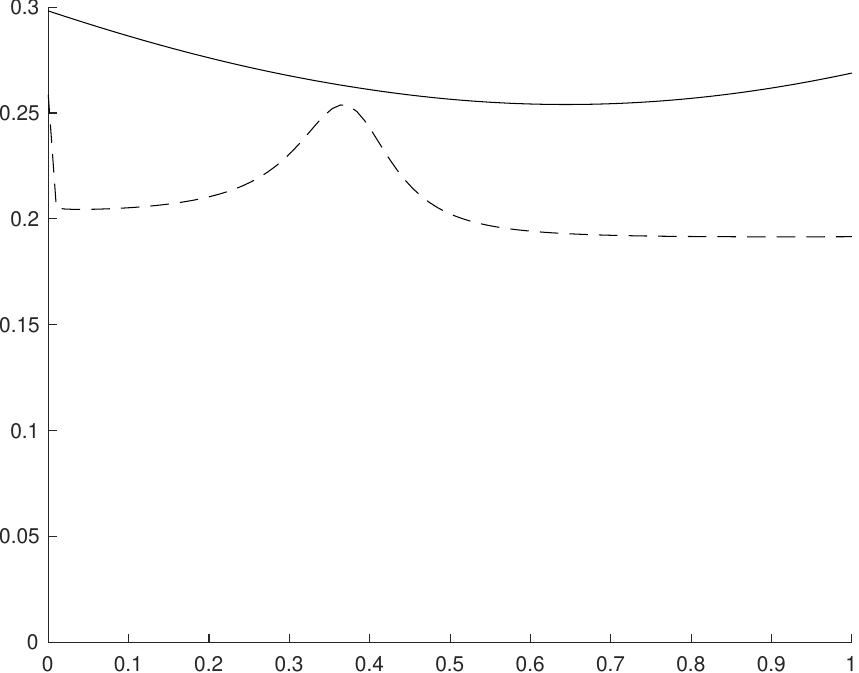}\caption{This figure shows data from a 4D, class A topological insulator of
size $L=5$. The left panel shows the quadratic gap as a solid line,
the Clifford gap as a dashed line, and the lower-bound on the quadratic
gap as in (\ref{eq:4D_lower_bound}) using alternating dashes and
dots. The right panel also shows the quadratic gap and compares this
to the square-root of the sum of variances (dashed line). \label{fig:4D_classA_L5}}
\end{figure}
\begin{figure}
\includegraphics[scale=0.45]{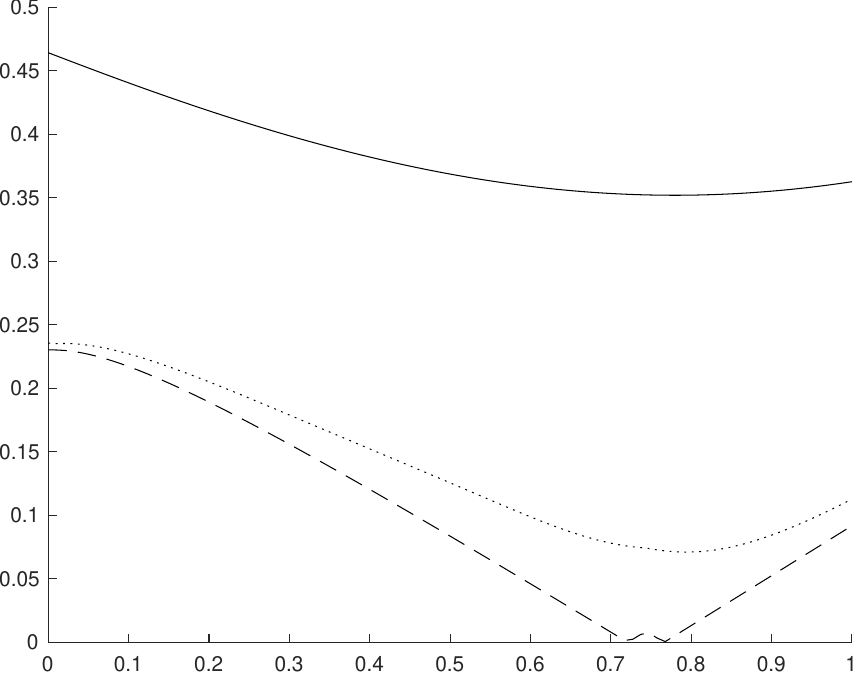}\hspace{0.5cm}\includegraphics[scale=0.45]{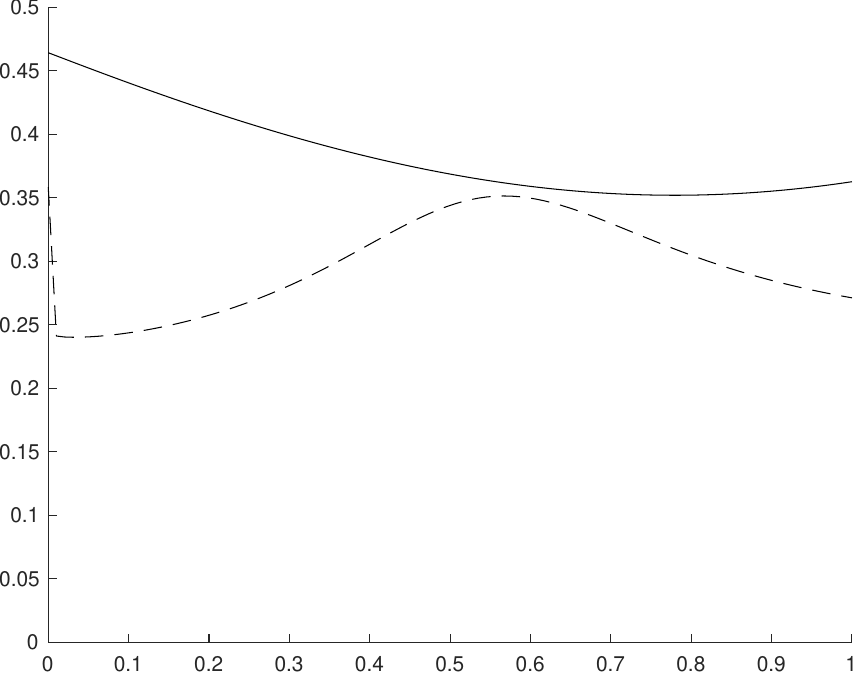}\caption{This is the same as shown in Figure \ref{fig:4D_classA_L5}, but now
$L=8$. \label{fig:4D_classA_L8}}
\end{figure}

\section*{Supplementary Material}
Matlab and Mathematica files that perform some of the calculations are posted at \cite{Loring_Uncertainty_code_2026}.

\section*{Acknowledgments}
T.A.L.\ acknowledges support from the National Science Foundation, Grant No.\ DMS-2349959. Research by
L.I.\  was sponsored by the Army Research Office and
was accomplished under Grant Number W911NF-25-1-
0052. 
A.C.\ acknowledges support from the Laboratory Directed Research and Development program at Sandia National Laboratories.
This work was performed in part at the Center for Integrated Nanotechnologies, an Office of Science User Facility operated for the U.S.\ Department of Energy (DOE) Office of Science.
Sandia National Laboratories is a multimission laboratory managed and operated by National Technology \& Engineering Solutions of Sandia, LLC, a wholly owned subsidiary of Honeywell International, Inc., for the U.S.\ DOE's National Nuclear Security Administration under Contract No.\ DE-NA-0003525.
The views expressed in the article do not necessarily represent the views of the U.S.\ DOE or the United States Government.

\bibliographystyle{plain}

\appendix
\section{Structure of the R matrix}

\begin{lem}\label{lem:R-matrix_Lemma1}
If $a_{jk}$ are arbitrary real scalars for $1\leq j<k\leq d$, let
\begin{equation}
R=\sum_{j<k}a_{jk}(-i\gamma_{j}\gamma_{k})
\end{equation}
where $\gamma_{1},\dots,\gamma_{d}$ form a Clifford representation.
The spectrum (as a set, ignoring multiplicity) of $R$ is independent
of the Clifford representation used.
\end{lem}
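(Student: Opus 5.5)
The plan is to reduce to irreducible representations and then use the classification of irreducible representations of the complex Clifford algebra. First, $R$ lies in the $*$-algebra generated by the $\gamma_j$; in particular it is Hermitian, since each $-i\gamma_j\gamma_k$ with $j<k$ is Hermitian. A Clifford representation $\gamma_1,\dots,\gamma_d\in\mathbf{M}_N$ consists of Hermitian unitaries, so it is a finite-dimensional $*$-representation of the universal complex Clifford algebra $\mathbb{C}\ell_d$. Such a representation is completely reducible by a unitary change of basis: the orthogonal complement of an invariant subspace is again invariant because the $\gamma_j$ are self-adjoint. So there is a unitary $U$ with
\begin{equation*}
U\gamma_j U^{*}=\gamma_j^{(1)}\oplus\cdots\oplus\gamma_j^{(m)}
\end{equation*}
for every $j$, where each tuple $(\gamma_1^{(l)},\dots,\gamma_d^{(l)})$ is an irreducible Clifford representation.

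Since $R$ is a polynomial in the $\gamma_j$, the same unitary conjugates $R$ to the block-diagonal matrix $R^{(1)}\oplus\cdots\oplus R^{(m)}$. Here $R^{(l)}$ is formed from the same real scalars $a_{jk}$ using the $l$-th irreducible representation. As a set, the spectrum of $R$ is therefore the union of the spectra of the $R^{(l)}$. It thus suffices to show that $R^{(l)}$ has the same spectrum for every irreducible representation.

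Next I invoke the standard classification of irreducible representations of $\mathbb{C}\ell_d$.
\begin{itemize}
\item For $d$ even, $\mathbb{C}\ell_d\cong\mathbf{M}_{2^{d/2}}(\mathbb{C})$ is simple. Hence there is a unique irreducible representation up to equivalence, and since the generators are unitary it is unique up to unitary equivalence. Unitarily equivalent tuples produce unitarily equivalent $R$, so the spectra agree.
\item For $d$ odd, the element $\omega=i^{(d-1)/2}\gamma_1\cdots\gamma_d$ is central and squares to $I$. An irreducible representation must therefore send $\omega$ to $\pm I$, and there are exactly two inequivalent irreducible representations, distinguished by this sign. They are related by $\gamma_j\mapsto-\gamma_j$: this flips the sign of $\omega$, because $d$ is odd, and it preserves the Clifford relations.
\end{itemize}

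The key observation in the odd case is that $R$ involves only the products $\gamma_j\gamma_k$, which are unchanged under $\gamma_j\mapsto-\gamma_j$. So $R$ is literally the same matrix in the sign-flipped representation. Combined with unitary equivalence within each class, all irreducible representations give the same spectrum for $R$, and the union over blocks is that single set. Multiplicities can differ, which is why the statement is only about the spectrum as a set.

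The main obstacle is justifying the classification cleanly in the $C^*$-setting, so I would be careful with two points. First, algebraic equivalence of $*$-representations of a finite-dimensional $C^*$-algebra upgrades to unitary equivalence; this can be done by polar decomposition of the intertwiner. Second, in the odd case, I need that every irreducible representation with $\omega=\pm I$ is equivalent to the standard one. For that I would use that $\gamma_1,\dots,\gamma_{d-1}$ already generate a copy of the simple algebra $\mathbb{C}\ell_{d-1}$, and that $\gamma_d$ is then forced to equal a fixed scalar multiple of $\gamma_1\cdots\gamma_{d-1}$ once the sign of $\omega$ is fixed.
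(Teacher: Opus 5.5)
Your proof is correct and follows the same route as the paper: reduce to irreducible representations (multiplicities being irrelevant), use uniqueness up to unitary equivalence when $d$ is even, and for odd $d$ observe that the two inequivalent irreducible representations are related by $\gamma_j\mapsto-\gamma_j$, which leaves every product $\gamma_j\gamma_k$, and hence $R$, unchanged. You add details the paper omits, namely the explicit block-diagonal decomposition and the central element $\omega$ that separates the two odd-$d$ classes, but the argument is the same.
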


\begin{proof}
Since we are ignoring multiplicity in the spectrum, it suffices to
examine the case where $\gamma_{1},\dots,\gamma_{d}$ generate
an irreducible representation of the complex Clifford algebra with
$d$ generators. If $d$ is even, there is only one irreducible representation up to unitary equivalence. If $d$ is odd, both $\{\gamma_{1},\dots,\gamma_{d}\}$ and $\{-\gamma_{1},\dots,-\gamma_{d}\}$ are distinct irreducible representations. The $R$-matrix formed in either representation is the same because the Clifford elements appear in pairs and $\gamma_j\gamma_k = (-\gamma_j)(-\gamma_k)$. Therefore, the spectrum is independent of choice of Clifford representation.
\end{proof}


\begin{lem}
Let $R$ be as in Lemma~\ref{lem:R-matrix_Lemma1}.
Then $R=0$ if and only if all the scalars $a_{jk}$ are zero.
\end{lem}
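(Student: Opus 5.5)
The approach is to use the trace to extract the coefficients $a_{jk}$ from $R$. The ``if'' direction is trivial: if every $a_{jk}=0$, then $R$ is an empty sum and equals $0$. For the converse, assume $R=0$ and fix a pair $p<q$. Multiply $R$ on the right by $\gamma_q\gamma_p$ and take the normalized trace $\tau=\frac{1}{N}\mathrm{Tr}$ on $\mathbf{M}_N$:
\begin{equation*}
0=\tau\left(R\gamma_q\gamma_p\right)=\sum_{j<k}a_{jk}\,\tau\left(-i\gamma_j\gamma_k\gamma_q\gamma_p\right).
\end{equation*}

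The key computation is $\tau(\gamma_j\gamma_k\gamma_q\gamma_p)$ for $j<k$ and $p<q$.

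\emph{Matching pair.} If $(j,k)=(p,q)$, the relations $\gamma_q^2=\gamma_p^2=I_N$ give $\gamma_p\gamma_q\gamma_q\gamma_p=I_N$, so the trace is $1$.

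\emph{Non-matching pair.} If $\{j,k\}\neq\{p,q\}$, the product reduces, using $\gamma_m^2=I_N$ and anticommutation, to $\pm$ a product $\gamma_{m_1}\gamma_{m_2}$ with $m_1\neq m_2$, or $\pm$ a product $\gamma_{m_1}\cdots\gamma_{m_4}$ of four distinct generators. I claim any such product $P=\gamma_{m_1}\cdots\gamma_{m_r}$ with $r=2$ or $r=4$ distinct indices has trace zero. Since $r$ is even, the generator $\gamma_{m_1}$ anticommutes with $P$: it commutes with its own factor and anticommutes with each of the other $r-1$ factors, an odd number. Hence
\begin{equation*}
\mathrm{Tr}(P)=\mathrm{Tr}(\gamma_{m_1}P\gamma_{m_1})=-\mathrm{Tr}(P),
\end{equation*}
so $\mathrm{Tr}(P)=0$. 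Because this argument needs no generator outside the index set of $P$, it works for every $d$, including odd $d$.

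Combining the two cases, the displayed sum collapses to $0=-i\,a_{pq}$, so $a_{pq}=0$. Since $p<q$ was arbitrary, every coefficient vanishes, which proves the ``only if'' direction.

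The main obstacle, and the only place requiring care, is the bookkeeping in the non-matching case. One must confirm that when $\{j,k\}$ and $\{p,q\}$ share exactly one index, the product reduces to a product of two distinct generators, and that when they are disjoint, it is a product of four distinct generators. In both cases the number of distinct generators is even and nonzero, so the anticommutation trick applies. Equivalently, the argument shows that the $-i\gamma_j\gamma_k$ are orthonormal with respect to the trace inner product $\langle X,Y\rangle=\tau(Y^*X)$, which gives linear independence and hence the lemma. This also covers Lemma~\ref{lem:R-matrix_when_zero} in the paper's notation.
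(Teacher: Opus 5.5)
Your proof is correct, but it takes a different route from the paper.

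\textbf{The paper's route.} It uses Lemma~\ref{lem:R-matrix_Lemma1} to pass to a representation in which $\gamma_1,\dots,\gamma_d$ generate a copy of the full Clifford algebra. It then quotes the standard fact that the ordered monomials $\gamma_{j_1}\cdots\gamma_{j_m}$ form a basis, so the $-i\gamma_j\gamma_k$ are linearly independent.

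\textbf{Your route.} You prove linear independence directly in whatever representation is given, by showing these matrices are orthonormal for the normalized trace. Your case analysis is right: a single shared index gives $\pm$ a product of two distinct generators, and disjoint pairs give four. The identity $\mathrm{Tr}(P)=\mathrm{Tr}(\gamma_{m_1}P\gamma_{m_1})=-\mathrm{Tr}(P)$ for an even product of distinct generators is also right.

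\textbf{Comparison.} The paper's argument is shorter once the basis theorem is granted. However, its reduction step is slightly delicate for odd $d$, where irreducible representations are not faithful (e.g.\ $\sigma_x\sigma_y\sigma_z=iI$). There one must use a reducible representation such as $\gamma\oplus(-\gamma)$, together with the fact that whether the Hermitian matrix $R$ vanishes can be read off from its spectrum as a set. Your argument needs neither Lemma~\ref{lem:R-matrix_Lemma1} nor the basis theorem.

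It also gives a quantitative bonus. Orthonormality yields $\tau(R^2)=\sum_{j<k}a_{jk}^2$, hence $\|R\|\geq\bigl(\sum_{j<k}a_{jk}^2\bigr)^{1/2}$ for every $d$. This in turn gives a simple lower bound on the $\|R_\psi\|$ term of Theorem~\ref{thm:Main_result}, beyond the special cases of Lemma~\ref{lem:R_in_special_cases}.
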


\begin{proof}
By Lemma~\autoref{lem:R-matrix_Lemma1}, we may assume that the $\gamma_{1},\dots,\gamma_{d}$ generate a copy of the full Clifford algebra. It is well known that a basis for this algebra is given by all products 
\begin{equation*}
\gamma_{j_{1}}\gamma_{j_{2}}\dots \gamma_{j_m}
\end{equation*}
for $0\leq m\leq d$ and $j_{r}<j_{s}$ for $1\leq r<s\leq m$. In particular, the set of matrices $\{(-i\gamma_{j}\gamma_{k})\mid 1\leq j<k\leq d\}$ is linearly independent and the result follows.
\end{proof}


\begin{lem}\label{lem:R-matrix_Lemma3}
 If $R$ is as in (\ref{eq:Def_of_R}), then the spectrum of $R$ is symmetric about $0$. In particular, $-\|R\|$ is always an eigenvalue of $R$.
\end{lem}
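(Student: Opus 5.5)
The plan is to find a unitary $U\in\mathbf{M}_N$ with $URU^{*}=-R$. Then $R$ and $-R$ are unitarily equivalent, so the spectrum of $R$ (with multiplicities) is invariant under $t\mapsto -t$. By Lemma~\ref{lem:R-matrix_Lemma1} we may choose whichever Clifford representation is most convenient, because only the spectrum as a set matters. In particular, we may enlarge the representation if that helps.

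\textbf{Case $d$ even.} Here the natural choice is the chirality element $\gamma_{d+1}:=c\,\gamma_1\gamma_2\cdots\gamma_d$, with the phase $c$ chosen so that $\gamma_{d+1}$ is Hermitian and unitary. For $d$ even, $\gamma_{d+1}$ anticommutes with every $\gamma_j$, and so it \emph{commutes} with every product $\gamma_j\gamma_k$. That is useless for our purpose. Instead, use a single generator, say $U=\gamma_1$. Then $\gamma_1(\gamma_j\gamma_k)\gamma_1$ equals $\gamma_j\gamma_k$ if $1\notin\{j,k\}$ and equals $-\gamma_j\gamma_k$ if $1\in\{j,k\}$. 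So conjugation by a single generator flips only some of the terms.

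\textbf{General construction.} Since no element of the algebra generated by the $\gamma_j$ can flip every product $\gamma_j\gamma_k$ simultaneously, I would pass to a larger representation. Adjoin an extra anticommuting generator on a doubled space, or use complex conjugation. The cleanest route is an antiunitary or transpose argument. Each $-i\gamma_j\gamma_k$ is Hermitian, so $R$ is Hermitian with real coefficients. Choose a representation in which every $\gamma_j$ is either real symmetric or purely imaginary antisymmetric; the standard tensor-product-of-Pauli construction has this property. Then $\overline{\gamma_j}=\epsilon_j\gamma_j$ with $\epsilon_j=\pm1$, and the task is to produce a unitary $C$ such that $C\overline{\gamma_j}C^{*}=\pm\gamma_j$ with a uniform sign. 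Conjugating gives $\overline{R}=\sum a_{jk}(i\,\overline{\gamma_j}\,\overline{\gamma_k})$. The spectrum of $\overline{R}$ equals that of $R$, since $R$ is Hermitian and its eigenvalues are real. So if $C\overline{\gamma_j}C^{*}=\eta\gamma_j$ for a fixed sign $\eta$, then $C\overline{R}C^{*}=-R$, and hence $\sigma(R)=\sigma(\overline{R})=\sigma(-R)$.

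\textbf{Existence of $C$ (main obstacle).} The required $C$ is a charge-conjugation matrix. It exists because $\{\overline{\gamma_j}\}$ and $\{\eta\gamma_j\}$ are both irreducible Clifford representations. For $d$ even there is a unique irreducible representation, so both choices of $\eta$ work. For $d$ odd, pick the $\eta$ for which the chirality (central) element matches; by the argument in Lemma~\ref{lem:R-matrix_Lemma1}, either representation yields the same $R$. The only bookkeeping is to check that $\overline{\gamma_j}$ again satisfies the Clifford relations, which follows by conjugating them.

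Once the spectrum is shown to be symmetric about $0$, the final claim follows. $R$ is Hermitian, so $\|R\|=\max|\lambda|$ over its eigenvalues. Either $\|R\|$ or $-\|R\|$ is an eigenvalue, and by symmetry $-\|R\|$ is one in both cases.
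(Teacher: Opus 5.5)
Your argument is correct. At its core it uses the same mechanism as the paper: an antilinear (conjugation/transpose) symmetry that sends each $\gamma_j$ to $\eta\gamma_j$ for one fixed sign $\eta$, and hence each $\gamma_j\gamma_k$ to $-\gamma_j\gamma_k$, combined with Hermiticity of $R$. Since $A^{\top}=\overline{A}$ for Hermitian $A$, and $Z$ is real with $Z^{2}=-I$, the paper's requirement $\gamma_j^{\sharp}=-\gamma_j$ for $A^{\sharp}=-ZA^{\top}Z$ is exactly your condition $C\overline{\gamma_j}C^{*}=\eta\gamma_j$ with $C=Z$ and $\eta=-1$. Likewise, $R^{\sharp}=-R$ is your $C\overline{R}C^{*}=-R$.

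The difference lies in how the symmetry is produced. The paper attempts an explicit inductive construction with $\eta=-1$ forced (``anti-self-dual'' generators). You instead get $C$ from uniqueness of irreducible Clifford modules, matching central characters when $d$ is odd, and you keep $\eta$ free, since only $\eta^{2}=1$ enters $R$. The explicit route would give concrete matrices, but your freedom in $\eta$ is actually needed. Take $d\equiv 1\pmod 4$, e.g.\ $d=5$, where the paper's recipe naturally yields the irreducible $4\times 4$ representation. In an irreducible representation, $\gamma_1\cdots\gamma_d$ is central with square $I$, hence equals $\pm I$. A dual or antiunitary map negating every $\gamma_j$ would negate this product while fixing real scalars, which is impossible. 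So in that case only $\eta=+1$ works, exactly as your central-character argument predicts, and your version covers every $d$ uniformly.

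In a final write-up, drop the exploratory ``Case $d$ even'' paragraph, the idea of enlarging the representation, and the unused real/imaginary choice of generators. For $d\geq 3$, no unitary in any representation can negate all $\gamma_j\gamma_k$: since $\gamma_1\gamma_3=(\gamma_1\gamma_2)(\gamma_2\gamma_3)$, it would be both negated and fixed. This is why the antilinear step is essential. For $d=2$, your claim that no element of the algebra works is false, since conjugation by $\gamma_1$ does.
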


\begin{proof}
We first show that there is always a Clifford representation so that all the generators are anti-self-dual.   Recall from
\cite{loring2012factorization}
that the dual operation is usually defined as 
\begin{equation*}
    A^\sharp = -Z A^\top Z
\end{equation*}
where $Z=I\otimes i\sigma_y$ and where
$A^\intercal$ denotes the transpose.  We will proceed by induction on $d$, the number of Clifford matrices.  All three of the Pauli spin matrices are anti-self-dual, so we are set for $d\leq 3$.  For larger $d$ we can start with any gamma matrices $g_1,\dots,g_{d-2}$. A common way to extend this to a list of length $d$ is to set 
\begin{equation*}
\gamma_j = g_j \otimes \sigma_x
\end{equation*}
for $j\leq d-2$ and
\begin{equation*}
\gamma_{d-1} = I \otimes \sigma_y,\quad 
\gamma_{d} = I \otimes \sigma_z
\end{equation*}
which leads to all these being anti-self-dual were the dual is defined by $Z = I \otimes i\sigma_y$.

With this symmetry in hand, we find
\begin{equation*}
\left(\gamma_{j}\gamma_{k}\right)^{\sharp}
=
\gamma_{k}^{\sharp}\gamma_{j}^{\sharp}
=
\gamma_{k}\gamma_{j}
=
-\gamma_{j}\gamma_{k}.
\end{equation*}
These relations imply that $R^{\sharp}=-R$.  Since $R$ is also Hermitian, this implies the symmetry in the spectrum of $R$.  See, for example, \cite{altland1997nonstandard}. 
\end{proof}

Finally, we now prove Lemma~\ref{lem:R_in_special_cases}.  The $n=2$  case is trivial, and the $n=3$ case follows from the fact that the three matrices
\begin{equation*}
    -i\sigma_x \sigma_y = \sigma_z,\ 
    -i\sigma_y \sigma_z = \sigma_x,\ 
    -i\sigma_x \sigma_z = -\sigma_y
\end{equation*}
are again Clifford matrices.  Recall we know
\begin{equation}
\label{eqn:localizer_of_scalars}
\sigma \left(L(\alpha_1,\dots,\alpha_d)\right) =
\left \{ \pm \sqrt{\alpha_1 ^2 + \cdots +\alpha_d } \right  \}.
\end{equation}
For larger $d$ the matrices $-i \gamma_j \gamma_k$ do not satisfy any Clifford relations.  Consider the smaller set of the $d-1$ matrices $g_j = -i \gamma_j \gamma_d$ for $j<d$.  These are Hermitian since for $j\neq k$  we have
\begin{equation*}
g_j^* = i  \gamma_d  \gamma_j = -i \gamma_j   \gamma_d =  g_j.
\end{equation*}
Also
\begin{equation*}
g_j^2 = - \gamma_j \gamma_d  \gamma_j \gamma_d=  \gamma_j^2 \gamma_d^2 = I
\end{equation*}
and
\begin{equation*}
g_j g_k = - \gamma_j \gamma_d \gamma_k \gamma_d 
=  \gamma_j \gamma_k 
= - \gamma_j \gamma_k
=- g_k g_j 
\end{equation*}
so these do form a Clifford representation.  So again in this special case we can use (\ref{eqn:localizer_of_scalars}).

\end{document}